\theoremstyle{plain}
\newtheorem{lemma}{Lemma}
\newtheorem{definition}{Definition}
\newtheorem{corollary}{Corollary}
\newtheorem{theorem}{Theorem}
\newtheorem{example}{Example}
\newtheorem{remark}{Remark}
\newcommand{\expv}{\mathbb{E}}
\newcommand{\program}{\mathcal{P}}
\lstdefinelanguage{affprob}
{
morekeywords={angel,demon, choice,prob(0.6), if, then, else, fi, while, do, od, 
true, false, and, or, skip, sample},
sensitive = false
}
\tikzstyle{ang}=[regular polygon, regular polygon sides = 3,draw,inner sep=0pt,minimum size=6mm, yshift = -0.75 mm]
\tikzstyle{dem}=[shape=diamond,draw,inner sep=0pt,minimum size=6mm]
\tikzstyle{ran}=[shape=circle,draw,inner sep=0pt,minimum size=5mm]
\tikzstyle{det}=[shape=rectangle,draw,inner sep=0pt,minimum size=5mm]
\tikzstyle{tran}=[draw,->,>=stealth, rounded corners]
\newcommand{\defineNote}[3][black!65!green]{\expandafter\def\csname #2\endcsname
##1{\stepcounter{fixcount}\fxwarning{\textcolor{#1}{\textbf{#3}: ##1}}}}
\DeclareMathAlphabet{\mathpzc}{OT1}{pzc}{m}{it}
\newcommand{\theoremlike}[2]{\par\medskip\penalty-250%
{{\bfseries\noindent
#2 \ref{#1}.}}\it}
\newcommand{\thmhelperpre}[2]{\theoremlike{#1}{#2}}
\newcommand{\thmhelperpost}{\par\medskip}
\newenvironment{reftheorem}[1]{\thmhelperpre{#1}{Theorem}}{\thmhelperpost}
\newenvironment{reflemma}[1]{\thmhelperpre{#1}{Lemma}}{\thmhelperpost}
\renewcommand{\vec}[1]{\mathbf{#1}}
\newcommand{\E}{\ensuremath{{\rm \mathbb E}}}
\renewcommand{\phi}{\varphi}
\newcommand{\exsup}{\triangleleft}
\newcommand{\eps}{\epsilon}
\newcommand{\pvars}{V}
\newcommand{\locs}{\mathit{L}}
\newcommand{\loc}{\ell}
\renewcommand{\E}{\mathbb{E}}
\newcommand{\Rset}{\mathbb{R}}
\newcommand{\Nset}{\mathbb{N}}
\newcommand{\Zset}{\mathbb{Z}}
\newcommand{\lin}{\mathit{in}}
\newcommand{\lout}{\mathit{out}}
\newcommand{\transitions}{\mapsto}
\newcommand{\probdist}{\mathit{Pr}}
\newcommand{\guards}{G}
\newcommand{\prob}{\mathit{Pr}}
\newcommand{\id}{\mathit{id}}
\newcommand{\probm}{\mathbb{P}}
\newcommand{\inv}{I}
\newcommand{\pinv}{PI}
\newcommand{\fail}[2]{\mathit{Fail}(#1)}
\newcommand{\APP}{{\sc App}}
\newcommand{\support}{\mathit{supp}}
\newcommand{\val}{\nu}
\newcommand{\vars}{\mathcal{V}}
\newcommand{\sat}[1]{\llbracket #1 \rrbracket}
\newcommand{\exsat}[1]{\sat{#1}^{\exsup}}
\newcommand{\up}{u}
\newcommand{\pCFG}{\mathcal{C}}
\newcommand{\sigmat}{\sigma_t}
\newcommand{\sigmaa}{\sigma_a}
\newcommand{\cfg}[2]{\vec{C}^{#1}_{#2}}
\newcommand{\lr}[2]{\loc^{#1}_{#2}}
\newcommand{\vr}[2]{\vec{x}^{#1}_{#2}}
\newcommand{\run}{\varrho}
\newcommand{\locinit}{\loc_{\mathit{init}}}
\newcommand{\vecinit}{\vec{x}_{\mathit{init}}}
\newcommand{\natfilt}{\mathcal{R}}
\newcommand{\lem}{\eta}
\newcommand{\preexp}[1]{\mathit{pre}_{#1}}
\newcommand{\Out}{\mathit{Out}}
\newcommand{\confset}{C}
\newcommand{\stime}{T}
\newcommand{\treach}[1]{\stime^{#1}}
\newcommand{\ttime}{\mathit{Term}}
\newcommand{\setmap}{\mathit{IC}}
\newcites{add}{Additional References}
\begin{document}
\toappear{} 

\title{Stochastic Invariants for Probabilistic Termination}
\authorinfo{Krishnendu Chatterjee}{IST Austria, 
	Klosterneuburg,
	Austria}{Krishnendu.Chatterjee@ist.ac.at}

\authorinfo{Petr Novotn\'{y}}{IST Austria, Klosterneuburg, 
Austria}{petr.novotny@ist.ac.at}

\authorinfo{\DJ{}or\dj{}e \v{Z}ikeli\'c}{University of 
	Cambridge, UK}{dz277@cam.ac.uk}

\maketitle

\category{F.3.1}{Logics and Meanings of Programs}{Specifying and Verifying and 
Reasoning about Programs}

\terms
Verification

\keywords
Probabilistic Programs, Termination, Martingales, Concentration

\begin{abstract}
Termination is one of the basic liveness properties, and we study
the termination problem for probabilistic programs with real-valued 
variables.
Previous works focused on the qualitative problem that asks 
whether an input program terminates with probability~1 (almost-sure termination).
A powerful approach for this qualitative problem is the notion of ranking 
supermartingales with respect to a given set of invariants.
The quantitative problem (probabilistic termination) asks for bounds on 
the termination probability, and this problem has not been addressed yet. 
A fundamental and conceptual drawback of the existing approaches to address 
probabilistic termination is that even though the supermartingales consider the 
probabilistic behaviour of the programs, the invariants are obtained completely 
ignoring the probabilistic aspect (i.e., the invariants are obtained 
considering all behaviours with no information about the probability).

In this work we address the probabilistic termination problem 
for linear-arithmetic probabilistic programs with nondeterminism.
We formally define the notion of {\em stochastic invariants}, which are 
constraints along with a probability bound that the constraints hold. 
We introduce a concept of {\em repulsing supermartingales}.
First, we show that repulsing supermartingales can be used to obtain bounds 
on the probability of the stochastic invariants. 
Second, we show the effectiveness of repulsing supermartingales in the following 
three ways:
(1)~With a combination of ranking and repulsing supermartingales we can compute
lower bounds on the probability of termination;
(2)~repulsing supermartingales provide witnesses for refutation of almost-sure 
termination;
and (3)~with a combination of ranking and repulsing supermartingales we can 
establish persistence properties of probabilistic programs.

Along with our conceptual contributions, we establish the following computational results:
First, the synthesis of a stochastic invariant which supports some ranking 
supermartingale and at the same time admits a repulsing supermartingale can 
be achieved via reduction to the existential first-order theory of reals, 
which generalizes existing results from the non-probabilistic setting. 
Second, given a program with ``strict invariants'' (e.g., obtained via abstract 
interpretation) and a stochastic invariant, we can check in polynomial time 
whether there exists a linear repulsing supermartingale w.r.t. the stochastic 
invariant (via reduction to LP).
We also present experimental evaluation of our approach on academic examples.
\end{abstract}

\section{Introduction}\label{sec:introduction}

\noindent{\em Probabilistic programs.}
There is a huge recent interest in the formal analysis of probabilistic programs, 
since they provide a rich framework to model a wide variety of applications ranging 
from randomized algorithms~\cite{RandBook,RandBook2}, to stochastic network 
protocols~\cite{BaierBook,prism}, robot 
planning~\cite{KGFP09,kaelbling1998planning},
or modelling problems in machine learning~\cite{GHNR14:prob-programming}, to 
name 
a few.
The extension of the classical imperative programs with \emph{random value generators}, 
that produce random values according to some desired probability distribution, gives
rise to probabilistic programs. 
The formal analysis of such programs, and probabilistic systems in general, gives 
rise to a wealth of research questions, which have been studied across diverse fields, 
such as probability theory and statistics~\cite{Durrett,Howard,Kemeny,Rabin63,PazBook}, 
formal methods~\cite{BaierBook,prism}, 
artificial intelligence~\cite{LearningSurvey,kaelbling1998planning}, 
and programming 
languages~\cite{SriramCAV,HolgerPOPL,SumitPLDI,EGK12,CFNH16:prob-termination}.

\smallskip\noindent{\em Termination problem.}
{\em Termination} is one of the most basic liveness properties for programs.
For non-probabilistic programs the proof for termination coincides with 
the construction of a \emph{ranking function}~\cite{rwfloyd1967programs}, 
and many different approaches exist for construction of ranking functions 
for non-probabilistic programs~\cite{DBLP:conf/cav/BradleyMS05,DBLP:conf/tacas/ColonS01,DBLP:conf/vmcai/PodelskiR04,DBLP:conf/pods/SohnG91}.
For probabilistic programs there are many natural extensions of the termination 
problem. 
The two most natural questions related to the probability of termination of an
input program are the \emph{qualitative} and \emph{quantitative} problems which are as 
follows: 
\begin{compactenum}
\item \emph{Qualitative problem: almost-sure termination.} 
The basic qualitative question is the \emph{almost-sure termination}
problem that asks whether the program terminates with probability~1~\cite{HolgerPOPL,BG05}.

\item \emph{Quantitative problem: probabilistic termination.} 
The natural generalization of the qualitative question is the quantitative
question of \emph{probabilistic termination} that asks for a lower bound 
on the probability of termination of the program.
\end{compactenum}
The above questions are the basic and fundamental questions for the static 
analysis of probabilistic programs.

\smallskip\noindent{\em Nondeterminism in probabilistic programs.}
The role of {\em nondeterminism} is also quite fundamental in probabilistic programs.
The nondeterminism is necessary in many cases such as for abstraction.
For efficient static analysis of large programs, it is infeasible to track all the variables.
Abstraction allows to ignore some variables, and for the sake of analysis  the
worst-case behaviour 
must be considered for them, which is modelled as nondeterminism.
Besides the modelling aspects, the presence of nondeterminism significantly 
changes 
the landscape of theoretical results, which we discuss below.

\smallskip\noindent{\em Previous results: almost-sure termination.} 
Given the importance of the termination problem for probabilistic programs,
the problem has been studied in great depth.
However, much of the previous research focused on the 
qualitative problem.
The details are as follows:
\begin{compactitem}
\item \emph{Discrete probabilistic choices.}
First~\cite{MM04,MM05} presented techniques for termination of probabilistic 
programs with nondeterminism, but restricted only to discrete probabilistic choices.
 
\item \emph{Infinite probabilistic choices without nondeterminism.}
The approach of~\cite{MM04,MM05} was extended in~\cite{SriramCAV} to \emph{ranking} 
martingales and supermartingales. 
The approach of~\cite{SriramCAV} presents a sound (but not complete) approach for 
almost-sure termination of infinite-state probabilistic programs (without nondeterminism) 
with integer and real-valued random variables drawn from distributions including 
uniform, Gaussian, and Poison.
The connection of termination of probabilistic programs without nondeterminism 
to \emph{Lyapunov ranking functions} was established in~\cite{BG05}.
For probabilistic programs with countable state space and without 
nondeterminism, the Lyapunov ranking functions provide a sound and complete method 
to prove termination in finite time, which implies almost-sure termination~\cite{BG05,Foster53}.
Another sound approach~\cite{DBLP:conf/sas/Monniaux01} for almost-sure termination 
is to explore the exponential decrease of probabilities upon bounded-termination 
through abstract interpretation~\cite{DBLP:conf/popl/CousotC77}.

\item \emph{Infinite probabilistic choices with nondeterminism.}
For probabilistic programs with nondeterminism the theoretical results change
significantly.
The Lyapunov ranking function method as well as the ranking martingale method 
are sound but not complete in the presence of nondeterminism~\cite{HolgerPOPL}.
Finally, for probabilistic programs with nondeterminism, a sound
and complete (for a well-defined class of probabilistic programs) 
characterization for almost-sure termination is obtained 
in~\cite{HolgerPOPL}, by generalizing the ranking supermartingale 
approach of~\cite{SriramCAV}.
The question of algorithmic synthesis of ranking supermartingales has also been 
considered, for probabilistic programs with linear arithmetic, and special 
classes of ranking supermartingales (such as linear and polynomial ranking 
supermartingales~\cite{CFNH16:prob-termination,CFG16:positivstellensatz-arxiv}).

\end{compactitem}
In all the existing approaches above for infinite-state probabilistic programs 
with non-determinism, 
the key technique for almost-sure termination is the notion of a ranking 
supermartingale (RSM). Intuitively, a ranking supermartingale is a function 
assigning numbers to program configurations (where each configuration consists 
of the current 
control location and current 
valuation of program variables) with the 
following property: in each reachable configuration, the expected value of the 
RSM in 
the next execution step is strictly smaller than its current value. Thus, RSMs 
form a probabilistic counterpart of classical ranking functions.

\smallskip\noindent{\em RSMs with respect to invariants.}
Since precisely characterizing the set of reachable configurations is 
infeasible in 
practice,
the previous works for almost-sure termination of infinite-state probabilistic 
programs 
consider the existence of ranking supermartingales (RSMs) with respect to invariants. 
An {\em invariant} is a set of constraints on the variables of the program, one 
for 
each program 
location, such that along {\em all} executions of the program, if a program 
location is 
visited, then the program variables must satisfy the constraints of the respective 
program location. Hence, each invariant represents an over-approximation of the 
set of reachable configurations.
The computational problem for almost-sure termination is to decide the existence of
a RSM for a probabilistic program w.r.t. an input invariant, i.e. a function 
assigning numbers to configurations such that for each configuration in the 
invariant, the 
expected value of the RSM in the next step is smaller than the current one.

\smallskip\noindent{\em RSMs and probabilistic termination.} In the 
\emph{probabilistic termination} problem we are interested in computing 
termination
probabilities when the program does not terminate almost-surely.
While reasoning about termination probabilities of probabilistic programs was 
considered before (at least on a theoretical level, see also Related Work 
section), approaches based on RSMs were not yet considered for this purpose.
A fundamental and conceptual problem here is that while RSMs take into account 
the probabilistic behaviour of the program, the invariants completely 
ignore the probabilistic aspect as they must hold along all executions
(i.e., the invariants are obtained considering all behaviours without any 
information about the probability).
Since all previous works on RSMs consider RSMs w.r.t. invariants, this implies 
a fundamental limitation of this tool to address probabilistic 
termination. 
We illustrate this with an example below.

\lstset{language=affprob}
\lstset{tabsize=3}
\newsavebox{\nonxxterm}
\begin{lrbox}{\nonxxterm}
\begin{lstlisting}[mathescape]
$x:=10$
while $x\geq 0$ do
	if $x \leq 100$ then $x:=x+$sample$(\mathrm{Uniform}[-2,1])$
	else $x:=x+$sample$(\mathrm{Uniform}[-1,2])$
    fi
od
\end{lstlisting}
\end{lrbox}
\begin{figure}[t]
\usebox{\nonxxterm}
\begin{tikzpicture}[x = 1.7cm]
\end{tikzpicture}
\caption{A probabilistic program modeling a generalization of an asymmetric 
one-dimensional random 
walk.}\label{fig:intro}
\end{figure}

\smallskip\noindent{\em Motivating example.}
Consider the probabilistic program shown in Figure~\ref{fig:intro}, which is an asymmetric one-dimensional
random walk. 
The random walk is denoted by value $x$. 
If $x$ is smaller than~100, then its value is incremented by a number uniformly chosen between $[-2,1]$,
otherwise the increment is uniform in $[-1,2]$.
In this random walk, $x$ can have any value above~0. But once the value reaches~100, with high probability
the value drifts away, and the program does not terminate.
In this example, there is no effective invariant, as $x$ can have any value. 
However, the assertion $x\leq 100$ is violated only with very small 
probability, and as long as $x\leq 100$ holds, the value of $x$ tends to 
decrease \emph{on average}.

\smallskip\noindent{\em Our contributions.} 
In this work we consider the probabilistic termination problem for 
linear-arithmetic 
probabilistic programs  with nondeterminism.
Our contributions are manifold, ranging from (a)~definition of stochastic invariants
for probabilistic termination; to (b)~introduction of {\em repulsing supermartingales}
(RepSMs) and their effectiveness; to (c)~computational results; and (d)~experimental results.
We describe each of them in details below.

\smallskip\noindent{\em Stochastic invariants.}
We formally define the notion of {\em stochastic invariants} for the probabilistic 
termination problem. 
A stochastic invariant consists of a constraint on the program variables for
each program location (as for invariants), and a threshold value $p$,
such that the constraint is violated at the location with probability at most 
$p$. 
For example, in the probabilistic program of Figure~\ref{fig:intro} 
we can consider a stochastic invariant with constraint $x \leq 100$ at location 
corresponding to the \textbf{if} $x\leq100$ test, with the threshold value 
being very  
small (less than $10^{-5}$),
since the probability that $x$ exceeds~100 is very low due to asymmetry. 

\smallskip\noindent{\em Repulsing supermartingales.} 
We introduce a concept of repulsing supermartingales (RepSMs), 
which are in some sense dual to RSMs.
A RepSM for a set of program configurations $C$ has non-negative value inside 
$C$ and decreases on average outside $C$. Intuitively, while RSMs show that a 
program execution cannot avoid some set $C$ of configurations indefinitely, 
RepSMs show that program executions that start outside of $C$ tend to avoid 
$C$, and that they actually tend to ``run away'' from $C$ in some well defined 
sense. The RepSMs are inspired by martingale methods used for analysing 
so-called one-counter MDPs \cite{BKK:oc-jacm,BBEK:oc-approx-IC}, 
but they are more general and apply to 
vastly larger class of systems.
Our results for RepSMs are as follows:
\begin{compactenum}
\item {\em Stochastic invariants.} We show that RepSMs can be used to 
obtain bounds on the probability threshold of the stochastic invariants.

\item {\em Effectiveness.} We show the effectiveness of RepSMs 
in the following three ways:
\begin{compactitem}
\item First, with a combination of RSMs and RepSMs we show how to obtain lower bounds on 
the probability of termination (i.e., sound bounds for probabilistic termination).
Hence %
for programs that do not terminate almost-surely, but with high probability, 
our method can obtain such bounds.

\item Second, in program analysis, refuting a property is as important as proving,
as refutation is important in bug-hunting. 
We show that RepSMs can provide witnesses for refuting almost-sure termination.
Moreover, even for programs that terminate almost-surely, but have infinite
expected termination time, RepSMs can serve as witnesses for infinite expected
termination time.
\item Finally, we show the effectiveness of RepSMs beyond the termination problem.
For reactive systems that are non-terminating a very basic property is 
persistence, which requires that the execution eventually stays in a desired 
set of configurations. 
We show that a combination of RSMs and RepSMs can establish persistence properties
of probabilistic programs.
\end{compactitem}
\end{compactenum}

\smallskip\noindent{\em Computational results.}
We present two computational results.
\begin{compactenum}
\item {\em Repulsing supermartingales w.r.t. stochastic invariants.}
First, we consider the problem of efficient algorithms for deciding the 
existence of RepSMs w.r.t. to stochastic invariants. 
Since our goal is to obtain efficient algorithms, we consider the simplest 
class of
RepSMs, namely, {\em linear repulsing supermartingales} (LRepSMs).
We show that given a program with ``strict" invariants'' (e.g., obtained via 
abstract 
interpretation) and a stochastic invariant, the existence of a LRepSM 
w.r.t. the stochastic invariant can be decided in polynomial time (via 
reduction 
to LP) provided that the stochastic invariant uses only \emph{polyhedral} 
constraints (i.e. conjunctions of inequalities).
\item {\em Synthesis.}
Second, we consider the problem of synthesis of a stochastic invariant which 
supports some 
RSM and at the same time admits a RepSM.
We show that the synthesis problem can be achieved via reduction to the existential first-order 
theory of reals.
This result generalizes existing results from the non-probabilistic setting, and  
even in the non-probabilistic setting the best-known computational methods require the 
existential theory of reals.

\end{compactenum}

\smallskip\noindent{\em Experimental results.} We present a basic implementation 
of our approach, and present experimental results on academic examples.
Our main contributions are conceptual and algorithmic, and the experiments 
serve as a validation of the new concepts.

Due to space constraints, some technical details are presented in the appendix.

\section{Preliminaries}\label{sec:prelim}

\subsection{Basic Notions, Linear Predicates, Valuations}
For a set $A$ we denote by $|A|$ the cardinality of $A$. We denote by $\Nset$,
$\Nset_0$, $\Zset$, and $\Rset$ the sets of all positive integers, non-negative
integers, integers, and real numbers, respectively. We assume basic knowledge 
of matrix calculus.
We use boldface notation for
vectors, e.g. $\vec{x}$, $\vec{y}$, etc., and we denote an $i$-th component of a
vector $\vec{x}$ by $\vec{x}[i]$. For the purpose of matrix calculations we 
assume that (non-transposed) vectors are row vectors. If $\vec{v},\vec{v}'$ are 
$n$ and $m$ dimensional vectors, respectively, then $(\vec{v},\vec{v}')$ is an 
$(n+m)$-dimensional vector obtained by ``concatenation'' of $\vec{v}$ and 
$\vec{v}'$. We identify 1-dimensional vectors with numbers. For an 
$n$-dimensional vector 
$\vec{x}$, index $1 \leq i\leq n$, and number $a$ we denote by $\vec{x}(i\leftarrow a)$ 
a 
vector $\vec{y}$ such that $\vec{y}[i]=a$ and $\vec{y}[j]=\vec{x}[j]$ for all 
$1\leq j \leq n$, $j\neq i$.
For comparison of vectors (e.g. as in $\vec{x}\leq \vec{y}$), we consider componentwise 
comparison. 
For comparing functions $f,g$ with the same domains, we write $f\leq g$ 
if $f(x)\leq g(x)$ for all $x$ in the domain.

\smallskip\noindent{\em Variables and valuations.}
Throughout the paper we fix a countable set of variables $\vars$. 
We consider some arbitrary but fixed linear order on the set of all variables. 
Hence, given some set of variables $V$ we can enumerate its members in 
ascending order (w.r.t. the fixed ordering) and write 
$V=\{x_1,x_2,x_3,\dots\}$. 

\smallskip\noindent{\em Affine expressions.}
An \emph{affine expression} over the set of variables $\{x_1,\dots,x_n\}$ is an 
expression of the form $d+\sum_{i=1}^{n}a_i
x_i$, where  $d,a_1,\dots,a_n$ are real-valued
constants. Each affine expression $E$ over $\{x_1,\dots,x_n\}$ 
determines a function which for each $m$-dimensional vector $\vec{x}$, where $m\geq n$,  
returns a number resulting from substituting each $x_i$ in $E$ by $\vec{x}[i]$. 
Slightly abusing our notation, we denote this function also by $E$ and the 
value of this function on argument $\vec{x}$ by $E(\vec{x})$. A function of the form $E(\vec{x})$ for some affine expression $E$ is called affine.

\smallskip\noindent{\em Linear constraint, assertion, predicates.}
We use the following nomenclature:
\begin{compactitem}
\item {\em Linear Constraint.} A \emph{linear constraint} is a formula of the
form $\psi$ or $\neg\psi$, where  $\psi$ is a non-strict inequality
between affine expressions.
\item {\em Linear Assertion.} A \emph{linear assertion} is a finite conjunction
of linear constraints.
\item {\em Propositionally Linear Predicate.}
A  \emph{propositionally linear predicate} (PLP) is a finite disjunction of
linear assertions.
\end{compactitem}

\noindent{\em Arity and satisfaction of PLP.}
For a PLP $\varphi$ we denote by $\vars(\varphi)$ the set of all variables that 
appear in $\varphi$. As noted above, we stipulate that $\vars(\varphi)=\{x_1,\dots,x_{n(\varphi)}\}$ for some $n(\varphi)\in \Nset$. A vector $\vec{x}$ of dimension $m\geq n(\varphi)$ \emph{satisfies} $\varphi$, we write 
$\vec{x}\models\phi$, if 
 the arithmetic formula obtained by 
substituting each occurrence of a variable $x_i$ in $\varphi$ by $\vec{x}[i]$ is 
valid. We denote $\sat{\phi} = \{\vec{x}\in \Rset^m\mid m\geq n(\varphi) \wedge \vec{x} \models\phi\}$ 
and $\exsat{\phi} = \sat{\phi}
\cap \Rset^{n(\varphi)}$.

\subsection{Syntax of Affine Probabilistic Programs (\APP s)}\label{subsec:syntax}

\noindent{\em The Syntax.}
We consider the standard syntax for affine probabilistic programs,
which encompasses basic programming mechanisms such as assignment statement 
(indicated by `:='), while-loop, if-branch. We also consider basic probabilistic mechanisms 
such as probabilistic branch (indicated by `prob') and random sampling (e.g. 
$x:=\textbf{sample(}\mathrm{Uniform}[-2,1]\textbf{)}$ assigns to $x$ a random 
number 
uniformly sampled from interval $[-2,1]$). We also allow constructs for 
(demonic) non-determinism, in particular 
non-deterministic branching indicated by `\textbf{if }$\star$ \textbf{then...}' construct and non-deterministic assignment.  
Variables (or identifiers) of a probabilistic program are of \emph{real} type, i.e., 
values of the variables are real numbers. 
We allow only affine expressions in test statements and in the right-hand sides of assignments.
We also assume that assume that each \APP{} $\program$ is preceded by an initialization preamble 
in 
which 
each variable appearing in $\program$ is assigned some concrete number.
Due to space restrictions, details (such as grammar) are relegated to the Appendix.
For an example see Figure~\ref{fig:invariant-running}.
We refer to this class of affine probabilistic programs as \APP s.

\subsection{Semantics of Affine Probabilistic Programs}\label{subsec:semantics}

We now formally define the semantics of \APP's.
In order to do this, we first recall some fundamental concepts from probability
theory.

\smallskip\noindent{\em Basics of Probability Theory.}
The crucial notion is the one of a probability space. A probability space is a triple
$(\Omega,\mathcal{F},\probm)$, where $\Omega$ is a non-empty set (so called
\emph{sample space}), $\mathcal{F}$ is a \emph{sigma-algebra} over $\Omega$,
i.e. a collection of subsets of $\Omega$ that contains the empty set
$\emptyset$, and that is closed under complementation and countable unions, and
$\probm$ is a \emph{probability measure} on $\mathcal{F}$, i.e., a function
$\probm\colon \mathcal{F}\rightarrow[0,1]$ such that
\begin{compactitem}
\item $\probm(\emptyset)=0$,
\item for all $A\in \mathcal{F}$ it holds $\probm(\Omega\smallsetminus
A)=1-\probm(A)$, and
\item for all pairwise disjoint countable set sequences $A_1,A_2,\dots \in
\mathcal{F}$ (i.e., $A_i \cap A_j = \emptyset$ for all $i\neq j$)
we have $\sum_{i=1}^{\infty}\probm(A_i)=\probm(\bigcup_{i=1}^{\infty} A_i)$.
\end{compactitem}

\noindent{\em Random variables and filtrations.}
A \emph{random variable} in a probability space $(\Omega,\mathcal{F},\probm)$ is
an $\mathcal{F}$-measurable function $R\colon \Omega \rightarrow \Rset \cup
\{\infty\}$, i.e.,
a function such that for every $a\in \Rset \cup \{ \infty\}$ the set
$\{\omega\in \Omega\mid R(\omega)\leq a\}$ belongs to $\mathcal{F}$.
We denote by $\expv[R]$ the \emph{expected value} of a random variable $X$~(see \cite[Chapter 5]{Billingsley:book}
for a formal definition). 
A \emph{random vector} in $(\Omega,\mathcal{F},\probm)$ is a vector whose every component is a random 
variable in this probability space. A \emph{stochastic process} in a 
probability space $(\Omega,\mathcal{F},\probm)$ is an infinite sequence of 
random vectors in this space.
We will also use random variables of the form $R\colon\Omega \rightarrow S$ for some finite 
set $S$, which is easily translated to the variables above.
A \emph{filtration} of a sigma-algebra $\mathcal{F}$ is a
sequence $\{\mathcal{F}_i \}_{i=0}^{\infty}$ of $\sigma$-algebras 
such that $\mathcal{F}_0 \subseteq \mathcal{F}_1 \subseteq \cdots \subseteq
\mathcal{F}_n \subseteq \cdots \subseteq \mathcal{F}$.

\emph{Distributions.} We assume the standard definition of a probability 
distribution specified by a cumulative distribution 
function~\cite{Billingsley:book}. We denote by $\mathcal{D}$ be a set of 
probability distributions on 
real numbers, both discrete and continuous.

\smallskip\noindent{\em Probabilistic Control Flow Graphs.}
The semantics can be defined as the semantics of an uncountable state-space
Markov decision process (MDP) (uncountable due to real-valued variables).
We take an operational approach to define the semantics, and associate to each program 
a certain stochastic process~\cite{SriramCAV,HolgerPOPL,Kozen:prob-semantics}.
To define this process, we first define so called 
\emph{probabilistic control flow graphs}~\cite{CFG16:positivstellensatz-arxiv}.

\smallskip
\begin{definition}
\label{def:stochgame}
A \emph{probabilistic control flow graph (pCFG)} is a tuple
$\pCFG=(\locs,\pvars,\locinit,\vecinit,\transitions,\probdist,\guards)$,
where
\begin{compactitem}
\item $\locs$ is a finite set of \emph{locations} partitioned into three 
pairwise
disjoint subsets  $\locs_N$, $\locs_P$, and $\locs_D$ of non-deterministic, 
probabilistic, and deterministic locations;
\item $\pvars=\{x_1,\dots,x_{|\pvars|}\}$ is a finite set of \emph{program 
variables} (note that $\pvars \subseteq \vars$) ;
\item $\locinit$ is an initial location and $\vecinit$ is an initial 
\emph{assignment vector};
\item $\transitions$ is a transition relation, whose members are tuples of
the form $(\ell,i,\up,\ell')$, where $\ell$ and $\ell'$ are source and target
program locations, respectively, $1\leq i \leq |\pvars|$ is a \emph{target variable index}, and $u$ 
is an 
\emph{update element}, which can 
be one of the following mathematical objects: %
(a)~an affine function $u\colon \Rset^{|\pvars|}\rightarrow \Rset$;
(b)~a distribution $d\in \mathcal{D}$; or
(c)~a set $R\subseteq \Rset$.
\item $\probdist=\{\prob_{\ell}\}_{\ell \in \locs_P}$ is a collection of
probability distributions, where each $\prob_{\ell}$ is a discrete probability
distribution on the set of all transitions outgoing from~$\ell$.
\item $\guards$ is a function assigning a propositionally linear predicate
(a \emph{guard}) over $\pvars$ to each transition outgoing from a deterministic 
location.
\end{compactitem}

We assume that each location has at least one outgoing transition.
Also, for every deterministic location $\ell$ we assume the following: if
$\tau_1,\dots,\tau_k$ are all transitions outgoing from $\ell$, then $G(\tau_1)
\vee \dots \vee G(\tau_k) \equiv \mathit{true}$ and $G(\tau_i) \wedge G(\tau_j)
\equiv \mathit{false}$ for each $1\leq i < j \leq k$. Moreover, for each 
distribution $d$ appearing in the 
pCFG we assume the following features are known: expected value $\expv[d]$ of 
$d$ and a 
single-variable PLP $\varphi_d$ such that the \emph{support} of $d$ (i.e. the 
smallest 
closed set of real numbers whose complement has probability zero 
under $d$)\footnote{In particular, a support of a \emph{discrete} probability 
	distribution $d$ is simply the at most countable set of all points on a 
	real 
	line that have positive probability under $d$.} satisfies 
$\support(d)\subseteq\exsat{\varphi_d}$. Finally, we assume that for each 
transition $(\loc,j,u,\loc')$ such that $u$ is a set the location $\loc$ is 
deterministic. This is just a technical assumption yielding no loss of 
generality, and it somewhat simplifies notation.
\end{definition}

\smallskip\noindent{\em Configurations.}
A \emph{configuration} of a pCFG $\pCFG$ is a tuple $(\ell,\vec{x})$,
where $\ell$ is a location of $\pCFG$ and $\vec{x}$ is an 
$|\pvars|$-dimensional vector.
We say that a transition $\tau$ is \emph{enabled} in a configuration
$(\ell,\vec{x})$ if $\ell$ is the source location of $\tau$ and in addition,
${\vec{x}}\models G(\tau)$ provided that $\ell$ is deterministic. A 
configuration 
$(\loc,\vec{x})$ is non-deterministic/probabilistic/deterministic if $\loc$ is 
non-deterministic/probabilistic/deterministic, respectively.

\smallskip\noindent{\em Executions and reachable configurations.}
A \emph{finite path} (or
\emph{execution fragment}) in $\pCFG$ is a finite sequence of
configurations $(\ell_0,\vec{x}_0)\cdots(\ell_k,\vec{x}_k)$ such that 
for each
$0 \leq i < k$ there is a transition  $(\ell_i,j,\up,\ell_{i+1})$ enabled in
$(\ell_i,\vec{x}_i)$ such that $\vec{x}_{i+1}=\vec{x}_i(j\leftarrow a)$ where $a$ 
satisfies one of the following:
\begin{compactitem}
\item $\up$ is a function $f\colon\Rset^{|X|}\rightarrow\Rset$ and
$a=f(\vec{x}_{i})$;
\item $\up$ is an integrable\footnote{A distribution on some numerical domain 
is integrable if its expected value exists and is finite. In particular, each 
Dirac distribution is integrable.} distribution $d$ and 
$a\in \support(d)$; or
\item $\up$ is a set and $a\in \up$.
\end{compactitem}
A \emph{run} (or \emph{execution}) in
$\pCFG$ is an infinite sequence of configurations whose every finite
prefix is a finite path.
A configuration $(\loc,\vec{x})$ is {\em reachable} from the initial 
configuration
$(\locinit,\vecinit)$
if there is a finite path starting in $(\locinit,\vecinit)$ that ends in
$(\loc,\vec{x})$.

\smallskip\noindent{\em Schedulers.}
Due to the presence of non-determinism and probabilistic choices, a pCFG
$\pCFG$ may represent a multitude of possible behaviours. The probabilistic
behaviour of $\pCFG$ can be captured by constructing a suitable
probability measure over the set of all its runs. Before this can be
done, non-determinism in $\pCFG$ needs to be resolved. This is done using the 
standard notion of a \emph{scheduler}.

\smallskip
\begin{definition}[Schedulers]
\label{def:schedulers}
A scheduler in an pCFG $\pCFG$ is a tuple $\sigma=(\sigmat,\sigmaa)$,
where
\begin{compactitem}
\item $\sigmat$ (here '$t$' stands for 'transition') is a function assigning 
to every finite path that ends in a non-deterministic configuration 
$(\loc,\vec{x})$ a probability distribution on transitions outgoing from 
$\loc$; 
and
\item $\sigmaa$ (here '$a$' stands for 'assignment') is a function which takes 
as an argument a finite path ending in a deterministic configuration in which 
some transition 
$(\loc,j,u,\loc')$ with $u$ being a set is enabled, and for such a path it 
returns a probability distribution on $u$.
\end{compactitem}
\end{definition}

\smallskip\noindent{\em Stochastic process.}
A pCFG $\pCFG$ together with a scheduler $\sigma$ can be seen as a stochastic 
process which produces a random run 
$(\loc_0,\vec{x}_0)(\loc_1,\vec{x}_1)(\loc_2,\vec{x}_2)\cdots$. The evolution 
of 
this process can be informally described as follows: we start in the initial 
configuration, i.e. $(\loc_{0},\vec{x}_0)=(\locinit,\vecinit)$. 
Now assume that $i$ 
steps have elapsed, i.e. a finite path 
$(\loc_0,\vec{x}_0)(\loc_1,\vec{x}_1)\cdots(\loc_i,\vec{x}_i)$ has already 
been produced. Then
\begin{compactitem}
\item
A transition $\tau=(\loc,j,u,\loc')$ enabled in 
$(\lr{}{i},\vr{}{i})$ is chosen as follows:

\begin{compactitem}
\item
If $\ell_i$ is non-deterministic then $\tau$ is chosen randomly
according to the distribution specified
by scheduler
$\sigma$, i.e. according to the distribution 
$\sigmat((\loc_0,\vec{x}_0)(\loc_1,\vec{x}_1)\cdots(\loc_i,\vec{x}_i))$.

\item
If $\ell_i$ is probabilistic, then $\tau$ is chosen randomly according to the
distribution $\probdist_{\ell_i}$.
\item
If $\ell_i$ is deterministic, then by the definition of a pCFG there is exactly
one enabled transition outgoing from $\ell_i$, and this transition is chosen as
$\tau$.
\end{compactitem}
\item
Once $\tau$ is chosen as above, we put $\loc_{i+1}=\loc'$. Next, we put 
$\vec{x}_{i+1}=\vec{x}_{i}(j\leftarrow a)$, where $a$ chosen as follows:
\begin{compactitem}
\item If $u$ is a function $u\colon
\Rset^{|\pvars|}\rightarrow \Rset$, then $a=f(\vec{x}_{i})$.
\item If $\up$ is a distribution $d$, then $a$ 
is sampled from $d$.
\item If $\up$ is a set, then $a$ is sampled from a 
distribution 
$\sigmaa((\loc_0,\vec{x}_0)(\loc_1,\vec{x}_1)\cdots(\loc_i,\vec{x}_i))$.
\end{compactitem}
\end{compactitem}

The above intuitive explanation can be formalized by showing that
each pCFG $\pCFG$ together with a scheduler $\sigma$ uniquely determines a 
certain 
probabilistic 
space $(\Omega,\natfilt,\probm^{\sigma})$ in which $\Omega$ is a set of all 
runs in $\pCFG$, and a stochastic process 
$\pCFG^{\sigma}=\{\cfg{\sigma}{i}\}_{i=0}^{\infty}$ in this space 
such 
that for each $\run\in \Omega$ we have that $\cfg{\sigma}{i}(\run)$ is the 
$i$-th configuration on run $\run$ (i.e., $\cfg{\sigma}{i}$ is a random vector 
$(\lr{\sigma}{i},\vr{\sigma}{i})$ with  $\lr{\sigma}{i}$ taking 
values in $\locs$ and $\vr{\sigma}{i}$ being a random vector of dimension 
$|\pvars|$ consisting of real-valued random variables). The sigma-algebra 
$\natfilt$ is the smallest (w.r.t. inclusion) sigma algebra under which all 
the functions $\cfg{\sigma}{i}$, for all $i\geq 0$ and all schedulers $\sigma$, 
are $\natfilt$-measurable (a 
function $f$ returning vectors is $\natfilt$-measurable if for all 
real-valued vectors 
$\vec{y}$ of appropriate dimension the set $\{\omega\in \Omega\mid 
f(\omega)\leq \vec{y}\}$ belongs to $\natfilt$). The 
probability 
measure 
$\probm^{\sigma}$ is such that for each $i$, the distribution of 
$\cfg{\sigma}{i}$ reflects the aforementioned way in which runs are randomly 
generated. The formal construction 
of $\natfilt$ and $\probm^{\sigma}$ is standard~\cite{Billingsley:book} and 
somewhat 
technical, hence we omit it. 
We denote by $\E^\sigma$ the expectation operator in probability space 
$(\Omega,\natfilt,\probm^{\sigma})$.
The translation from probabilistic programs to the corresponding pCFG is 
standard~\cite{CFNH16:prob-termination}, and the details are presented 
in the appendix. We point out that the 
construction produces pCFGs with a property that only transitions outgoing from 
a deterministic state can update program variables. All other transitions are 
assumed to be of the form $(\ell,1,\id_1,\ell')$ for some locations 
$\ell,\ell'$, where $\id_1(\vec{x})=\vec{x}[1]$ for all $\vec{x}$. We use this 
to simplify notation.
An illustration of a pCFG is given in Figure~\ref{fig:invariant-running}.

\subsection{Almost-Sure and Probabilistic Termination}
We consider computational problems related to the basic liveness properties of 
\APP{s}, namely \emph{termination} and its generalization, \emph{reachability}. 

\smallskip\noindent{\em Termination, reachability, and termination time.}
In the following, consider an \APP{} $P$ and its associated pCFG $\pCFG_P$. 
We say that a run $\run$ of $\pCFG_P$ \emph{reaches} a set of configurations 
$C$ if it  contains a configuration from $C$. A run \emph{terminates} if it reaches a 
configuration whose first component (i.e. a location of $\pCFG_P$) is the 
location
$\loc_{P}^{\lout}$ corresponding to the value of the program counter after 
executing $P$. 
To each set of configurations $C$ we can assign a random variable $\treach{C}$ 
such that for each run $\run$ the value $\treach{C}(\run)$ represents the first 
point in time when 
the current configuration on $\run$ is in $C$. If a run $\run$ does \emph{not} 
reach a set $C$, then 
$\treach{C}(\run)=\infty$. We call $\treach{C}$ the 
\emph{reachability time} of $C$. In particular, if $C$ is the set 
of all configurations $(\loc,\vec{x})$ such that $\loc=\loc^\lout_P$ (the 
terminal location of $\pCFG_P$), then $\treach{C}$ is 
called a \emph{termination time}, as it returns the number of steps after which 
$\run$ terminates. Since termination time is an important concept on its own, 
we use a special notation $\ttime$ for it.
Since a probabilistic program may exhibit more than one run, we are interested 
in probabilities of runs that terminate or reach some set of configurations. 
This gives rise to the following fundamental computational problems regarding 
termination:

\begin{compactenum}
\item \emph{Almost-sure termination:} A probabilistic program $P$ is 
almost-surely (a.s.) 
terminating if under each scheduler $\sigma$ it holds that 
$\probm^{\sigma}(\{\run\mid \run \text{ terminates}\}) = 1$, or equivalently, 
if for each $\sigma$ it holds $\probm^{\sigma}(\ttime<\infty)=1$. In 
almost-sure termination question for $P$ we aim to prove that $P$ 
is 
almost-surely terminating.
\item \emph{Probabilistic termination:} In probabilistic termination question 
for $P$ we aim to compute a \emph{lower bound on the probability of 
termination}, i.e. a bound $b\in[0,1]$ such that for each scheduler 
$\sigma$ it holds $\probm^{\sigma}(\{\run\mid \run \text{ terminates}\}) \geq b$ 
(or equivalently $\probm^{\sigma}(\ttime<\infty)\geq b$).
\end{compactenum}

We also define corresponding questions for the more general reachability 
concept. 
 
\begin{compactenum}
\item \emph{Almost-sure reachability:} For a set $C$ of configurations of a
probabilistic program $P$, prove (if possible) that under each scheduler 
$\sigma$ it holds that
$\probm^{\sigma}(\treach{C}<\infty)=1$.
\item \emph{Probabilistic reachability:} For a set $C$ of configurations of 
a probabilistic program $P$, compute a bound $b\in[0,1]$ such that for each scheduler
$\sigma$ it holds $\probm^{\sigma}(\treach{C}<\infty)\geq b$.
\end{compactenum}

Since termination is a special case of reachability, each solution to the 
almost-sure or probabilistic reachability questions provides solution for the 
corresponding termination questions.

\lstset{language=affprob}
\lstset{tabsize=3}
\newsavebox{\invrun}
\begin{lrbox}{\invrun}
\begin{lstlisting}[mathescape]
$x:=10$
while $x\geq 1$ do
	if prob(0.75) then $x:=x-1$	else $x:=x+1$ 
	fi
od
\end{lstlisting}
\end{lrbox}
\begin{figure}[t]
\centering
\usebox{\invrun}

\begin{tikzpicture}[x = 1.8cm]

\node[det] (while) at (1.5,0)  {$\loc_0$};
\node[ran] (prob) at (3,0) {$\loc_1$};
\node[det] (fin) at (0,0) {$\loc_2$};
\draw[tran] (while) to node[font=\scriptsize,draw, fill=white, 
rectangle,pos=0.5] {$x<1$} (fin);
\draw[tran, loop, looseness = 5, in =-65, out = -115] (fin) to (fin);
\draw[tran] (while) to node[font=\scriptsize,draw, fill=white, 
rectangle,pos=0.5] {$x\geq 1$} (prob);

\node (dum1) at (0,0.8) {};
\node (dum2) at (0,-0.8) {};

\draw[tran] (prob) -- node[font=\scriptsize,draw, fill=white, 
rectangle,pos=0.5, inner sep = 1pt] {$\frac{3}{4}$} (prob|-dum1) -- node[auto] 
{x:=x-1}
(while|-dum1)--(while);
\draw[tran] (prob) -- node[font=\scriptsize,draw, fill=white, 
rectangle,pos=0.5, inner sep = 1pt] {$\frac{1}{4}$} (prob|-dum2) 
-- node[auto,swap] {x:=x+1} (while|-dum2)--(while);
\end{tikzpicture}
\caption{An \APP{} modelling an asymmetric 1-D random walk and the associated 
pCFG. Probabilistic locations are depicted by circles, with probabilities given 
on outgoing 
transitions. Transitions are labelled by their effects. Location $\loc_0$ is 
initial and $\loc_2$ is terminal.}
\label{fig:invariant-running}
\end{figure}

\section{Invariants and Ranking Supermartingales}\label{sec:invm}

In this section we recall known methods and constructs for solving the 
qualitative termination and reachability questions for \APP{s}, namely  
linear invariants and ranking supermartingales. 
We also demonstrate that these methods are not sufficient to address the 
quantitative variants of these questions (i.e., probabilistic termination).
In order to discuss the necessary concepts, we recall the basics of 
martingales, which is relevant for both this and subsequent sections. 
\subsection{Pure Invariants}

Invariants are a vital element of many program analysis techniques. 
Intuitively, invariants are maps assigning to each 
program location $\loc$ of some pCFG a predicate which is guaranteed to hold 
whenever $\loc$ is 
entered. To avoid confusion with stochastic invariants, 
that we introduce later, we call these standard invariants \emph{pure invariants}.

\smallskip
\begin{definition}[Linear Predicate Map (LPM) and Pure Invariant] We define the 
following:
\begin{compactenum}
\item
A \emph{linear predicate map (LPM)} for an \APP{} $P$ is a function $I$ 
assigning to each location $\loc$ of the pCFG $\pCFG_P$ a propositionally 
linear predicate $I(\loc)$ over the set of program variables of $P$.
\item
A \emph{pure linear invariant} (or just a pure invariant) for an \APP{} $P$ is 
a linear predicate map $\inv$ for $P$ with
the following property: for each location $\loc$ of $\pCFG_P$ and each finite 
path $(\loc_0,\vec{x}_0),\cdots,(\loc_n,\vec{x}_n)$ such that 
$(\loc_0,\vec{x}_0)=(\locinit,\vecinit)$ and $\loc_n = \loc$ it holds
$\vec{x}_n\models \inv(\loc)$.
\end{compactenum}
\end{definition}

\subsection{Supermartingales}
(Super)martingales, are a standard tool of 
probability theory apt for analyzing probabilistic objects arising in computer 
science, from automata-based models~\cite{BKKNK:pMC-zero-reachability} to 
general probabilistic 
programs~\cite{SriramCAV,HolgerPOPL,CFNH16:prob-termination,CFG16:positivstellensatz-arxiv,BEFH16:doob-decomposition}.

Let us first recall basic
definitions and results related to supermartingales, which we need in our 
analysis.

\smallskip\noindent{\bf Conditional Expectation.} 
Let $(\Omega,\mathcal{F},\probm)$ be a probability space, 
$X\colon\Omega\rightarrow 
\Rset$ an $\mathcal{F}$-measurable function, and $\mathcal{F}'\subseteq 
\mathcal{F}$ sub-sigma-algebra of $\mathcal{F}$. The \emph{conditional 
expectation} of $X$ given $\mathcal{F}'$ is an $\mathcal{F}'$-measurable random 
variable denoted by $\E[X| \mathcal{F}']$ which satisfies, for each set $A\in 
\mathcal{F}'$, the following: 
\begin{equation}
\label{eq:cond-exp}
\E[X\cdot 1_A] = \E[\E[X|\mathcal{F}]\cdot 1_A],
\end{equation}
where $1_A \colon \Omega\rightarrow \{0,1\}$ is an \emph{indicator function} of 
$A$, i.e. function returning $1$ for 
each $\omega\in A$ and $0$ for each $\omega\in \Omega\setminus A$. Note that 
the left hand-side of~\eqref{eq:cond-exp} intuitively represents the expected 
value 
of $X(\omega)$ with domain restricted to $A$.

Recall that in context of probabilistic programs we work with probability 
spaces of the form $(\Omega,\natfilt,\probm^\sigma)$, where $\Omega$ is a 
set of runs in some $\pCFG$ and $\mathcal{F}$ is (the smallest) sigma-algebra 
such that all the functions $\cfg{\sigma}{i}$, where $i\in \Nset_0$ and 
$\sigma$ is a scheduler, are $\natfilt$-measurable. In such a setting we can 
also consider sub-sigma-algebras $\natfilt_i$, $i\in \Nset_0$, of 
$\natfilt$, where $\natfilt_i$ is the smallest sub-sigma-algebra of 
$\natfilt$ such that all the functions $\cfg{\sigma}{j}$, $0\leq j \leq 
i$, are $\natfilt_i$-measurable. Intuitively, each set $A$ belonging to such 
an $\natfilt_i$ consists of runs whose first $i$ steps satisfy some 
property, and the probability space $(\Omega,\natfilt_i,\probm^\sigma)$ 
allows us to reason about probabilities of certain events happening in the 
first 
$i$ steps of program execution. 
Then, for each $A\in \natfilt_i$, the 
value $\E[\E[X|\natfilt_i]\cdot 1_A]$ represents the expected value of 
$X(\run)$ for the randomly generated run $\run$ provided that we restrict to 
runs whose
prefix of length $i$ satisfies the property given by $A$. 
Note that the sequence $\natfilt_0,\natfilt_1,\natfilt_2,\dots$ forms a 
filtration of $\natfilt$, which we call a \emph{canonical filtration}.

\smallskip
\begin{definition}[Supermartingale]
Let $(\Omega,\mathcal{F},\probm)$ be a probability space and 
$\{\mathcal{F}_i\}_{i=0}^{\infty}$ a filtration of $\mathcal{F}$. A sequence of 
random variables $\{X_i\}_{i=0}^{\infty}$ is a \emph{supermartingale} w.r.t. 
filtration $\{\mathcal{F}_i\}_{i=0}^{\infty}$ if it satisfies these conditions:
\begin{compactenum}
\item  The process $\{X_i\}_{i=0}^{\infty}$ is \emph{adapted} to 
$\{\mathcal{F}_i\}_{i=0}^{\infty}$, i.e. for all $i\in \Nset_0$ it holds that 
$X_i$ is $\mathcal{F}_i$-measurable.
\item For all $i\in \Nset_0$ it holds $\E[|X_i|]<\infty$.
\item For all $i\in \Nset_0$ it holds 
\begin{equation}
\label{eq:supermart-def}
\E[X_{i+1}|\mathcal{F}_i] \leq X_i.
\end{equation}

A supermartingale $\{X_i\}_{i=0}^{\infty}$ has $c$-bounded differences, where 
$c\geq 0$, if $|X_{i+1}-X_i|<c$ for all $i\in \Nset_0$
\end{compactenum}
\end{definition}

Intuitively, a supermartingale is a stochastic process whose average value is 
guaranteed not to rise as time evolves, even if some information on the past 
evolution of the process is revealed. 
We often need to work with supermartingales whose value 
is guaranteed to decrease on average, until a certain condition is 
satisfied. The point in time in which such a condition is satisfied is called a 
\emph{stopping time}.

\smallskip
\begin{definition}[Stopping time]
Let $(\Omega,\mathcal{F},\probm)$ be a probability space and 
$\{\mathcal{F}_i\}_{i=0}^{\infty}$ a filtration. A random variable $\stime 
\colon 
\Omega\rightarrow \Nset_0$ is called a \emph{stopping time} w.r.t. 
$\{\mathcal{F}_i\}_{i=0}^{\infty}$ if %
for all $j\in \Nset_0$ the set $\{\omega\in \Omega\mid \stime(\omega)\leq j\}$ 
belongs to $\mathcal{F}_j$.
\end{definition}

In particular, for each set of configurations $C$ the reachability time 
$\treach{C}$ of $C$ is a stopping time w.r.t. the canonical filtration, since 
at each time $j$ we can decide whether $\treach{C}>j$ or not by looking at the 
prefix of a run of length $j$. 
Finally, we recall the fundamental notion of a ranking 
supermartingale.

\smallskip
\begin{definition}[Ranking supermartingale]
\label{def:ranking}
Let $(\Omega,\mathcal{F},\probm)$ be a probability space, 
$\{\mathcal{F}_i\}_{i=0}^{\infty}$ a filtration of $\mathcal{F}$, $\stime$ a 
stopping time w.r.t. that filtration, and $\eps\geq 
0$. 
A supermartingale $\{X_i\}_{i=0}^{\infty}$ (w.r.t. 
$\{\mathcal{F}_i\}_{i=0}^{\infty}$) is \emph{$\eps$-decreasing} until 
$\stime$ 
if it 
satisfies 
the following additional condition: for all $i\in \Nset_0$ it holds
\begin{equation}
\label{eq:ranking-sup}
\E[X_{i+1}|\mathcal{F}_i] \leq X_i - \eps\cdot\vec{1}_{\stime > i}.
\end{equation}

Further, $\{X_i\}_{i=0}^{\infty}$ is an \emph{$\eps$-ranking} supermartingale 
($\eps$-RSM) for $\stime$ if it 
is $\eps$-decreasing until $\stime$ and for 
each $\omega\in\Omega, j\in\Nset_0$ it holds $\stime(\omega)>j \Rightarrow 
X_j(\omega)\geq 0$.
\end{definition}
Intuitively, if $\stime$ is the reachability time $\treach{C}$ of some set $C$, 
then the previous definition requires that an $\eps$-ranking supermartingale 
must decrease by at least $\eps$ on average up to the point when $C$ is 
reached for a first time. After that, it must not increase (on average).
The above definition is a bit more general than the standard one in the
literature as we also consider reachability as opposed to only termination.

\smallskip\noindent{\bf Martingales in Program Analysis.}
In the context of \APP{} analysis, we consider a special type of 
supermartingales given as functions of the current values of program variables. 
In this paper we focus on the case when these functions are \emph{linear}.

\smallskip
\begin{definition}[Linear Expression Map]
A \emph{linear expression map (LEM)} for an \APP{} $P$ is a function $\lem$ 
assigning to each program location $\loc$ of $\pCFG_P$ an affine expression 
$\lem(\ell)$ over the program variables of $P$.
\end{definition}

Each LEM $\lem$ and location $\loc$ determines an affine function $\lem(\loc)$ 
which takes as an argument an $n$-dimensional vector, where $n$ is the number 
of distinct variables in $P$. We use $\lem(\loc,\vec{x})$ as a shorthand 
notation for $\lem(\loc)(\vec{x})$. 
Martingales for \APP{} analysis are defined via a standard notion of 
pre-expectation~\cite{SriramCAV}. Intuitively, a pre-expectation of $\lem$ is a 
function which for each configuration $(\loc,\vec{x})$ returns the maximal
expected value of $\lem$ after one step is made from this configuration, where 
the maximum is taken over all possible non-deterministic choices.

\smallskip
\begin{definition}[Pre-Expectation]
Let $P$ be an \APP{} such that $\pCFG_P = 
(\locs,\pvars,\locinit,\vecinit,\transitions,\probdist,\guards)$ and let $\lem$ 
a 
linear expression map 
for $P$. 
The 
pre-expectation of $\lem$ is a function $\preexp{\lem}\colon \locs\times 
\Rset^{|\pvars|} \rightarrow \Rset$ defined as follows:
\begin{compactitem} %
\item 
if $\loc$ is a probabilistic location, then
$$\preexp{\lem}(\loc,\vec{x}):=\sum_{(\loc,1,\id_1,\loc')\in\transitions} 
Pr_{\loc}\left((\loc,1,\id_1,\loc')\right)\cdot
 \lem(\loc',\vec{x});$$
\item 
 if $\loc$ is a non-deterministic location, then
$$\preexp{\lem}(\loc,\vec{x}):=
\max_{(\loc,1,\id_1,\loc')\in\transitions}\lem(\loc',\vec{x});$$

\item 
if $\loc$ is a deterministic location, then for each $\vec{x}$ the value 
$\preexp{\lem}(\loc,\vec{x})$ is determined as follows: there is exactly one 
transition
$\tau=(\loc,j,\up,\loc')$ such that $\vec{x}\models G(\tau)$. We distinguish 
three cases:
\begin{compactitem}
\item If $\up\colon \Rset^{|\pvars|}\rightarrow \Rset$ is a function, then 
$$\preexp{\lem}(\loc,\vec{x}):= \lem(\loc',\vec{x}(j\leftarrow \up(\vec{x}))).$$
\item If $\up$ is a distribution $d$, then $$ \preexp{\lem}(\loc,\vec{x}):= 
\lem(\loc',\vec{x}(j\leftarrow \E[d])),$$ where $\E[d]$ is the expected value of the 
distribution $d$.
\item 
If $\up$ is a set, then $$ \preexp{\lem}(\loc,\vec{x}):= \max_{a\in\up}
\lem(\loc',\vec{x}(j \leftarrow a)).$$
\end{compactitem}
\end{compactitem}
\end{definition}

\smallskip
\begin{definition}(Linear Ranking Supermartingale)
\label{def:lrsm}
Let $P$ be an \APP{} such that $\pCFG_P = 
(\locs,\pvars,\locinit,\vecinit,\transitions,\probdist,\guards)$, let $\inv$ be 
a linear predicate map and let 
$\confset\subseteq \locs \times \Rset^{|\pvars|}$ be some set of 
configurations. 
A linear $\eps$-ranking supermartingale ($\eps$-LRSM) for $\confset$ 
supported by $\inv$ is a 
linear expression map $\lem$ for $P$ such that for all configurations 
$(\loc,\vec{x})$ of $\pCFG_P$ with
$(\loc,\vec{x})\not\in \confset$ and $\vec{x}\models \inv(\loc)$ the following 
two conditions hold:
\begin{compactitem}
\item
$\lem(\loc,\vec{x})\geq 0$
\item 
$\preexp{\lem}(\loc,\vec{x}) \leq \lem(\loc,\vec{x})-\eps$ 
\end{compactitem}
A linear $\eps$-ranking supermartingale supported by $\inv$ has 
$c$-bounded differences if for each $(\loc,\vec{x})$ such that $\vec{x}\models 
\inv(\loc)$ and each configuration $(\loc',\vec{x}')$ such that $(\loc,\vec{x}) 
(\loc',\vec{x}')$ is a path in $\pCFG_P$ it holds 
$|\lem(\loc,\vec{x})-\lem(\loc',\vec{x}')|\leq c$. 
\end{definition}

The relationship between $\eps$-LRSM in \APP{s}, (pure) invariants, and almost-sure termination 
is summarized in the following theorem. 
\smallskip
\begin{theorem}[{\cite[Theorem 1]{CFNH16:prob-termination}}]
\label{thm:old-ranking}
Let $P$ be an \APP{}, $\sigma$ a scheduler, and 
$(\Omega,\natfilt,\probm^\sigma)$ the corresponding probability space. Further, 
let $C$ be the set of terminating configurations of $\pCFG_P$ (i.e., the termination
location is reached), such that there exist an $\eps>0$ and an $\eps$-linear ranking 
supermartingale $\lem$ supported by a pure invariant $I$. 
Then 
\begin{compactenum}
\item
$\probm^{\sigma}(\ttime<\infty)=1$, i.e. termination is ensured almost-surely.
\item
$\E^{\sigma}[\ttime]<\lem(\locinit,\vecinit)/\eps$.
\end{compactenum}
\end{theorem}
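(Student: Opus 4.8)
The plan is to lift the syntactic object $\lem$ to a genuine ranking supermartingale process along runs and then extract both conclusions from a single optional-stopping computation. Fix the scheduler $\sigma$ and work in $(\Omega,\natfilt,\probm^\sigma)$ with the canonical filtration $\{\natfilt_i\}_{i=0}^\infty$. First I would define the process $X_i := \lem(\cfg{\sigma}{i}) = \lem(\lr{\sigma}{i},\vr{\sigma}{i})$, obtained by evaluating the linear expression map at the $i$-th configuration of the random run. Since $\cfg{\sigma}{i}$ is $\natfilt_i$-measurable and $\lem$ is affine at each location, $X_i$ is $\natfilt_i$-measurable, so $\{X_i\}_{i=0}^\infty$ is adapted; integrability ($\E^\sigma[|X_i|]<\infty$) I would establish by induction on $i$, using that every sampling distribution in the pCFG is integrable, so one step changes the value of the affine map by an integrable amount.

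The key step is to verify that $\{X_i\}_{i=0}^\infty$ is an $\eps$-ranking supermartingale for the stopping time $\ttime=\treach{C}$. Here I would use two facts. First, because $\inv$ is a pure invariant, every reachable configuration satisfies $\vr{\sigma}{i}\models\inv(\lr{\sigma}{i})$, so the two defining inequalities of the $\eps$-LRSM apply at every not-yet-terminated configuration. Second, for any fixed scheduler the one-step conditional expectation is dominated by the pre-expectation, i.e. $\E^\sigma[X_{i+1}\mid\natfilt_i]\leq\preexp{\lem}(\lr{\sigma}{i},\vr{\sigma}{i})$ almost surely, because $\preexp{\lem}$ takes a maximum over the nondeterministic choices that $\sigma$ can only average over or select from. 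Combining these with the LRSM inequality $\preexp{\lem}(\loc,\vec{x})\leq\lem(\loc,\vec{x})-\eps$ on $\{\ttime>i\}$ yields $\E^\sigma[X_{i+1}\mid\natfilt_i]\leq X_i-\eps\cdot\vec{1}_{\ttime>i}$, while on $\{\ttime\leq i\}$ the identity self-loop at the terminal location keeps $X$ constant, so the supermartingale inequality holds trivially; the nonnegativity requirement $\ttime>j\Rightarrow X_j\geq 0$ is exactly the first LRSM inequality.

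From here I would form the stopped and shifted process $Y_i:=X_{\min(i,\ttime)}+\eps\cdot\min(i,\ttime)$ and check directly, splitting on $\{\ttime\leq i\}$ and $\{\ttime>i\}$, that $\{Y_i\}_{i=0}^\infty$ is a supermartingale: on $\{\ttime>i\}$ the $+\eps$ shift is exactly cancelled by the $\eps$-decrease of $X$, and on $\{\ttime\leq i\}$ the process is frozen. Taking expectations gives $\eps\cdot\E^\sigma[\min(i,\ttime)]\leq\E^\sigma[Y_i]-\E^\sigma[X_{\min(i,\ttime)}]\leq Y_0=\lem(\locinit,\vecinit)$, using $\E^\sigma[Y_i]\leq\E^\sigma[Y_0]$ and nonnegativity of $X_{\min(i,\ttime)}$. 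Letting $i\to\infty$ and applying the monotone convergence theorem to $\min(i,\ttime)\uparrow\ttime$ gives $\E^\sigma[\ttime]\leq\lem(\locinit,\vecinit)/\eps$, essentially conclusion~(2) (the strict inequality following from a slightly finer accounting of the initial step), and in particular this forces $\probm^\sigma(\ttime<\infty)=1$, which is conclusion~(1).

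I expect the main obstacles to be the two technical points glossed above: establishing integrability of $\{X_i\}_{i=0}^\infty$ without assuming $c$-bounded differences (handled by the inductive argument using integrability of the sampling distributions), and controlling the value of $X$ exactly at and after $\ttime$ so that $X_{\min(i,\ttime)}$ stays nonnegative and the supermartingale property survives the terminal self-loop. The latter is the delicate part, since the LRSM conditions constrain only configurations outside $C$; it is resolved by observing that the identity self-loop at the terminal location keeps $X$ constant once $C$ is reached, so no new constraint on $\lem$ inside $C$ is needed and the relevant nonnegativity is inherited from the step just before $\ttime$.
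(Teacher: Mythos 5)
Your construction of the process $X_i = \lem(\cfg{\sigma}{i})$, the integrability induction, the domination of the one-step conditional expectation by $\preexp{\lem}$, the use of the pure invariant to make the LRSM inequalities apply along all reachable configurations, and the supermartingale property of $Y_i = X_{\min(i,\ttime)} + \eps\cdot\min(i,\ttime)$ are all sound, and this part matches the paper's Lemma~\ref{thm:pure-supermart}. The gap is in the single inequality you then extract: you use $\E^\sigma[X_{\min(i,\ttime)}] \geq 0$, and this is false in general. On the event $\{\ttime \leq i\}$ the stopped process equals $X_{\ttime}$, the value of $\lem$ at the \emph{first configuration inside} $C$, and Definition~\ref{def:lrsm} imposes no constraint whatsoever on $\lem$ inside $C$; likewise Definition~\ref{def:ranking} requires nonnegativity only \emph{strictly before} the stopping time ($\ttime > j \Rightarrow X_j \geq 0$). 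Your proposed resolution --- that the terminal self-loop freezes the process so nonnegativity is ``inherited from the step just before $\ttime$'' --- conflates two different things: the self-loop does freeze $X$ from time $\ttime$ onwards, but the frozen value is the value \emph{at} $\ttime$, i.e.\ after the final transition into $C$, not the value at time $\ttime - 1$. This value is typically negative (in the paper's own Example~\ref{ex:lrsm}, $\lem(\loc_2,x) = x - \tfrac{1}{4}$ is negative at termination), and since the theorem assumes no bounded differences, $X_{\ttime}$ can undershoot $0$ by an amount whose expectation is not controlled by $\lem(\locinit,\vecinit)$. Then $\eps\cdot\E^\sigma[\min(i,\ttime)] \leq \lem(\locinit,\vecinit) - \E^\sigma[X_{\min(i,\ttime)}]$ no longer yields the claimed bound, and the monotone convergence step gives nothing; both conclusions (1) and (2) are left unproved.

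This overshoot-at-stopping problem is exactly what the paper's proof is organized around, which is why it does not attempt your direct computation. The paper invokes \cite[Theorem~1]{CFNH16:prob-termination}, whose hypotheses include a \emph{uniform lower bound} on the process ($X_i > K$ a.s.\ for some $K<0$); under that extra hypothesis your computation does go through, since then $\E^\sigma[X_{\min(i,\ttime)}] \geq K$ and one gets $\E^\sigma[\ttime] \leq (\E^\sigma[X_0]-K)/\eps$. The appendix then devotes Lemma~\ref{lem:stopped-ranking} to manufacturing such a bound: it replaces the process from time $\stime$ onwards by the constant $-\eps$ and argues that the result is still an $\eps$-ranking supermartingale, after which the cited theorem applies. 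To repair your argument you would have to perform the same kind of surgery: redefine the process at the stopping time to a controlled value and \emph{re-verify the supermartingale inequality across the stopping step}. That re-verification is where the real content lies and is not automatic, because replacing a possibly very negative value $X_{\ttime}$ by a constant can only increase the conditional expectation computed at time $\ttime-1$, so the $\eps$-decrease must be re-established rather than inherited. As written, the final inequality of your proposal is unjustified.
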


The previous result shows that if there exists  an $\eps$-LRSM 
supported by a pure invariant $I$, for $\eps>0$, then under each scheduler 
termination is ensured almost-surely.
We now demonstrate that pure invariants, though effective for almost-sure 
termination, are ineffective to answer probabilistic termination questions.

\lstset{language=affprob}
\lstset{tabsize=3}
\newsavebox{\nonterm}
\begin{lrbox}{\nonterm}
\begin{lstlisting}[mathescape]
$x:=30,y:=20$
while $y\geq 0$ do
	$x:=x+$sample$(\mathrm{Uniform}[-\frac{1}{4},1])$
	$y:=y+$sample$(\mathrm{Uniform}[-1,\frac{1}{4}])$
	while $x \leq 0$ do skip od
od
\end{lstlisting}
\end{lrbox}
\begin{figure}[t]
\usebox{\nonterm}
\begin{tikzpicture}[x = 1.7cm]

\node[det] (while) at (1.5,0)  {$\loc_0$};
\node[det] (p1) at (2.5,0) {$\loc_1$};
\node[det] (p2) at (3.5,0) {$\loc_2$};

\node[det] (lp) at (4.5,0) {$\loc_3$};
\node[det] (fin) at (0,0) {$\loc_4$};
\node (dum1) at (0,0.8) {};
\node (dum2) at (0,-0.8) {};
\draw[tran] (while) to node[font=\scriptsize,draw, fill=white, 
rectangle,pos=0.5] {$y<0$} (fin);
\draw[tran, loop, looseness = 5, in =-65, out = -115] (fin) to (fin);
\draw[tran] (while) -- node[font=\scriptsize,draw, fill=white, 
rectangle,pos=0.5, inner sep = 2pt] {$y\geq0$} node[label={[label 
distance=0.12cm]-90:{x:=...}}] {} (p1);
\draw[tran] (p1) to node[label={[label 
distance=0.12cm]-90:{y:=...}}] {} (p2);
\draw[tran] (lp) to  (p2);
\draw[tran] (p2) -- (p2|-dum2) -- node[font=\scriptsize,draw, fill=white, 
rectangle,pos=0.5, inner sep = 2pt] {$x\leq 0$} (lp|-dum2) -- (lp);
\draw[tran] (p2) -- (p2|-dum1) -- node[font=\scriptsize,draw, fill=white, 
rectangle,pos=0.5, inner sep = 2pt] {$x> 0$} (while|-dum1) -- (while);
\end{tikzpicture}
\caption{A program with infinitely many reachable configurations which 
terminates with high probability, but not almost 
surely, together with a sketch of its pCFG. }\label{fig:nonterm}
\end{figure}

\smallskip
\begin{example}\label{ex:nonterm}
Consider the program in Figure~\ref{fig:nonterm}. 
In each 
iteration of the outer loop each of the variables is randomly modified by 
adding a number drawn from some uniform distribution.
Average increase of $x$ in each iteration is $\frac{3}{8}$, while average decrease 
of $y$ is $-\frac{3}{8}$.  It is easy to see that a program does not 
terminate almost-surely: 
there is for instance a tiny but non-zero probability of $x$ being decremented 
by at least $\frac{1}{8}$ in 
each of the first 240 loop iterations, after which we are stuck in the infinite 
inner 
loop. On 
the other hand, the expectations above show that there is a ``trend'' of $y$ 
decreasing and $x$ increasing, and executions that follow this trend eventually 
decrement $y$ below $0$ without entering the inner loop. Hence, the 
probabilistic intuition tells us that the program terminates with a high 
probability. However, the techniques of this section cannot prove 
this high-probability termination, since existence of an $\eps$-LRSM (with 
$\eps>0$) supported by a pure invariant already implies a.s. termination,
and so no such $\eps$-LRSM can exist for the program. 
\end{example}

In the next section we generalize the notion of pure invariants to stochastic invariants
for probabilistic termination to resolve issues like Example~\ref{ex:nonterm}.

\section{Stochastic Invariants and Probabilistic Termination}
In this section we introduce {\em stochastic invariants}.
Intuitively, stochastic invariants are linear predicate maps extended with 
an upper bound on the probability of their violation.

\smallskip
\begin{definition}[Stochastic Linear Predicate Maps and Invariants]
Stochastic linear predicate maps and stochastic invariants are defined
as follows:
\begin{compactitem}
\item A stochastic linear predicate map (SLPM) for an 
\APP{} $P$ is a pair $(\pinv,p)$ where
$\pinv$ is a linear predicate map and
$p\in[0,1]$ is a probability.
\item A stochastic linear invariant (or just a stochastic invariant) for an 
\APP{} $P$ is an SLPM $(\pinv,p)$ for $P$ with the following property: if we 
denote by $\fail{\pinv}{}$ the set of all runs initiated in 
$(\locinit,\vecinit)$ that reach a configuration of 
the form $(\loc,\vec{x})$ with 
$\vec{x}\not\models\pinv(\loc)$, then for all schedulers $\sigma$ it 
holds 
$\probm^{\sigma}(\fail{\pinv}{}) \leq p$.
\end{compactitem}
\end{definition}

\smallskip
\begin{example}
Consider the \APP{} consisting of a single statement 
$x:=$\textbf{sample}$(\mathrm{Uniform}[0,2])$. Denoting $\loc^{\lin},\loc^{\lout}$ the 
initial and 
terminal 
location of this program, respectively, the stochastic LPM 
$(\pinv,\frac{1}{2})$, where 
$\pinv$ is such that $\pinv(\loc^{\lout})\equiv x\geq 1$ and $\pinv\equiv 
\mathit{true}$ is a stochastic invariant for the program. 
\end{example}

\smallskip
\begin{example}
\label{ex:unproven-invariant}
Consider the example in Figure~\ref{fig:nonterm} and a stochastic LPM 
$(\pinv,p)$ for the program such that $\pinv(\loc_2)\equiv x\geq 1$, 
$\pinv(\loc)\equiv \mathit{true}$ for all the other locations, and 
$p=10^{-5}$. 
Then it is possible to prove that $(\pinv,p)$ is a stochastic invariant for the 
program. 
\end{example}

Before presenting our result related to stochastic invariants, we first present
a technical result. 
Intuitively, the result states that if we have an $\eps$-LRSM for some set of 
configurations $C$ supported by some linear predicate map $\inv$, then we can 
use it to obtain a supermartingale which decreases by at least $\eps$ (on 
average) in each step until we reach either a configuration in $C$ or a 
configuration that does not satisfy $\inv$. In particular, if $\inv$ is a pure 
invariant, the resulting supermartingale decreases until we reach $C$.
This is a result about pure invariants, which we will extend to stochastic 
invariants.

\smallskip
\begin{lemma}
\label{thm:pure-supermart}
Let $P$ be an \APP{} and $\lem$ a linear $\eps$-ranking 
supermartingale for some set $\confset$ of  configurations of $\pCFG_P$ 
supported by $\inv$. 
Let $\neg \inv$ be the set of all configurations $(\loc,\vec{x})$ such that 
$\vec{x}\not\models 
\inv(\loc)$.
Finally, let
$\{X_i\}_{i=0}^{\infty}$ be a stochastic process defined by 
\[
X_i(\run) = \begin{cases}
\lem( 
\cfg{\sigma}{i}(\run)) & \text{ if $\treach{C\cup \neg \inv} \geq i$} \\
X_{i-1}(\run) & \text{otherwise}.
\end{cases}
\]
Then under each scheduler $\sigma$ the stochastic process 
$\{X_i\}_{i=0}^{\infty}$ is an $\eps$-ranking 
supermartingale for $\treach{C\cup \neg \inv}$. 
Moreover, if $\lem$ has $c$-bounded differences, then so has 
$\{X_i\}_{i=0}^{\infty}$.
In  particular, if $\inv$ is a pure invariant of $P$, then 
 $\{X_i\}_{i=0}^{\infty}$ is an $\eps$-ranking 
supermartingale for $\treach{C}$.
\end{lemma}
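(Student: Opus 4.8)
The plan is to fix an arbitrary scheduler $\sigma$ and verify, clause by clause, that the stopped process $\{X_i\}_{i=0}^{\infty}$ meets Definition~\ref{def:ranking} with stopping time $\stime := \treach{\confset \cup \neg\inv}$ (which is a stopping time w.r.t.\ the canonical filtration $\{\natfilt_i\}_{i=0}^\infty$, since reachability times are). \emph{Adaptedness} holds because $X_i$ is a deterministic function of $\cfg{\sigma}{0},\dots,\cfg{\sigma}{i}$: both the event $\{\stime \geq i\}$ and the value $\lem(\cfg{\sigma}{i})$ are $\natfilt_i$-measurable, and on $\{\stime < i\}$ the process is frozen at its value at time $\stime$. \emph{Integrability} of each $X_i$ follows by a routine induction on $i$ from the affineness of $\lem$ and the assumed integrability of the sampling distributions, so I will treat it as a standard technical point. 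For the \emph{non-negativity} clause, note that on $\{\stime > j\}$ the configuration $\cfg{\sigma}{j}$ lies outside $\confset \cup \neg\inv$; it therefore satisfies $\inv$ while being outside $\confset$, and the first bullet of Definition~\ref{def:lrsm} gives $X_j = \lem(\cfg{\sigma}{j}) \geq 0$.

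The core of the argument is the $\eps$-decreasing inequality~\eqref{eq:ranking-sup}, which I would prove separately on the $\natfilt_i$-measurable event $\{\stime > i\}$ and on its complement. On $\{\stime \leq i\}$ the indicator $\vec{1}_{\stime > i}$ vanishes and, by the freezing construction, $X_{i+1} = X_i$, so $\E^\sigma[X_{i+1}\mid\natfilt_i] = X_i$ trivially. On $\{\stime > i\}$ we have $\stime \geq i+1$, whence $X_i = \lem(\cfg{\sigma}{i})$, $X_{i+1} = \lem(\cfg{\sigma}{i+1})$, and $\cfg{\sigma}{i}$ lies outside $\confset$ and satisfies $\inv$; the second bullet of Definition~\ref{def:lrsm} then gives $\preexp{\lem}(\cfg{\sigma}{i}) \leq \lem(\cfg{\sigma}{i}) - \eps = X_i - \eps$.

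The step I expect to be the main obstacle is the one-step comparison $\E^\sigma[\lem(\cfg{\sigma}{i+1})\mid \natfilt_i] \leq \preexp{\lem}(\cfg{\sigma}{i})$, which must hold \emph{uniformly over all schedulers}. This is exactly where nondeterminism is absorbed: at a probabilistic location the conditional expectation equals the prescribed convex combination defining $\preexp{\lem}$, and at a deterministic function- or distribution-update it matches $\preexp{\lem}$ by linearity of expectation (using affineness of $\lem$ and $\E[d]$), whereas at a nondeterministic location or a set-update the scheduler's chosen distribution yields an average of the relevant $\lem$-values that is dominated by the maximum appearing in the definition of $\preexp{\lem}$. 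Granting this bound yields $\E^\sigma[X_{i+1}\mid\natfilt_i] \leq X_i - \eps$ on $\{\stime > i\}$, completing~\eqref{eq:ranking-sup}. The $c$-bounded-differences claim follows from the same case split: on $\{\stime > i\}$ the one-step transition out of $\cfg{\sigma}{i}$ starts at a configuration satisfying $\inv$, so the bounded-difference hypothesis on $\lem$ bounds $|X_{i+1}-X_i|$ by $c$, while on $\{\stime \leq i\}$ the difference is $0$. Finally, if $\inv$ is a \emph{pure} invariant then every configuration reachable from $(\locinit,\vecinit)$ satisfies $\inv$, so no run ever enters $\neg\inv$; hence $\treach{\neg\inv} = \infty$ on every run and $\stime = \treach{\confset}$, which specializes the statement to an $\eps$-ranking supermartingale for $\treach{\confset}$.
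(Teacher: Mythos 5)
Your proposal is correct and follows essentially the same route as the paper's proof: adaptedness and non-negativity from the first LRSM clause, the $\eps$-decrease via the case split on $\{\stime>i\}$ combined with the scheduler-uniform one-step bound $\E^\sigma[\lem(\cfg{\sigma}{i+1})\mid\natfilt_i]\leq\preexp{\lem}(\cfg{\sigma}{i})$ (which the paper formalizes through an auxiliary variable $Z$ dominated pointwise by $\preexp{\lem}$), and the observation $\treach{\neg\inv}=\infty$ for the pure-invariant case. The only substantive difference is that the paper carries out in full the integrability induction on $\E^\sigma[|\vec{x}^\sigma_i[j]|]$ that you correctly identify but defer as a standard technical point.
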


We now establish a crucial connection between stochastic invariants, linear 
ranking supermartingales, and quantitative reachability (and thus quantitative 
termination).

\smallskip
\begin{theorem}
\label{thm:quantitative-termination}
Let $\{(\pinv_1,p_1),\dots,(\pinv_n,p_n)\}$ be a set of stochastic linear 
invariants for \APP{} $P$, and let $\inv$ be a linear predicate map for $P$ 
such that for each location $\loc$ of $\pCFG_P$ the formula $\inv(\loc)$ 
is entailed by the formula $\pinv_1(\loc)\wedge\dots\wedge\pinv_n(\loc) $. If 
there 
exists a linear 
$\eps$-ranking supermartingale $\lem$ for a set of configurations $C$ such that 
$\lem$ is supported by $\inv$, then under each scheduler $\sigma$ it holds 
$\probm^{\sigma}(\treach{C}<\infty)\geq 1-\sum_{j=1}^{n}p_j$.
\end{theorem}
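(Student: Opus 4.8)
The plan is to reduce the statement to the standard theory of ranking supermartingales via Lemma~\ref{thm:pure-supermart}, and then to bound the probability of ``escaping'' the predicate map $\inv$ using the stochastic-invariant property together with a union bound. Throughout, fix an arbitrary scheduler $\sigma$ and let $\neg\inv$ denote the set of configurations $(\loc,\vec x)$ with $\vec x\not\models\inv(\loc)$.

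First I would apply Lemma~\ref{thm:pure-supermart} to $\lem$, $C$, and $\inv$. Since $\lem$ is a linear $\eps$-ranking supermartingale for $C$ supported by $\inv$, the lemma yields that the stopped process $\{X_i\}_{i=0}^\infty$ is an $\eps$-ranking supermartingale for the stopping time $\treach{C\cup\neg\inv}$, with $X_0=\lem(\locinit,\vecinit)$ a constant. I would then invoke the fundamental finiteness property of $\eps$-ranking supermartingales (the same property that underlies Theorem~\ref{thm:old-ranking}): for any stopping time $\stime$, an $\eps$-ranking supermartingale for $\stime$ with $\eps>0$ satisfies $\E^\sigma[\stime]\le X_0/\eps<\infty$, and hence $\probm^\sigma(\stime<\infty)=1$. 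Applied to $\stime=\treach{C\cup\neg\inv}$ this gives $\probm^\sigma(\treach{C\cup\neg\inv}<\infty)=1$; that is, almost every run either reaches $C$ or violates $\inv$ at some visited configuration.

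Second, I would bound the probability that a run reaches $\neg\inv$. The key observation is the entailment hypothesis: at every location $\loc$, the formula $\pinv_1(\loc)\wedge\cdots\wedge\pinv_n(\loc)$ entails $\inv(\loc)$. Contrapositively, whenever a configuration $(\loc,\vec x)$ satisfies $\vec x\not\models\inv(\loc)$ it must satisfy $\vec x\not\models\pinv_j(\loc)$ for at least one index $j$. Since every run in the probability space is initiated at $(\locinit,\vecinit)$, any run that reaches $\neg\inv$ therefore reaches a configuration violating some $\pinv_j$, i.e. it belongs to $\fail{\pinv_j}{}$ for some $j$; thus the set of runs reaching $\neg\inv$ is contained in $\bigcup_{j=1}^n\fail{\pinv_j}{}$. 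Because each $(\pinv_j,p_j)$ is a stochastic invariant we have $\probm^\sigma(\fail{\pinv_j}{})\le p_j$, so the union bound gives $\probm^\sigma(\text{run reaches }\neg\inv)\le\sum_{j=1}^n p_j$.

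Finally, I would combine the two estimates. Writing the almost-sure event $\{\treach{C\cup\neg\inv}<\infty\}$ as the union $\{\treach{C}<\infty\}\cup\{\text{run reaches }\neg\inv\}$ and using subadditivity, I obtain $1\le\probm^\sigma(\treach{C}<\infty)+\sum_{j=1}^n p_j$, whence $\probm^\sigma(\treach{C}<\infty)\ge 1-\sum_{j=1}^n p_j$, as claimed. I expect the only delicate point to be invoking the RSM finiteness property at the right level of generality, namely for the arbitrary stopping time $\treach{C\cup\neg\inv}$ rather than only for the termination time $\ttime$; but since the paper's definition of an $\eps$-ranking supermartingale is already phrased for general stopping times, this step is immediate, and the remaining reasoning (entailment $\Rightarrow$ failure event, union bound, subadditivity) is routine.
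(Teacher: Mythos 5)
Your proposal is correct and follows essentially the same route as the paper's own proof: apply Lemma~\ref{thm:pure-supermart} to obtain an $\eps$-ranking supermartingale for $\treach{C\cup\neg\inv}$, invoke the generalization of Theorem~\ref{thm:old-ranking} to arbitrary stopping times to conclude $C\cup\neg\inv$ is reached almost surely, and then bound the probability of reaching $\neg\inv$ by $\sum_{j=1}^n p_j$ via the stochastic-invariant property and a union bound. Your write-up is in fact slightly more explicit than the paper's, since you spell out the entailment step (every configuration violating $\inv$ violates some $\pinv_j$, so the runs reaching $\neg\inv$ lie in $\bigcup_j \fail{\pinv_j}{}$) and the final subadditivity argument, both of which the paper leaves implicit in ``from which the result follows.''
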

\begin{proof}
From Lemma~\ref{thm:pure-supermart} it follows that there is an 
$\eps$-ranking supermartingale $\{X_i\}_{i=0}^{\infty}$ for $\treach{C\cup \neg 
\inv}$. We can prove a generalization of Theorem~\ref{thm:old-ranking} for 
other stopping times apart from $\ttime$, which gives us that under each 
scheduler 
$\sigma$ the set of configurations $C \cup \neg \inv$ is reached with 
probability 1, where $\neg \inv$ is the set of all $(\loc,\vec{x})$ such that 
$\vec{x}\not\models\inv(\loc)$. But since each $(\pinv_j,p_j)$ is a stochastic 
invariant, the 
probability that $\neg \pinv_j$ is reached is at most $p_j$ under each 
scheduler. Using union bound the probability of reaching 
$\bigcup_{j=1}^{n}\neg\pinv_j$ is at most $\sum_{j=1}^{n}p_j$, from which the 
result follows.
\end{proof}

\begin{example}
Let $(\pinv,10^{-5})$ be the stochastic invariant from 
Example~\ref{ex:unproven-invariant} (concerning Figure~\ref{fig:nonterm}). For 
the corresponding program we can easily infer a pure invariant $\inv'$ such that
$\inv'(\loc_1)\equiv y \geq 0$, $\inv'(\loc_2)=\inv'(\loc_0)\equiv y \geq -1$ 
and $\inv'(\loc_4)\equiv\inv'(\loc_3)\equiv\mathit{true}$ (actually, standard 
methods would likely infer stronger pure invariants, but $\inv'$ is sufficient 
for the sake of example).
Consider a LEM $\lem$ defined as follows $\lem(\loc_0) = 8y+9 $, 
$\lem(\loc_1)=8y+8$, $\lem(\loc_2)=8y+10$, 
$\lem(\loc_3)=8y+11$ and $\lem(\loc_4)=-1$. Then 
$\lem$ is a $1$-LRSM for the set of terminal configurations supported by LPM
$\inv = \inv'\wedge \pinv$ (where the conjunction is 
locationwise). Now consider a set $\{(\inv',0),(\pinv,10^{-5})\}$. From 
Example~\ref{ex:unproven-invariant} and from the fact that $\inv'$ is a pure 
invariant it follows that both members of the set are stochastic invariants, 
and clearly $\inv'\wedge \pinv$ entails $\inv$. From 
Theorem~\ref{thm:quantitative-termination} 
it follows that the 
program 
terminates with 
probability at least $0.99999$.
\end{example}

Theorem~\ref{thm:quantitative-termination} shows a way in which probabilistic 
reachability and termination properties of \APP{s} can be proved by use of 
ranking supermartingales and stochastic invariants. As highlighted in 
Example~\ref{ex:unproven-invariant}, the crucial question now is 
proving the existence of suitable stochastic invariants for \APP{}. 
While there are various methods of obtaining pure linear 
invariants~\cite{CS02:practical-termination}, e.g. those based on abstract 
interpretation~\cite{DBLP:conf/popl/CousotC77}, 
constraint solving~\cite{DBLP:conf/cav/BradleyMS05} etc., these 
methods do not support reasoning about 
probabilities of a given assertion being satisfied, and thus they are not 
sufficient for obtaining stochastic invariants. In the next section we 
propose a framework for reasoning about stochastic invariants using 
repulsing supermartingales.

\section{Proving Stochastic Invariance with Repulsing Supermartingales}\label{sec:repsm}

Consider that we want to use stochastic invariants to prove that some \APP{} 
$P$ terminates with a high probability, by using 
Theorem~\ref{thm:quantitative-termination}. We need to achieve two things: 
\begin{compactenum}
\item[a.] obtain a linear predicate map $\pinv$ which supports some linear 
ranking supermartingale 
for the termination time $\ttime$ of $P$; and 
\item[b.] obtain an upper bound $p$ on the 
probability that $\pinv$ is violated. 
\end{compactenum}
The part a. is not in any way related to the probability of $p$ being 
satisfied, and 
hence we can aspire to adapt some of the techniques for generation of pure 
invariants. 
The part b. is substantially trickier, since it requires quantitative 
reasoning 
about the highly complex stochastic process $\{\cfg{\sigma}{i}\}_{i=0}^{\infty}$. To achieve 
this task, we introduce a notion of \emph{$\eps$-repulsing} supermartingale. 

\smallskip\noindent{\em Intuitive idea of repulsing supermartingales.}
Intuitively, $\eps$-repulsing supermartingales are again required to decrease 
by 
at least $\eps$ on average 
in every step until some stopping time, e.g. until reaching some set $C$ of 
configurations. But now, instead of requiring the value of the process to be 
non-negative until $C$ reached, we require it to be non-negative 
\emph{upon} reaching $C$. This is because we typically work with repulsing 
supermartingales whose initial value is non-positive. Then, intuitively  an 
$\eps$-repulsing supermartingale is driven 
\emph{away} from 
non-negative values by at least $\eps$ per step, which provides a probabilistic 
argument for showing that the $C$ is reached with small probability. 

\smallskip
\begin{definition}[Repulsing supermartingale]
Let $(\Omega,\mathcal{F},\probm)$ be a probability space, 
$\{\mathcal{F}_i\}_{i=0}^{\infty}$ a filtration of $\mathcal{F}$, $\stime$ a 
stopping time w.r.t. that filtration, and $\eps\geq 
0$. 
A supermartingale $\{X_i\}_{i=0}^{\infty}$ (w.r.t. 
$\{\mathcal{F}_i\}_{i=0}^{\infty}$) is \emph{$\eps$-repulsing} for 
$\stime$ if it is $\eps$-decreasing until $\stime$ and for 
each $\omega\in\Omega, j\in\Nset_0$ it holds $\stime(\omega)=j \Rightarrow 
X_j(\omega)\geq 0$.
\end{definition}

To apply repulsing supermartingales to concrete programs, we again define the 
important special case of {\em linear repulsing supermartingales}.

\smallskip
\begin{definition}[Linear repulsing supermartingale]\label{def:lrepsm}
Let $P$ be an \APP{} such that $\pCFG_P = 
(\locs,\pvars,\locinit,\vecinit,\transitions,\probdist,\guards)$, let $\inv$ be 
a linear predicate map and let 
$\confset\subseteq \locs \times \Rset^{|\pvars|}$ be some set of 
configurations. 
A linear $\eps$-repulsing supermartingale ($\eps$-LRepSM) for a set $\confset$ 
supported by $\inv$ is an LEM $\lem$ for $P$ such that for all configurations 
$(\loc,\vec{x})$ of $\pCFG_P$ such that $\vec{x}\models \inv(\loc)$ the 
following holds
\begin{compactitem}
\item
if $(\loc,\vec{x})\in \confset$, then $\lem(\loc,\vec{x})\geq 0$
\item 
if $(\loc,\vec{x})\not\in \confset$ and $\loc$ is not a terminal location, then
$\preexp{\lem}(\loc,\vec{x}) \leq \lem(\loc,\vec{x})-\eps$,
\end{compactitem}
An $\eps$-LRepSM supported by $\inv$ has 
$c$-bounded differences if for each pair of locations $\loc,\loc'$, each 
transition $\tau$ from $\loc,\loc'$, and each pair of
configurations $(\loc,\vec{x})$, $(\loc',\vec{x}')$ such that 
$\vec{x}\models\inv(\loc)\wedge G(\tau)$ and $(\loc',\vec{x}')$ can be produced 
by performing $\tau$ in $(\loc,\vec{x})$ it holds 
$|\lem(\loc,\vec{x})-\lem(\loc',\vec{x}')|\leq c$.
\end{definition}

\lstset{language=affprob}
\lstset{tabsize=3}
\newsavebox{\nontermnew}
\begin{lrbox}{\nontermnew}
\begin{lstlisting}[mathescape]
$x:=10$
while $x\geq 0$ do
	if prob(0.5) then $x:=x+1$
		else $x:=x-2$
	fi
od
\end{lstlisting}
\end{lrbox}
\begin{figure}[t]

\usebox{\nontermnew}
\begin{center}
\begin{tikzpicture}[x = 1.8cm]

\node[det] (while) at (1.5,0)  {$\loc_0$};
\node[ran] (prob) at (3,0) {$\loc_1$};
\node[det] (fin) at (0,0) {$\loc_2$};
\draw[tran] (while) to node[font=\scriptsize,draw, fill=white, 
rectangle,pos=0.5] {$x<0$} (fin);
\draw[tran, loop, looseness = 5, in =-65, out = -115] (fin) to (fin);
\draw[tran] (while) to node[font=\scriptsize,draw, fill=white, 
rectangle,pos=0.5] {$x\geq 0$} (prob);

\node (dum1) at (0,0.8) {};
\node (dum2) at (0,-0.8) {};

\draw[tran] (prob) -- node[font=\scriptsize,draw, fill=white, 
rectangle,pos=0.5, inner sep = 1pt] {$\frac{1}{2}$} (prob|-dum1) -- node[auto] 
{x:=x-2}
(while|-dum1)--(while);
\draw[tran] (prob) -- node[font=\scriptsize,draw, fill=white, 
rectangle,pos=0.5, inner sep = 1pt] {$\frac{1}{2}$} (prob|-dum2) 
-- node[auto,swap] {x:=x+1} (while|-dum2)--(while);
\end{tikzpicture}
\end{center}
\caption{A probabilistic program example, with the accompanying 
pCFG.}\label{fig:lrepsm}
\end{figure}

\smallskip
\begin{example}[Illustration of LRepSM]\label{ex:lrepsm}
Consider the program shown in Figure~\ref{fig:lrepsm}, with initial value 
$x:=10$.
Consider a linear predicate map $\pinv$ such that 
$\pinv(\loc_0)\equiv x \leq 500$ and $\pinv(\loc_1)\equiv\pinv(\loc_2)\equiv 
\mathit{true}$.
Consider an LEM $\eta$ that assigns to each pair $(\loc_i,x)$ a value $7\cdot 
x+d_i$, where $d_i$ is the $i$-th component of the ordered tuple 
$(-3499,-3500,-3500)$. 
It is straightforward to verify that $\eta$ is a $1$-LRepSM for $\neg\pinv$ 
supported 
by a trivial pure invariant assigning $\mathit{true}$ to each location.
\end{example}

The connection between $\eps$-LRepSMs and general $\eps$-repulsing 
supermartingales is similar as for their ranking variants 
(Lemma~{\ref{thm:pure-supermart}}). That is, from 
$\eps$-LRepSMs we can obtain a stochastic process which is a supermartingale 
w.r.t. the canonical filtration, which decreases at least by $\eps$ on average 
until the some set $C$ is reached, and upon reaching $C$ its value is 
non-negative.

\smallskip
\begin{lemma}
\label{thm:rep-supermart-connect}
Let $P$ be an \APP{} and $\lem$ an $\eps$-LRepSM 
for some set $\confset$ of  configurations of $\pCFG_P$ 
supported 
by some linear predicate map $\inv$. Let $\neg \inv$ be the set of all 
configurations $(\loc,\vec{x})$ 
such that $\val_\vec{x}\not\models 
\inv(\loc)$.
Finally, let
$\{X_i\}_{i=0}^{\infty}$ be a stochastic process defined by 
\[
X_i(\run) = \begin{cases}
\lem( 
\cfg{\sigma}{i}(\run)) & \text{ if $\treach{C\cup \neg \inv} \geq i$} \\
X_{i-1}(\run) & \text{otherwise}.
\end{cases}
\]
Then under each scheduler 
$\sigma$ the stochastic process 
$\{X_i\}_{i=0}^{\infty}$ is an $\eps$-repulsing supermartingale for 
$\treach{C\cup \neg \inv}$. 
Moreover, if $\lem$ has $c$-bounded differences, then so has 
$\{X_i\}_{i=0}^{\infty}$.
In  particular, if $\inv$ is a pure invariant of $P$, then 
 $\{X_i\}_{i=0}^{\infty}$ is an $\eps$-repulsing supermartingale for 
 $\treach{C}$.
\end{lemma}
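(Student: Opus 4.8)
The plan is to follow the proof of Lemma~\ref{thm:pure-supermart} almost verbatim, since the stopped process $\{X_i\}$ is constructed identically and the $\eps$-decreasing requirement coincides with the ranking case; the \emph{only} genuine change lies in where the non-negativity constraint is imposed. First I would check the three supermartingale conditions. Adaptedness holds because $\{\treach{C\cup\neg\inv}\geq i\}$ is the complement of $\{\treach{C\cup\neg\inv}\leq i-1\}\in\natfilt_{i-1}$, so $X_i$ is a function of the first $i+1$ configurations and hence $\natfilt_i$-measurable. Integrability $\E[|X_i|]<\infty$ follows by induction on $i$: each $X_i$ is either a frozen value that was already integrable or the affine image $\lem(\cfg{\sigma}{i})$ of the program-variable vector, which is integrable at every finite step because each update applies an affine map or draws from an integrable distribution.

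The core of the argument is the inequality $\E[X_{i+1}\mid\natfilt_i]\leq X_i-\eps\cdot\vec{1}_{\stime>i}$ with $\stime=\treach{C\cup\neg\inv}$. On $\{\stime\leq i\}$ the process has frozen, so $X_{i+1}=X_i$ and the indicator vanishes. On $\{\stime>i\}$ the configuration $\cfg{\sigma}{i}=(\loc,\vec{x})$ satisfies $\vec{x}\models\inv(\loc)$, lies outside $C$, and sits at a non-terminal location; here the definition of pre-expectation gives $\E[X_{i+1}\mid\natfilt_i]=\preexp{\lem}(\loc,\vec{x})$, using at non-deterministic locations that the scheduler's expected successor value is dominated by the maximising pre-expectation, and the second clause of the LRepSM definition then yields $\preexp{\lem}(\loc,\vec{x})\leq\lem(\loc,\vec{x})-\eps=X_i-\eps$. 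This simultaneously delivers the supermartingale property and the $\eps$-decreasing property.

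The step that departs from the ranking case is the repulsing clause $\stime(\run)=j\Rightarrow X_j(\run)\geq 0$. If the run first meets $C\cup\neg\inv$ by entering $C$, the stopping configuration $(\loc,\vec{x})$ still satisfies $\vec{x}\models\inv(\loc)$, so the first LRepSM clause gives $X_j=\lem(\loc,\vec{x})\geq 0$, exactly as needed. The hard part is entry through $\neg\inv$: the LRepSM controls the sign of $\lem$ only on $C$, not on $\neg\inv$, so non-negativity is not automatic there, and I expect this to be the main obstacle. The intended way around it is the ``in particular'' specialisation: when $\inv$ is a \emph{pure} invariant, no run starting at $(\locinit,\vecinit)$ ever reaches $\neg\inv$, so $\treach{C\cup\neg\inv}=\treach{C}$ along every run and every stopping configuration lies in $C$, eliminating the difficulty. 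The $c$-bounded-differences claim is then routine: before freezing, consecutive values are $\lem$ at two configurations joined by a single enabled transition respecting $\inv$ and the relevant guard, to which the LRepSM bound $|\lem(\loc,\vec{x})-\lem(\loc',\vec{x}')|\leq c$ applies, while after freezing all differences are zero.
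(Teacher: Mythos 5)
Your proposal follows the paper's own route exactly: the paper gives a detailed proof only of the ranking version (Lemma~\ref{thm:pure-supermart}) --- adaptedness, integrability by induction on the step index, and the $\eps$-decrease via the pre-expectation, precisely the three checks in your first two paragraphs --- and then disposes of the present lemma with the remark that the proof is ``almost identical.'' So up to that point there is nothing to distinguish: you and the paper run the same argument.

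Where you differ is that you are more careful than the paper. The obstacle you isolate in your last paragraph is genuine, and the paper's proof-by-analogy silently skips it: the repulsing clause demands $X_j\geq 0$ at the moment $C\cup\neg\inv$ is first hit, but an $\eps$-LRepSM constrains the sign of $\lem$ only on configurations of $C$ that \emph{satisfy} $\inv$, and imposes nothing whatsoever on $\neg\inv$. For a general (non-pure) LPM $\inv$ the claimed conclusion is in fact false, not merely hard to prove. For instance, take a pCFG with a single non-terminal deterministic location $\loc_0$ whose only transition performs $x:=x+1$, initial value $x=0$, $C=\{(\loc_0,x)\mid x\geq 10\}$, $\inv(\loc_0)\equiv x\leq 5$, and $\lem(\loc_0)=-x-1$. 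Both defining clauses of an LRepSM are only checked where $x\leq 5$: the first is vacuous there, and the second holds with $\eps=1$ since $\preexp{\lem}(\loc_0,x)=-x-2\leq(-x-1)-1$. Yet the unique run first hits $C\cup\neg\inv$ at $x=6$, where $X_6=\lem(\loc_0,6)=-7<0$, so the stopped process is \emph{not} $\eps$-repulsing for $\treach{C\cup\neg\inv}$. Hence the only provable part of the lemma is the part you prove: the ``in particular'' case of a pure invariant, where $\neg\inv$ is unreachable, $\treach{C\cup\neg\inv}=\treach{C}$, and every stopping configuration lies in $C$ and satisfies $\inv$, so the first LRepSM clause yields non-negativity. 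Fortunately, that specialization is the only form the paper ever invokes (Lemma~\ref{lem:azuma-use}, Theorem~\ref{thm:repulsing-use-main}, Theorem~\ref{thm:non-termination}, and Corollary~\ref{coro:main} all assume a pure supporting invariant), so restricting to it loses nothing downstream; but you should state explicitly that your proof establishes the lemma only under that hypothesis, and that the unrestricted statement cannot be repaired.
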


We now show how to obtain an upper bound on the probability 
of an invariant failure via repulsing supermartingales. Techniques used within 
the 
proof of Theorem~\ref{thm:old-ranking} (which are similar to the proof of Lemma 
5.5 
in~\cite{HolgerPOPL}) are not applicable, as they crucially rely on the fact 
that the supermartingale is non-negative before reaching $C$. Instead, we use a 
powerful tool of Martingale theory called Azuma's inequality.

\smallskip
\begin{theorem}[Azuma's inequality~\cite{Azuma67:ineq}]
Let $(\Omega,\mathcal{F},\probm)$ be a probability space and 
$\{X_i\}_{i=0}^{\infty}$ a supermartingale w.r.t. $\mathcal{F}$ with 
$c$-bounded differences. Then for each 
$n\in \Nset_0$ and each $\lambda >0$ it holds
\[
\probm(X_n - X_0 \geq \lambda) \leq e^{-\frac{\lambda^2}{2nc^2}}.
\]
\end{theorem}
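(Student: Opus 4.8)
The plan is to prove this by the exponential-moment (Chernoff) method combined with the martingale structure, which is the classical route to Azuma's inequality. Throughout I work with the filtration $\{\mathcal{F}_i\}_{i=0}^\infty$ underlying the supermartingale. Write $D_i = X_i - X_{i-1}$ for the martingale differences, so that $X_n - X_0 = \sum_{i=1}^n D_i$; the supermartingale condition~\eqref{eq:supermart-def} gives $\E[D_i \mid \mathcal{F}_{i-1}] \leq 0$, and the $c$-bounded-differences assumption gives $|D_i| < c$. For any fixed $t > 0$, Markov's inequality applied to the nonnegative random variable $e^{t(X_n - X_0)}$ yields
\[
\probm(X_n - X_0 \geq \lambda) \leq e^{-t\lambda}\, \E\bigl[e^{t(X_n - X_0)}\bigr],
\]
so it suffices to bound the moment generating function and then optimize over $t$.

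The key step, and the main technical obstacle, is a per-step bound of Hoeffding type: for every $i$,
\[
\E\bigl[e^{t D_i} \mid \mathcal{F}_{i-1}\bigr] \leq e^{t^2 c^2 / 2}.
\]
I would derive this from convexity of $x \mapsto e^{tx}$ on $[-c, c]$: since $D_i \in (-c, c)$, we have $e^{t D_i} \leq \tfrac{c - D_i}{2c} e^{-tc} + \tfrac{c + D_i}{2c} e^{tc}$. Taking $\E[\,\cdot \mid \mathcal{F}_{i-1}]$ and setting $\mu = \E[D_i \mid \mathcal{F}_{i-1}]$, the right-hand side becomes $\tfrac{c - \mu}{2c} e^{-tc} + \tfrac{c + \mu}{2c} e^{tc}$, which is linear and strictly increasing in $\mu$ for $t > 0$ (its derivative in $\mu$ is $\sinh(tc)/c > 0$). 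Since $\mu \leq 0$ for a supermartingale, the expression is maximized at $\mu = 0$, reducing the bound to $\cosh(tc) \leq e^{t^2 c^2/2}$, a scalar inequality verified by term-by-term comparison of Taylor series. This monotonicity argument, which lets the supermartingale inequality $\mu \leq 0$ be absorbed into the martingale equality case $\mu = 0$, is the only delicate point.

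It then remains to iterate and optimize. Using the tower property and pulling out the $\mathcal{F}_{n-1}$-measurable factor $e^{t \sum_{i=1}^{n-1} D_i}$,
\[
\E\bigl[e^{t(X_n - X_0)}\bigr] = \E\bigl[e^{t \sum_{i=1}^{n-1} D_i}\, \E[e^{t D_n} \mid \mathcal{F}_{n-1}]\bigr] \leq e^{t^2 c^2/2}\, \E\bigl[e^{t \sum_{i=1}^{n-1} D_i}\bigr],
\]
and repeating this $n$ times gives $\E[e^{t(X_n - X_0)}] \leq e^{n t^2 c^2/2}$. Substituting into the Markov bound yields $\probm(X_n - X_0 \geq \lambda) \leq \exp(-t\lambda + n t^2 c^2/2)$, and minimizing the exponent over $t > 0$ at $t = \lambda/(n c^2)$ produces the claimed bound $e^{-\lambda^2/(2 n c^2)}$. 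Beyond the Hoeffding lemma I anticipate no obstacle: the boundedness of the differences guarantees integrability of all the exponentials, so every conditional expectation above is well defined.
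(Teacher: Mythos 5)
Your proof is correct. Note, however, that the paper itself offers no proof of this statement: it is quoted as a known result (Azuma's inequality, cited to the 1967 paper) and used as a black box, so there is nothing internal to compare your argument against. What you have written is the classical Chernoff-bound proof, and you have adapted it correctly to the supermartingale setting, which is the one place where care is genuinely needed: the textbook Hoeffding lemma assumes $\E[D_i \mid \mathcal{F}_{i-1}] = 0$, whereas here one only has $\mu = \E[D_i \mid \mathcal{F}_{i-1}] \leq 0$. Your monotonicity observation --- that the convexity upper bound $\frac{c-\mu}{2c}e^{-tc} + \frac{c+\mu}{2c}e^{tc} = \cosh(tc) + \frac{\mu}{c}\sinh(tc)$ is increasing in $\mu$ for $t>0$, so $\mu \leq 0$ only helps --- disposes of this cleanly, and it is exactly why the bound is one-sided (the corresponding lower-tail bound would fail for supermartingales). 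The remaining steps (boundedness of $|D_i|$ guaranteeing all exponential moments exist, pulling out the $\mathcal{F}_{n-1}$-measurable factor, the optimization $t = \lambda/(nc^2)$) are all sound; the only unaddressed point is the degenerate case $n=0$, where the right-hand side involves division by zero and the inequality is vacuous since $\probm(0 \geq \lambda) = 0$ for $\lambda > 0$.
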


Intuitively, Azuma's inequality provides exponentially decreasing tail bound on 
the probability that a supermartingale exhibits a large deviation from its 
expected value. In the following lemma (inspired by martingale use 
in~\cite{BBEK:oc-approx-IC}) the Azuma's inequality is used to obtain 
exponentially decreasing bound on 
probability that the set $C$ is reached in exactly $n$ steps.
\smallskip
\begin{lemma}
\label{lem:azuma-use}
Let $C$ be a set of configurations of an \APP{} $P$. Denote by $F_n$ the 
set of all runs $\run$ 
such that $\treach{C}(\run)=n$. Suppose that there 
exist $\eps>0$, $c>0$ and a linear $\eps$-repulsing supermartingale $\lem$ for 
$C$ supported by some pure invariant $\inv$ such that $\lem$ has 
$c$-bounded differences and $\lem(\locinit,\vecinit)< 0$. Then under each 
scheduler $\sigma$ it holds
\[
\probm^\sigma(F_n) \leq \alpha \cdot \gamma^n,
\]
where $\gamma = e^{-\frac{\eps^2}{2(c+\eps)^2}}$,
$\alpha= e^{\frac{\epsilon \cdot m_0}{(c+\eps)^2}}$ and 
$m_0=\lem(\locinit,\vecinit)$.
\end{lemma}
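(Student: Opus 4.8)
The plan is to combine Lemma~\ref{thm:rep-supermart-connect} with Azuma's inequality applied to a suitably shifted process. First, since $\inv$ is a pure invariant, the set $\neg\inv$ is unreachable, so $\treach{C\cup\neg\inv}=\treach{C}=:\stime$, and Lemma~\ref{thm:rep-supermart-connect} hands us, for each fixed scheduler $\sigma$, a stochastic process $\{X_i\}_{i=0}^\infty$ that is an $\eps$-repulsing supermartingale for $\stime$ with $c$-bounded differences, with $X_0=\lem(\locinit,\vecinit)=m_0<0$. Recall that $\{X_i\}$ freezes upon reaching $C$ and that, by the repulsing property, $X_{\stime}\geq 0$ whenever $\stime<\infty$; also $\stime\geq 1$ almost surely, since $\stime=0$ would force $X_0\geq 0$, contradicting $m_0<0$. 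In particular $\probm^\sigma(F_0)=0$, so we may assume $n\geq 1$.

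The key construction is the auxiliary process $Z_i := X_i + \eps\cdot\min(i,\stime)$, which absorbs the per-step drift $\eps$ into an additive correction. I would verify that $\{Z_i\}$ is a genuine supermartingale with respect to the canonical filtration $\{\natfilt_i\}$: on the $\natfilt_i$-measurable event $\{\stime>i\}$ the $\eps$-decreasing inequality $\E[X_{i+1}\mid\natfilt_i]\leq X_i-\eps$ exactly cancels the increment $\eps$ of the correction term, while on $\{\stime\leq i\}$ both $X_i$ and the correction term are frozen, so $Z$ is constant there. Moreover, since the correction changes by at most $\eps$ in one step and $|X_{i+1}-X_i|\leq c$, the process $\{Z_i\}$ has $(c+\eps)$-bounded differences; adaptedness and integrability are immediate.

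With $Z$ in hand, the tail estimate follows. On the event $F_n$ we have $\stime=n$, hence $Z_n=X_n+\eps n\geq \eps n$ (using $X_n=X_{\stime}\geq 0$), whereas $Z_0=X_0=m_0$. Thus $F_n\subseteq\{Z_n-Z_0\geq \eps n-m_0\}$, and $\lambda:=\eps n-m_0>0$ because $m_0<0$. Applying Azuma's inequality to the supermartingale $\{Z_i\}$ with bound $c+\eps$ on its differences yields
\[
\probm^\sigma(F_n)\;\leq\;\probm^\sigma\!\left(Z_n-Z_0\geq \eps n-m_0\right)\;\leq\; e^{-\frac{(\eps n-m_0)^2}{2n(c+\eps)^2}}.
\]
Finally I would expand $(\eps n-m_0)^2=\eps^2 n^2-2\eps m_0 n+m_0^2$ and split the exponent into $-\frac{\eps^2 n}{2(c+\eps)^2}$, $\frac{\eps m_0}{(c+\eps)^2}$, and $-\frac{m_0^2}{2n(c+\eps)^2}$; the first two reproduce exactly $\gamma^n$ and $\alpha$, while the last contributes a factor $e^{-m_0^2/(2n(c+\eps)^2)}\leq 1$ that can simply be dropped, giving $\probm^\sigma(F_n)\leq\alpha\gamma^n$.

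I expect the main obstacle to be the careful justification that $\{Z_i\}$ is a supermartingale across the stopping time $\stime$ — i.e. that the additive drift correction $\eps\min(i,\stime)$ is chosen so as to neutralize the $\eps$-decrease before $\stime$ while staying frozen (together with $X$) afterwards — and the bookkeeping that this preserves $(c+\eps)$-bounded differences; the remaining algebraic simplification of the Azuma exponent is routine.
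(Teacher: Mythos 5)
Your proof is correct and follows essentially the same route as the paper's: both pass from $\lem$ to the frozen process of Lemma~\ref{thm:rep-supermart-connect}, form the drift-corrected process (the paper's $\tilde{X}_i$ is exactly your $Z_i = X_i + \eps\cdot\min(i,\stime)$, just written recursively), and apply Azuma's inequality with $(c+\eps)$-bounded differences followed by the same expansion of $(\eps n - m_0)^2$. The only differences are cosmetic: your explicit handling of $n=0$ and of the discarded factor $e^{-m_0^2/(2n(c+\eps)^2)}$ makes explicit what the paper leaves implicit.
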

\begin{proof}[Key proof idea] 
We use $\lem$ to obtain a supermartingale 
$\{\tilde{X}_{i}\}_{i=0}^{\infty}$ with $c$-bounded differences such that for 
for each run $\run \in F_n$ it holds $\tilde{X}_n (\run) - \tilde{X}_0(\run) 
\geq n \cdot \eps -m_0$. We then apply the Azuma's inequality on 
$\{\tilde{X}_{i}\}_{i=0}^{\infty}$ to get the desired bound on the probability 
of $\tilde{X}_n (\run) - \tilde{X}_0(\run) 
\geq n \cdot \eps -m_0$ and thus also on the probability of $F_n$.
\end{proof}

\begin{proof}
Using Lemma~\ref{thm:rep-supermart-connect} we get from $\lem$ a stochastic 
process 
$\{X_{i}\}_{i=0}^{\infty}$ which is, for each scheduler $\sigma$, an 
$\eps$-repulsing
supermartingale 
 for the 
stopping time $\treach{C}$ with $c$-bounded differences. Now we define 
a stochastic 
process 
$\{\tilde{X}_i\}_{i=0}^{\infty}$ by putting 
\[
\tilde{X}_i(\run) = \begin{dcases}
X_i(\run) + i\cdot \eps & \text{if $\treach{C}(\run)\geq i$}\\
\tilde{X}_{i-1}(\run) & \text{otherwise}.
\end{dcases}
\]
Since $\{X_i\}_{i=0}^{\infty}$ is $\eps$-decreasing until $\treach{C}$, 
the process  
$\{\tilde{X}_i\}_{i=0}^{\infty}$ is a supermartingale. Moreover, it is easy to 
check that $\{\tilde{X}_i\}_{i=0}^{\infty}$ has $(c+\eps)$-bounded differences. 
Now for each $\run$ we have 
$\tilde{X}_0 (\run)=\lem(\locinit,\vecinit) < 0$. Moreover, from the 
definitions of 
$\{X_i\}_{i=0}^{\infty}$ and 
$\{\tilde{X}_i\}_{i=0}^{\infty}$ we get that $\run \in F_n$ implies 
$\tilde{X}_n(\run) = X_n(\run) + n\cdot \eps =\lem(\cfg{\sigma}{n}(\run)) + 
n\cdot \eps \geq n\cdot \eps$ (since $\lem$ assigns non-negative value to 
configurations in $C$ and $\run\in F_n$ is within $C$ in step $n$), and adding 
$-\tilde{X}_0(\run)$ to both sides yields $\run \in F_n \Rightarrow 
\tilde{X}_n(\run) 
 - \tilde{X}_0(\run)\geq n\cdot \eps -m_0$;
recall $m_0=\lem(\locinit,\vecinit)=\tilde{X}_0(\run)$. 
Hence, for each scheduler 
 $\sigma$ we have
\begin{equation}
\label{eq:azuma-prob-bound}
\probm^\sigma(F_n) \leq \probm^\sigma(\tilde{X}_n - \tilde{X}_0 \geq n\cdot 
\eps -m_0).
\end{equation}
Applying the Azuma's inequality for $\{\tilde{X}_i\}_{i=0}^{n}$ 
on~\eqref{eq:azuma-prob-bound} we get
\begin{align*}
\probm^\sigma(F_n) &\leq \probm(\tilde{X}_n - \tilde{X}_0 \geq n\cdot 
\eps-m_0)\\ &\leq  
\alpha \cdot e^{-\frac{n^2\cdot \eps^2}{2n(c+\eps)^2}} = \alpha \cdot \gamma^n,
\end{align*}
where $\alpha= e^{\frac{\epsilon \cdot m_0}{(c+\eps)^2}}$. 
\end{proof}

Using the above lemma we can bound the probability of reaching $C$ by a 
geometric series which can be easily evaluated. 

\smallskip
\begin{theorem}
\label{thm:repulsing-use-main}
Let $C$ be a set of configurations of an \APP{} $P$. Suppose that there 
exist $\eps>0$, $c>0$ and a linear $\eps$-repulsing supermartingale $\lem$ for 
$C$ supported by some pure invariant $\inv$ such that $\lem$ has 
$c$-bounded differences and $\lem(\locinit,\vecinit)< 0$. Then under each 
scheduler 
$\sigma$ it 
holds
\begin{equation}
\label{eq:quantitative-bound}
\probm^\sigma(\treach{C}<\infty)\leq 
\alpha \cdot 
\frac{\gamma^{\left\lceil{|\lem(\locinit,\vecinit)|}/{c}\right\rceil}}{1-\gamma},
\end{equation}
where $\gamma = e^{-\frac{\eps^2}{2(c+\eps)^2}}$ and 
$\alpha= e^{\frac{\epsilon \cdot m_0}{(c+\eps)^2}}$ and 
$m_0=\lem(\locinit,\vecinit)$ is the initial value.
\end{theorem}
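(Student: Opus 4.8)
The plan is to decompose the event $\{\treach{C} < \infty\}$ according to the exact step at which $C$ is first reached, apply Lemma~\ref{lem:azuma-use} term by term, and sum the resulting geometric series. Concretely, writing $F_n$ for the set of runs with $\treach{C}(\run) = n$ as in Lemma~\ref{lem:azuma-use}, the sets $F_0, F_1, F_2, \dots$ are pairwise disjoint and their union is exactly $\{\run \mid \treach{C}(\run) < \infty\}$. By countable additivity of $\probm^\sigma$ we therefore have $\probm^\sigma(\treach{C} < \infty) = \sum_{n=0}^{\infty} \probm^\sigma(F_n)$, and Lemma~\ref{lem:azuma-use} bounds each summand by $\alpha \cdot \gamma^n$.

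The one refinement needed to obtain the claimed bound (whose geometric sum starts at the exponent $\lceil |\lem(\locinit,\vecinit)|/c \rceil$ rather than at $0$) is to observe that $F_n$ is empty for all small $n$. First I would invoke Lemma~\ref{thm:rep-supermart-connect} to pass from $\lem$ to the stochastic process $\{X_i\}_{i=0}^{\infty}$, which is an $\eps$-repulsing supermartingale for $\treach{C}$ with $c$-bounded differences and with $X_0 = \lem(\locinit,\vecinit) = m_0 < 0$. Since $|X_{i+1} - X_i| < c$, after $n$ steps we have $X_n \leq X_0 + nc = m_0 + nc$. On the other hand, any run in $F_n$ is in $C$ at step $n$, where the repulsing supermartingale is non-negative, so $X_n = \lem(\cfg{\sigma}{n}(\run)) \geq 0$. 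Combining these two inequalities forces $m_0 + nc \geq 0$, i.e.\ $n \geq |m_0|/c$; as $n$ is a non-negative integer this means $F_n = \emptyset$ whenever $n < \lceil |m_0|/c \rceil$.

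Putting these together, the sum collapses to $\probm^\sigma(\treach{C} < \infty) = \sum_{n \geq \lceil |m_0|/c \rceil} \probm^\sigma(F_n) \leq \alpha \sum_{n \geq \lceil |m_0|/c \rceil} \gamma^n$, and evaluating the tail of the geometric series (using $0 < \gamma < 1$, which holds since $\eps > 0$) yields exactly $\alpha \cdot \gamma^{\lceil |m_0|/c \rceil}/(1-\gamma)$, which is~\eqref{eq:quantitative-bound}. I do not expect any serious obstacle here, since the heavy lifting, namely the per-step Azuma bound, is already done in Lemma~\ref{lem:azuma-use}. The only mildly delicate point is the emptiness argument for small $n$: one must use the bounded-differences inequality and the non-negativity of $\lem$ on $C$ in the correct direction, after which everything reduces to summing a convergent geometric series.
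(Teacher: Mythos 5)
Your proposal is correct and follows essentially the same route as the paper's own proof: decompose $\{\treach{C}<\infty\}$ into the disjoint events $F_n$, bound each term by Lemma~\ref{lem:azuma-use}, note that $F_n=\emptyset$ for $n<\lceil|m_0|/c\rceil$ via bounded differences together with non-negativity of $\lem$ on $C$, and sum the tail of the geometric series. The only cosmetic difference is that you route the emptiness argument through the process $\{X_i\}_{i=0}^{\infty}$ from Lemma~\ref{thm:rep-supermart-connect}, whereas the paper argues directly on $\lem$; the content is identical.
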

\begin{proof}
For each $n$ let $F_n$ be as in Lemma~\ref{lem:azuma-use}. Denote by $A$ the 
number $\lceil|\lem(\locinit,\vecinit)|/c \rceil$. Observe that $F_n = 
\emptyset$ 
for each $n<A $. Indeed,  we need at least $A$ steps to reach $C$ 
from the initial configuration, because $\lem(\locinit,\vecinit)\leq 0$ (by the 
definition of a linear ranking supermartingale), the value of $\lem$ can 
increase by at most $c$ 
in each step, and reaching $C$ entails that the value of $\lem$ becomes 
non-negative. Hence, for each scheduler $\sigma$ we get
\begin{align*}
\probm^\sigma(\treach{C}<\infty)\! &=\! \sum_{n=0}^{\infty}\probm^\sigma(F_n)\! 
= \! 
\sum_{n=A}^{\infty}\probm^\sigma(F_n) \! \leq \! \sum_{n=A}^{\infty}\alpha \cdot \gamma^n 
\\ &=  \alpha \cdot \frac{\gamma^A}{1-\gamma},
\end{align*}
where the inequality in the middle comes from Lemma~\ref{lem:azuma-use}.
\end{proof}

\smallskip
\begin{example}[Illustration of Theorem~\ref{thm:repulsing-use-main}]\label{ex:repulthm}
Looking back at Example~\ref{ex:lrepsm},
the absolute value of the change in $\eta$ at each step is bounded from above 
by 12. 
Since the initial value of $\eta$ is $-3429$, we use Azuma's inequality and Theorem~\ref{thm:repulsing-use-main} 
to get the probability bound $5.06\cdot 10^{-6}$ on the violation of $\pinv$.
\end{example}

We now present the corollary that establishes the effectiveness of 
LRepSM for stochastic invariants.

\smallskip
\begin{corollary}
\label{coro:main}
Let $\pinv$ be a linear predicate map. Denote by $\neg\pinv$ the set of all 
configurations $(\loc,\vec{x})$ such that $\vec{x}\not\models\pinv(\loc)$. 
Assume that there exist $\eps>0$, $c>0$, and an $\eps$-LRepSM $\lem$ for 
$\neg\pinv$ with $c$-bounded differences such that $\lem(\locinit,\vecinit)<0$. 
Then $(\pinv,p)$ with $p = e^{\frac{\epsilon \cdot m_0}{(c+\eps)^2}}\cdot 
\frac{\gamma^{\left\lceil{|\lem(\locinit,\vecinit)|}/{c}\right\rceil}}{1-\gamma}$
 ($\gamma$ and $m_0$ are as in Theorem~\ref{thm:repulsing-use-main}) 
 is a stochastic invariant.
\end{corollary}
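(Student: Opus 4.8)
The plan is to recognize this corollary as an essentially immediate consequence of Theorem~\ref{thm:repulsing-use-main}, specialized to the set $C = \neg\pinv$. The core observation is that the failure event $\fail{\pinv}{}$ appearing in the definition of a stochastic invariant---namely the set of runs initiated in $(\locinit,\vecinit)$ that reach some configuration $(\loc,\vec{x})$ with $\vec{x}\not\models\pinv(\loc)$---is exactly the event $\{\run : \treach{\neg\pinv}(\run) < \infty\}$, since $\neg\pinv$ is by definition the set of configurations violating $\pinv$. Thus proving that $(\pinv,p)$ is a stochastic invariant reduces to bounding $\probm^\sigma(\treach{\neg\pinv} < \infty)$ by $p$ uniformly over all schedulers $\sigma$.

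First I would verify that the hypotheses of Theorem~\ref{thm:repulsing-use-main} hold with $C := \neg\pinv$. The corollary supplies $\eps > 0$, $c > 0$, and an $\eps$-LRepSM $\lem$ for $\neg\pinv$ with $c$-bounded differences satisfying $\lem(\locinit,\vecinit) < 0$. The only point needing a word of justification is the supporting pure invariant: since the LRepSM conditions are asserted at all configurations, one may take the trivial linear predicate map assigning $\mathit{true}$ to every location, which is vacuously a pure invariant. With this choice the defining inequalities of an $\eps$-LRepSM hold at every configuration, so $\lem$ is an $\eps$-LRepSM for $\neg\pinv$ supported by a pure invariant, exactly as the theorem requires.

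Next I would invoke Theorem~\ref{thm:repulsing-use-main} to obtain, for every scheduler $\sigma$,
\[
\probm^\sigma(\treach{\neg\pinv} < \infty) \leq \alpha \cdot \frac{\gamma^{\lceil |\lem(\locinit,\vecinit)|/c\rceil}}{1-\gamma},
\]
with $\gamma = e^{-\eps^2/(2(c+\eps)^2)}$, $\alpha = e^{\eps m_0/(c+\eps)^2}$, and $m_0 = \lem(\locinit,\vecinit)$. The right-hand side is syntactically identical to the value $p$ in the corollary after substituting $\alpha$. Combining this with the identification of $\fail{\pinv}{}$ as $\{\treach{\neg\pinv} < \infty\}$ yields $\probm^\sigma(\fail{\pinv}{}) \leq p$ for all $\sigma$, which by the definition of a stochastic invariant is precisely the claim that $(\pinv,p)$ is a stochastic invariant.

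I do not anticipate a genuine obstacle: all the substantive work resides in Theorem~\ref{thm:repulsing-use-main} and its supporting results (Lemma~\ref{thm:rep-supermart-connect} and Lemma~\ref{lem:azuma-use}). The only care required is the bookkeeping identification of the failure set with the reachability event for $\neg\pinv$ and the observation that the trivial pure invariant legitimately serves as the supporting invariant in the application of the theorem.
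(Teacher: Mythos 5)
Your proposal is correct and matches the paper's (implicit) argument exactly: Corollary~\ref{coro:main} is stated as an immediate consequence of Theorem~\ref{thm:repulsing-use-main} applied with $C=\neg\pinv$, using precisely the identification of the failure event $\fail{\pinv}{}$ with $\{\treach{\neg\pinv}<\infty\}$ and the observation that the theorem's bound equals $p$. Your extra remark that the trivial predicate map assigning $\mathit{true}$ to every location serves as the supporting pure invariant is the right way to discharge the hypothesis of the theorem that the corollary leaves implicit.
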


We note that the bound obtained from~\eqref{eq:quantitative-bound} is sound,
but not necessarily tight.
The magnitude of this bound crucially depends on $\lem$ and on initial valuation of variables. 
In Section~\ref{sec:exp} we discuss how to find a LRepSM $\lem$ providing good 
bounds in practice
(as also illustrated in Example~\ref{ex:repulthm}).

\section{Effectiveness of Repulsing Supermartingales}
In this section we discuss the effectiveness of repulsing supermartingales 
in several problems in analysis of probabilistic programs.

\subsection{Probabilistic Termination}
In Section~\ref{sec:repsm} we establish the effectiveness of repulsing 
supermartingales for stochastic invariants.
Theorem~\ref{thm:quantitative-termination} shows that stochastic 
invariants along with ranking supermartingales can obtain bounds for the 
probabilistic termination problem.
Hence the combination of repulsing and ranking supermartingales can 
answer the probabilistic termination problem.

\subsection{Refuting Almost-Sure and Finite Termination}

While a significant effort in analysis of non-probabilistic programs is devoted 
to proving termination, for bug-hunting purposes the analysis is often 
complemented by methods that aim to prove that a given program does \emph{not} 
terminate~\cite{GHMRX08:non-termination,VR08:nonterm-imperative, 
LNORR14:nonterm-smt,CCFNOH14:nonterm-safety,ABEL12:nonterm-multithreaded}. 
Similarly for probabilistic programs we can ask for refutation of almost-sure 
termination of a given program. We show how RepSMs can be used to this end.

If we have an $\eps$-LRepSM $\lem$ for the set of terminal configurations and 
the bound obtained from Theorem~\ref{thm:repulsing-use-main} is smaller than 
$1$, 
then $\lem$ in particular proves that the program does not terminate almost 
surely (from the given initial configuration). 
However repulsing supermartingales can refute a.s. termination  even for 
programs where the 
bound obtained by using Theorem~\ref{thm:repulsing-use-main} is $\geq 1$. 
To show this we use another 
powerful tool of martingale theory: the \emph{optional stopping theorem}.

\smallskip
\begin{theorem}[Optional Stopping, {\cite[Theorem 10.10]{Williams:book}}]
\label{thm:optional-stopping}
Let $(\Omega,\mathcal{F},\probm)$ be a probability space,
$\{X_i\}_{i=0}^{\infty}$ a supermartingale w.r.t. some filtration 
$\{\mathcal{F}_i\}_{i=0}^{\infty}$, 
and $\stime$ a 
stopping time w.r.t the same filtration. 
Assume that $\E[\stime]<\infty$ and $\{X_i\}_{i=0}^{\infty}$ has $c$-bounded 
differences for some $c\in \Rset$. 
Then
\[
\E[X_0] \geq \E[X_T].
\]
\end{theorem}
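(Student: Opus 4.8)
The plan is to reduce the statement to the standard fact that stopping a supermartingale preserves the supermartingale inequality, and then to justify passing to the limit using the two hypotheses. First I would introduce the \emph{stopped process} $\{X_{n\wedge \stime}\}_{n=0}^{\infty}$, where $n\wedge \stime = \min(n,\stime)$. Because $\stime$ is a stopping time w.r.t.\ $\{\mathcal{F}_i\}_{i=0}^{\infty}$, each event $\{\stime > n\}$ is $\mathcal{F}_n$-measurable, and from the telescoping identity $X_{(n+1)\wedge \stime} - X_{n\wedge \stime} = (X_{n+1}-X_n)\cdot \mathbf{1}_{\stime > n}$ one obtains $\E[X_{(n+1)\wedge \stime}\mid \mathcal{F}_n] = X_{n\wedge \stime} + \mathbf{1}_{\stime>n}\bigl(\E[X_{n+1}\mid\mathcal{F}_n]-X_n\bigr) \leq X_{n\wedge \stime}$, since $\mathbf{1}_{\stime>n}$ is $\mathcal{F}_n$-measurable and the process is a supermartingale. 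Hence the stopped process is again a supermartingale w.r.t.\ the same filtration, and taking expectations gives $\E[X_{n\wedge \stime}] \leq \E[X_0]$ for every $n\in\Nset_0$.

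Next I would pass to the limit $n\to\infty$. Since $\E[\stime]<\infty$ we have $\probm(\stime<\infty)=1$, so for almost every $\omega$ the index $n\wedge \stime(\omega)$ eventually equals $\stime(\omega)$; therefore $X_{n\wedge \stime}\to X_{\stime}$ almost surely. The crux is to interchange this limit with the expectation, i.e.\ to conclude $\lim_{n\to\infty}\E[X_{n\wedge \stime}] = \E[X_{\stime}]$. This is exactly where both hypotheses enter. From the $c$-bounded differences property, summing over the first $n\wedge\stime$ steps yields
\[
|X_{n\wedge \stime} - X_0| \leq \sum_{k=0}^{(n\wedge \stime)-1} |X_{k+1}-X_k| \leq c\cdot(n\wedge \stime) \leq c\cdot \stime,
\]
so that $|X_{n\wedge \stime}| \leq |X_0| + c\cdot \stime$ for all $n$. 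The right-hand side is an integrable dominating function, because $\E[|X_0|]<\infty$ (a supermartingale is integrable at every index) and $\E[c\cdot\stime] = c\,\E[\stime]<\infty$. The dominated convergence theorem then gives $\E[X_{n\wedge \stime}]\to \E[X_{\stime}]$, and combined with $\E[X_{n\wedge \stime}]\leq \E[X_0]$ this yields $\E[X_{\stime}]\leq \E[X_0]$, which is the claimed inequality.

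The hard part is precisely this limit interchange: without an integrability control the conclusion genuinely fails, the textbook counterexample being the symmetric simple random walk (a martingale with $c=1$ bounded differences) stopped at the first hitting time of level $1$, for which $\E[X_{\stime}]=1\neq 0=\E[X_0]$ because there $\E[\stime]=\infty$. This shows that the finite-expectation assumption is indispensable even when the differences are bounded. The role of the two hypotheses is therefore to supply the single integrable dominating function $|X_0| + c\,\stime$; by contrast, verifying that the stopped process remains a supermartingale is routine, and the almost-sure convergence $X_{n\wedge \stime}\to X_{\stime}$ is immediate from $\probm(\stime<\infty)=1$.
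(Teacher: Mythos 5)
Your proof is correct: forming the stopped process $\{X_{n\wedge \stime}\}_{n\geq 0}$, checking it remains a supermartingale, establishing the domination $|X_{n\wedge \stime}|\leq |X_0|+c\,\stime$ from the bounded-differences hypothesis, and passing to the limit via dominated convergence is precisely the standard argument for this form of optional stopping. The paper gives no proof of its own---it cites the result as Theorem 10.10 of Williams' book---and your argument is essentially the proof given there, so the two approaches coincide.
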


The optional stopping theorem guarantees that under given 
assumptions, the expected value of the supermartingale at the time of stopping 
(which can be, e.g. the time of program termination) is bounded from above by 
the expected initial value of the supermartingale. 
We can use the theorem to obtain the following.

\smallskip
\begin{theorem}
\label{thm:non-termination}
Let $C$ be a set of configurations of an \APP{} $P$. Suppose that there 
exist $\eps>0$, $c>0$ and a linear $\eps$-repulsing supermartingale $\lem$ for 
$C$ supported by some pure invariant $\inv$ such that $\lem$ has 
$c$-bounded differences. If $\lem(\locinit,\vecinit)<0,$ then under each 
scheduler $\sigma$ it 
holds
\begin{equation*}
\label{eq:nonterm}
\probm^{\sigma}(\treach{C}<\infty) <1.
\end{equation*}
\end{theorem}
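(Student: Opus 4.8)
The plan is to argue by contradiction and invoke the optional stopping theorem (Theorem~\ref{thm:optional-stopping}). First I would pass from the LRepSM $\lem$ to the stochastic process $\{X_i\}_{i=0}^{\infty}$ supplied by Lemma~\ref{thm:rep-supermart-connect}: since $\inv$ is a \emph{pure} invariant, the set $\neg\inv$ is never reached, so $\treach{C\cup\neg\inv}=\treach{C}$ and, under every scheduler $\sigma$, $\{X_i\}_{i=0}^{\infty}$ is an $\eps$-repulsing supermartingale for the stopping time $\stime:=\treach{C}$ with $c$-bounded differences. I would record the two facts that will be played against each other: the initial value is $X_0=\lem(\locinit,\vecinit)=m_0<0$, while the repulsing property forces $X_{\stime}(\run)\geq 0$ on every run with $\stime(\run)<\infty$.

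Now suppose, for contradiction, that $\probm^{\sigma}(\treach{C}<\infty)=1$ for some scheduler $\sigma$. To apply Theorem~\ref{thm:optional-stopping} I first need $\E^{\sigma}[\stime]<\infty$. This does \emph{not} follow from $\stime<\infty$ almost surely alone, and supplying it is the crux of the argument. Here I would reuse Lemma~\ref{lem:azuma-use}, whose hypotheses ($\eps>0$, $c>0$, a $c$-bounded LRepSM with $\lem(\locinit,\vecinit)<0$) are exactly the ones assumed here: it yields $\probm^{\sigma}(\stime=n)=\probm^{\sigma}(F_n)\leq \alpha\cdot\gamma^n$ with $0<\gamma<1$. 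Under the contradiction hypothesis there is no probability mass at $\infty$, so
\[
\E^{\sigma}[\stime]=\sum_{n=0}^{\infty} n\cdot\probm^{\sigma}(F_n)\leq \alpha\sum_{n=0}^{\infty} n\,\gamma^n<\infty,
\]
the last series converging because $0<\gamma<1$.

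With $\E^{\sigma}[\stime]<\infty$ and $\{X_i\}_{i=0}^{\infty}$ having $c$-bounded differences, the optional stopping theorem gives $\E^{\sigma}[X_0]\geq \E^{\sigma}[X_{\stime}]$. The left side equals $m_0<0$. For the right side, $\stime<\infty$ almost surely together with the repulsing inequality $X_{\stime}\geq 0$ on $\{\stime<\infty\}$ gives $\E^{\sigma}[X_{\stime}]\geq 0$. Chaining the inequalities yields $0>m_0=\E^{\sigma}[X_0]\geq\E^{\sigma}[X_{\stime}]\geq 0$, a contradiction. Hence no scheduler can achieve $\probm^{\sigma}(\treach{C}<\infty)=1$, i.e.\ $\probm^{\sigma}(\treach{C}<\infty)<1$ for every $\sigma$, as claimed. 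The only delicate point is the integrability condition $\E^{\sigma}[\stime]<\infty$, which is precisely why the exponential tail bound of Lemma~\ref{lem:azuma-use} is needed; everything else is a direct read-off from the repulsing non-negativity at $\stime$ and the strictly negative initial value.
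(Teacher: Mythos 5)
Your proof is correct, but it takes a genuinely different route from the paper's. The paper's proof splits into two cases: if the bound of Theorem~\ref{thm:repulsing-use-main} is already below $1$, it concludes immediately; otherwise it picks $z$ with $\gamma^{z}/(1-\gamma)<1$, shifts the repulsing supermartingale by $2zc$ to obtain an $\eps$-\emph{ranking} supermartingale for the stopping time $\treach{D}$ with $D=C\cup\{(\loc,\vec{x})\mid\lem(\loc,\vec{x})\leq-2zc\}$, invokes the ranking-supermartingale theorem to get $\E^{\sigma}[\treach{D}]<\infty$, and applies optional stopping to that \emph{shifted} process to show that ``deep'' configurations (with $\lem$-value at most $-2zc$) are reached with positive probability; from such configurations, Theorem~\ref{thm:repulsing-use-main} restarted at the deep configuration gives escape probability bounded away from one. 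You instead assume $\probm^{\sigma}(\treach{C}<\infty)=1$ for contradiction, use the exponential tail bound of Lemma~\ref{lem:azuma-use} (whose hypotheses indeed coincide with those of the theorem, and $\gamma\in(0,1)$ because $\eps>0$) to conclude $\E^{\sigma}[\treach{C}]\leq\alpha\sum_{n}n\gamma^{n}<\infty$, and then apply optional stopping directly to the repulsing supermartingale for $\treach{C}$ itself obtained from Lemma~\ref{thm:rep-supermart-connect} --- essentially the same final step as the paper's proof of Theorem~\ref{thm:nonpositive}. The integrability of the stopping time is the one delicate point and you handle it correctly: almost-sure finiteness alone would not suffice, but the geometric tail does. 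Comparing what each buys: your argument is shorter, avoids the case split and the auxiliary set $D$, and in effect derives Theorem~\ref{thm:non-termination} from Lemma~\ref{lem:azuma-use} combined with the argument of Theorem~\ref{thm:nonpositive}; the paper's longer detour yields a quantitative by-product --- the probability of reaching the deep set is bounded away from zero by a constant depending only on $c$, $z$ and $\lem(\locinit,\vecinit)$, not on the particular configuration --- which the paper explicitly records at the end of its proof and reuses in the proof of the almost-sure persistence theorem.
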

\begin{proof}[Key proof idea]
Theorem~\ref{thm:repulsing-use-main} shows that the existence of $\lem$ implies 
the following: if a program execution reaches with positive probability a 
configuration $(\loc,\vec{x})$ such that 
$\lem(\loc,\vec{x})$ is below some sufficiently small negative number $A$ 
(whose 
magnitude depends only on $\lem$ and $c$), then the program does 
not terminate almost-surely. It thus suffices to prove that the program reaches 
such a configuration with positive probability under each scheduler $\sigma$. 
We define a stopping time $T$ that returns a first point in time in which 
we reach either $C$ or a configuration $(\loc,\vec{x})$ with 
$\lem(\loc,\vec{x})\leq 2A$ and apply the optional stopping theorem on the 
$\eps$-RepSM obtained from $\lem$. It can be proved that expectation of $T$ is 
finite, so optional stopping theorem applies to $T$. Now to get a contradiction 
we 
assume that a configuration with $\lem$-value smaller than $2A$ is reached with 
probability $0$. Then at time $T$ the current configuration is almost-surely in 
$C$ so the expected value of the supermartingale at time $T$ is non-negative. 
But the optional stopping theorem forces this expectation to be bounded from 
above by the initial value of the RepSM, i.e. by $\lem(\locinit,\vecinit)$, 
which is negative, a contradiction.
\end{proof}

Another important concept in the analysis of probabilistic programs is 
\emph{finite termination}~\cite{CFNH16:prob-termination}, sometimes also 
called~\emph{positive termination}~\cite{HolgerPOPL}. A program is said to 
terminate finitely if its 
expected termination time is finite. Of course, when a program terminates with 
probability less than 1 it is not finitely terminating. However, there are 
programs that terminate almost-surely but the expected termination time is 
infinite. Indeed, consider a program modelling a symmetric 1-dimensional random 
walk with a boundary:
\[\textbf{while } x\geq 0 \textbf{ do } 
\textbf{ if prob(} 0.5 \textbf{) then } x:=x+1 \textbf{ else } x:=x-1 
\textbf{ fi 
od}
\]
From the 
theory of random walks it follows that for each positive initial value of $x$ 
the program terminates almost-surely but its expected termination time is 
infinite.
Even for such programs the positive termination can be refuted, this time by 
using $0$-repulsing supermartingales.

\smallskip
\begin{theorem}
\label{thm:nonpositive}
Let $C$ be a set of configurations of an \APP{} $P$. Suppose that there 
exist $\eps\geq 0$, $c>0$ and a linear $\eps$-repulsing supermartingale $\lem$ 
for 
$C$ supported by some pure invariant $\inv$ such that $\lem$ has 
$c$-bounded differences. If $\lem(\locinit,\vecinit)<0$, then under each 
scheduler $\sigma$ it 
holds
\begin{equation}
\label{eq:nonpositive*}
\E^{\sigma}(\treach{C}) =\infty.
\end{equation}
\end{theorem}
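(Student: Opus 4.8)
The plan is to argue by contradiction, the key tool being the optional stopping theorem (Theorem~\ref{thm:optional-stopping}). First I would invoke Lemma~\ref{thm:rep-supermart-connect}: since $\lem$ is an $\eps$-LRepSM for $C$ supported by the \emph{pure} invariant $\inv$, the ``in particular'' clause of that lemma gives us, under each scheduler $\sigma$, a stochastic process $\{X_i\}_{i=0}^{\infty}$ that is an $\eps$-repulsing supermartingale for the stopping time $\treach{C}$ and that inherits the $c$-bounded differences of $\lem$. In particular $\{X_i\}$ is an ordinary supermartingale (the $\eps$-decreasing condition with $\eps\ge 0$ only strengthens this), it has $c$-bounded differences, and its initial value is the constant $X_0=\lem(\locinit,\vecinit)<0$.

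Next I would fix a scheduler $\sigma$ and suppose, toward a contradiction, that $\E^{\sigma}(\treach{C})<\infty$. Writing $T=\treach{C}$, this makes $T$ a stopping time with finite expectation, so all hypotheses of Theorem~\ref{thm:optional-stopping} are met for $\{X_i\}$ (supermartingale, $c$-bounded differences, $\E[T]<\infty$), and the theorem yields $\E[X_0]\ge \E[X_T]$. On the left, $\E[X_0]=\lem(\locinit,\vecinit)<0$. On the right, finiteness of $\E[T]$ forces $T<\infty$ almost surely, so almost every run actually reaches $C$ at the finite time $T$; by the repulsing property $\treach{C}(\omega)=j \Rightarrow X_j(\omega)\ge 0$, we obtain $X_T\ge 0$ almost surely and hence $\E[X_T]\ge 0$. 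Chaining the inequalities gives $0>\lem(\locinit,\vecinit)=\E[X_0]\ge \E[X_T]\ge 0$, a contradiction. Since $\sigma$ was arbitrary, this establishes~\eqref{eq:nonpositive*}.

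The step I expect to require the most care is verifying that the optional stopping hypotheses are genuinely discharged rather than merely plausible: concretely, that the process produced by Lemma~\ref{thm:rep-supermart-connect} retains $c$-bounded differences after being frozen at $\treach{C}$ (so the uniform-increment condition of the theorem holds), and that the value $X_T$ at the stopping time really equals the value of $\lem$ at the reached configuration in $C$, so that the repulsing inequality applies verbatim to conclude $X_T\ge 0$. Both facts are furnished by the construction in Lemma~\ref{thm:rep-supermart-connect}, but they are exactly what makes the contradiction airtight. I would also emphasize that $\eps$ plays no quantitative role in this argument, in contrast to Theorem~\ref{thm:repulsing-use-main} where the tail bound degrades as $\eps\to 0$; the argument therefore goes through even for $0$-repulsing supermartingales, which is precisely what is needed to refute finite termination for programs that terminate almost surely yet have infinite expected termination time, such as the symmetric random walk discussed above.
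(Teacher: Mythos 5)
Your proposal is correct and follows essentially the same route as the paper's own proof: obtain the $\eps$-repulsing supermartingale $\{X_i\}_{i=0}^{\infty}$ from Lemma~\ref{thm:rep-supermart-connect}, assume $\E^{\sigma}(\treach{C})<\infty$ for contradiction, apply the optional stopping theorem to get $\E^{\sigma}[X_0]\geq\E^{\sigma}[X_{\treach{C}}]$, and derive the contradiction from $\E^{\sigma}[X_0]=\lem(\locinit,\vecinit)<0$ together with $\E^{\sigma}[X_{\treach{C}}]\geq 0$. Your additional care about the frozen process retaining $c$-bounded differences, about $X_{\treach{C}}$ equaling $\lem$ at the reached configuration, and about $\eps$ playing no quantitative role (so $\eps=0$ suffices) is consistent with, and slightly more explicit than, the paper's argument.
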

\begin{proof}
Let $\{X_i\}_{i=0}^{\infty}$ be the $\eps$-repulsing supermartingale obtained 
from 
$\lem$ using Lemma~\ref{thm:rep-supermart-connect}. Assume, for the sake of 
contradiction, that there exists a scheduler $\sigma$ 
such that $\E^{\sigma}(\treach{C}) <\infty$. Since $\{X_i\}_{i=0}^{\infty}$ has 
$c$-bounded differences, using the optional stopping theorem we get
$\E^{\sigma}[X_0]\geq \E^{\sigma}[X_{\treach{C}}]$. But from the definition of 
$\lem$ we get $\E^{\sigma}[X_0]=\lem(\locinit,\vecinit)< 0$ and 
$\E^{\sigma}[X_{\treach{C}}]\geq 0 $, since $\lem$ attains non-negative values 
inside $C$. Hence, we derived a contradiction $0>0$.
\end{proof}

\begin{example}
For the 1-dimensional symmetric RW program pictured above 
(counter-example for finite-termination) it is easy to find a 
$0$-LRepSM $\lem$ for the set of terminal configurations 
supported by a pure invariant : say that $\lem$ is equal to $-x-1$ in all 
locations but the terminal one, where it is equal to $0$. The supporting 
invariant is, e.g. $x\geq-1$ for all locations. 
\end{example}

\subsection{Proving Almost-Sure Persistence}

The applicability of repulsing submartingales extends beyond reachability 
properties. In some applications of probabilistic programs, such as modelling 
of complex reactive systems~\cite{CVS16:martingale-recurrence-persistence}, it 
is customary to consider programs 
that are \emph{not} terminating but continue to execute forever, e.g. because 
they model a system which should continuously respond to inputs from the 
environment (e.g. a thermostat~\cite{AKLP10:hybrid-model-checking}). One of the 
basic properties of 
such programs is 
\emph{persistence}~\cite{CVS16:martingale-recurrence-persistence}. A set of 
configurations $C$ is said to be \emph{almost-surely persistent} if under each 
scheduler $\sigma$ it holds with probability $1$ that all but finitely many 
configurations along a run belong to $C$ (or in other words, that we will 
eventually see only configurations from $C$). 
In~\cite{CVS16:martingale-recurrence-persistence} they presented a method of 
proving almost-sure persistence via so called geometric supermartingales. We 
present an alternative proof technique based on combination of ranking and 
repulsing supermartingales.

\smallskip
\begin{theorem}
Let $C$ be be a set of configurations of some \APP{} $P$. Denote by $\neg C$ 
the set of all configurations of $C$ that do not belong to $C$. Assume that 
there 
exist the following:
\begin{compactenum}
\item 
An $\eps>0$, $c>0$, and an $\eps$-LRepSM $\lem$ with $c$-bounded differences 
for the 
set $\neg C $ supported by some pure invariant $\inv$.
\item
An $\eps>0$, $K<0$, and an $\eps$-LRSM for the set $D=\{(\loc,\vec{x})\mid 
\lem(\loc,\vec{x})\leq K \text{ and } \vec{x}\models\inv(\loc)\}$.
\end{compactenum}
Then the set $C$ is almost-surely persistent.
\end{theorem}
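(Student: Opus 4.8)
The plan is to reduce almost-sure persistence of $C$ to the statement that, under every scheduler $\sigma$, the set $\neg C$ is visited only finitely often almost surely. First I would record the two consequences of the hypotheses that I intend to combine. From assumption~(2), the $\eps$-LRSM for $D$ together with the reachability generalization of Theorem~\ref{thm:old-ranking} used in Theorem~\ref{thm:quantitative-termination} yields that from \emph{every} reachable configuration $D$ is reached almost surely; note that $D\subseteq C$, since $K<0$ forces $\lem(\loc,\vec{x})\le K<0$, whereas $\lem\ge 0$ on $\neg C$ by the LRepSM condition of assumption~(1). From assumption~(1), $\lem$ has negative drift $\le-\eps$ throughout $C$ and is non-negative on $\neg C$, so it is repulsed away from $\neg C$; this is exactly the regime in which Theorem~\ref{thm:repulsing-use-main} and Theorem~\ref{thm:non-termination} bound the probability of reaching $\neg C$ from deep configurations.

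Next I would fix a sufficiently deep sublevel. Since the bound of Theorem~\ref{thm:repulsing-use-main} tends to $0$ as the starting value of $\lem$ tends to $-\infty$ and is uniform over schedulers, I can choose a constant $A\le K$ (depending only on $\eps$ and $c$) such that from any configuration with $\lem(\loc,\vec{x})\le A$ the probability of ever reaching $\neg C$ is at most $\tfrac12$, under every scheduler. The key intermediate claim is then that $\{\lem\le A\}$ is reached almost surely from every reachable configuration. To prove it I would argue in trials: by the LRSM the run reaches $D=\{\lem\le K\}\subseteq C$ a.s.; from a configuration with $\lem\le K$, a Hoeffding-type exponential supermartingale $e^{t\lem}$ (which, for small $t>0$, is a supermartingale while the run stays in $C$, thanks to the $-\eps$ drift and the $c$-bounded differences) together with optional stopping shows that the run reaches $\{\lem\le A\}$ before $\neg C$ with probability at least $\rho:=1-e^{tK}>0$, \emph{uniformly} in the starting configuration and scheduler. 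Whenever such a trial instead reaches $\neg C$, the LRSM returns the run to $D$ and a fresh trial begins; since each trial succeeds with conditional probability at least $\rho$, the level $\{\lem\le A\}$ is reached almost surely, and hence infinitely often.

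Finally I would run a Borel--Cantelli argument on nested trials. Let $\tau_1$ be the first visit to $\{\lem\le A\}$, and let the failure event $B_1$ be ``$\neg C$ is visited at some time after $\tau_1$''; if $B_1$ fails then the run is eventually always in $C$ and persistence holds on that run. On $B_1$, let $\tau_2$ be the first visit to $\{\lem\le A\}$ after the first post-$\tau_1$ visit to $\neg C$, let $B_2$ be ``$\neg C$ is visited after $\tau_2$'', and so on. By construction $B_1,\dots,B_{k-1}$ are measurable with respect to the stopped $\sigma$-algebra $\natfilt_{\tau_k}$ and, by the choice of $A$ and the strong Markov property, $\probm^{\sigma}(B_k\mid \natfilt_{\tau_k})\le \tfrac12$. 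Iterating the conditional bound gives $\probm^{\sigma}(B_1\cap\cdots\cap B_n)\le 2^{-n}$, and since visiting $\neg C$ infinitely often forces every $B_k$ to occur, $\probm^{\sigma}(\neg C \text{ visited infinitely often})\le \lim_n 2^{-n}=0$. This is precisely almost-sure persistence of $C$.

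I expect the main obstacle to be the \emph{uniformity} of the two probability estimates. The escape bound must hold with a single threshold $A$ across all configurations below that level and across all schedulers, so that it survives being invoked at the random times $\tau_k$ through the strong Markov property, and the ``reach $A$ before $\neg C$'' bound must be uniformly positive; both require care precisely because $\lem$ is not globally a supermartingale (it may increase on $\neg C$), which is exactly why the LRSM of assumption~(2) is needed to pull the run back into $D$. Getting the measurability and the nesting of the trial events $B_k$ right, so that the conditional estimates telescope cleanly, is the other point demanding attention.
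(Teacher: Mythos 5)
Your proof is correct, and its skeleton is the same as the paper's: the LRSM of item~(2) forces almost-sure returns to a deep sublevel set of $\lem$, the repulsing supermartingale of item~(1) gives a scheduler-uniform positive probability of never revisiting $\neg C$ from that sublevel set, and a geometric-trials argument then shows that the event of infinitely many visits to $\neg C$ has probability zero. The genuine difference is in the middle step. The paper runs the trials at entries to $D$ itself, invoking Theorem~\ref{thm:non-termination} together with the uniformity remark at the end of its proof: each entry to $D$ avoids $\neg C$ forever with probability at least some $p>0$ depending only on $c$, $K$ and $\lem$, so $n$ failed trials have probability at most $(1-p)^n$. You instead descend to a deeper level $A\le K$, chosen via the Azuma bound of Theorem~\ref{thm:repulsing-use-main} so that the escape probability is at least $\tfrac12$, and you pay for this by proving the additional claim that $\{\lem\le A\}$ is reached almost surely: a Hoeffding-type exponential supermartingale $e^{t\lem}$ with optional stopping gives a uniform probability $\rho>0$ of reaching $\{\lem\le A\}$ before $\neg C$ from $D$, and failed attempts are restarted through the LRSM. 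In effect you re-derive, with an exponential supermartingale, what the paper's Theorem~\ref{thm:non-termination} extracts by linear optional stopping applied to the translated process $X_i+2zc$; your route is more self-contained and yields the stronger intermediate fact that arbitrarily deep levels of $\lem$ are visited almost surely, while the paper's is shorter because it reuses uniformity it has already established. Both arguments hinge on the same two delicate points, which you correctly flag: the probability estimates must be uniform over schedulers and starting configurations so that they can be applied at the random times $\tau_k$, and $\lem$ is a supermartingale only inside $C$, which is exactly why the LRSM is indispensable for restarting trials.
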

\begin{proof}[Key proof idea]
Item 2. ensures that from any reachable configuration we eventually reach the 
set $D$ with probability 1 (Theorem~\ref{thm:old-ranking}). Item 1. ensures, 
that each time we enter $D$ the probability that we never return 
back to $\neg C$ is positive (Theorem~\ref{thm:non-termination}). As a matter 
of fact, it 
can be shown that this probability is bounded away from zero by a number $p>0$ 
which depends only on $c$, $K$ and $\lem$, but not on a concrete configuration 
in which we enter $D$. Hence, the probability that we enter $D$ and after that 
reach $\neg C$ again at least $n$ times is at most $p^n$. For $n$ going to 
$\infty$ this converges to $0$, showing that the probability of infinitely 
often seeing a configuration from $\neg C$ is $0$.
\end{proof}

\begin{example}
As a simple example, consider the program $$\textbf{while } \mathit{true} 
\textbf{ 
do 
} x:=\textbf{sample}(\mathrm{Uniform}(-2,1)) \textbf{ od}.$$
For any $n\in \Zset$ let $C_n$ be the set of configurations in which the value 
of $x$ is at most $n$. For each such $n$ we have, inside the loop, a 
$\frac{1}{4}$-repulsing supermartingale $x-n$ for $\neg C_n$, and we also have 
a $\frac{1}{4}$-ranking supermartingale $x-n+1$ for the set 
$\{(\loc,\vec{x})\mid 
x-n\leq -1\}$ (both supported by invariants that are true 
everywhere). Since both supermartingales have bounded differences, we get that 
each set $C_n$ is persistent.
\end{example}

\section{Computational Results}

In this section we discuss computational aspects of our framework. Since 
synthesis of $\eps$-ranking supermartingales supported by a linear predicate 
map was already addressed in the previous 
work~\cite{SriramCAV,CFNH16:prob-termination}, we focus on algorithms related 
to those aspects of probabilistic reachability which are new, i.e. 
those related to stochastic invariants and repulsing supermartingales. Since 
our techniques are extensions of already known techniques for ranking 
supermartingales and invariant synthesis, we present only a high-level 
description.

We consider two main problems: 
\begin{compactenum}
\item 
For a given \APP{} $P$ with a given pure 
invariant $\inv$ and a linear predicate map $\pinv$ compute a number $p$ such 
that $(\pinv,p)$ is a stochastic invariant.
\item
For a given \APP{} $P$ and a set $C$ of configurations, compute a linear 
predicate map $\pinv$ and a number $p$ such 
that $(\pinv,p)$ is a stochastic invariant supporting
some $\eps$-LRSM for the set $C$.
\end{compactenum}

We assume that the set $C$ in item 2 above is given by some linear predicate 
map $\setmap$, i.e. it is a set of all $(\loc,\vec{x})$ such that 
$\vec{x}\models \setmap(\loc).$ This ensures that all the objects we work with 
are linear, which allows for a more efficient solution. In particular, the set 
of all terminal configurations can be easily given in this way, so point~2 
also 
concerns 
obtaining stochastic invariants for proving high-probability termination.

We start with presenting an algorithm for item 1 above, and then show how an 
algorithm for item 2 can be obtained as a straightforward generalization of 1.

We aim to compute the bound $p$ using Corollary~\ref{coro:main}, 
i.e. we want to compute an $\eps$-LRepSM for the set $\neg\pinv$ with 
$c$-bounded-differences supported by $\inv$. Note that $\neg \pinv$ can also be 
expressed by a linear predicate map, and this LPM can be computed in polynomial 
time provided that $\pinv$ is \emph{polyhedral}, i.e. that each $\pinv(\loc)$ 
is a linear assertion (a conjunction of linear inequalities). We call a set of 
configurations polyhedral if it can be defined by a polyhedral LPM.

We adapt a well known constrained-based method for generating linear 
ranking 
functions and (non-stochastic) invariants in non-probabilistic 
programs~\cite{CSS03:linear-invariant-farkas, 
DBLP:conf/tacas/ColonS01,DBLP:conf/vmcai/PodelskiR04}, which was adapted for 
synthesizing $\eps$-LRSMs in probabilistic 
programs~\cite{SriramCAV,CFNH16:prob-termination}. We briefly recall this 
approach and explain its adaptation. So suppose that we are given a program 
$P$, 
a polyhedral set $C$, 
and an LPM $\inv$, and we want to compute numbers $\eps>0$, $c>0$, and, 
for 
each location 
$\loc$ of $\pCFG_P$, coefficients
$b^{\loc},a_1^{\loc},\dots,a_{|\pvars|}^{\loc}$ such that the LEM $\lem$ given 
by $\lem(\loc)=b^\loc + \sum_{i=1}^{|\pvars|}a_i^{\loc}\cdot x_i$ is an 
$\eps$-LRSM for $C$ with $c$-bounded differences supported by $\inv$. Since 
LRSMs can be re-scaled 
by an arbitrary 
positive constant, we can assume that $\eps\geq 1$ and $c\geq 1$ (it always 
holds that $c\geq 
\eps$). We denote by $U$ the set $\{b^\loc,a_1^{\loc},\dots,a_{|\pvars|}^{\loc}\mid \loc 
\in \locs\}\cup\{c,\eps\}$.
The algorithm 
of~\cite{CFNH16:prob-termination} 
constructs a system of linear inequalities $\vec{y}\cdot{Z} \geq \vec{d}$ (here 
$Z$ is a matrix) that is \emph{adjusted to $P$, $C$, and $\inv$}, which means 
that each term in each inequality of the system has one of the following forms:
\begin{compactitem}
\item It is a variable with a name corresponding to some element of $U$.
\item It is a term of the form $y \cdot z$, where $y$ is a variable and $z$ is a coefficient (i.e. a number) appearing in $\inv$ (i.e. some inequality of $\inv$ contains a term of the form $z\cdot x_i$ for some $i$).
\item It is a term of the form $y \cdot z$, where $y$ is a variable and $z$ is 
a coefficient (i.e. a number) appearing in the LPM describing $C$.
\end{compactitem}
Moreover, any solution of the system yields an $\eps$-LRSM for $C$ with $c$-bounded 
differences (by substituting the solution values for variables in $U$). If the 
system is unsolvable, then no such LRSM exists.

Intuitively, the construction of $\vec{y}\cdot{Z} \geq \vec{d}$ proceeds as 
follows: the algorithm first translates the conditions in 
Definition~\ref{def:lrsm} into a conjunction of formulas of the form 
\begin{align}\exists 
u_1\dots\exists u_m \forall x_1 \dots\forall x_{|\pvars|} 
\,\varphi\Rightarrow \label{eq:formulae}
\psi,\end{align}
 where $u_1,\dots,u_m$ are all the elements of $U$, $\varphi$ is a 
linear 
assertion over variables  $\{x_1,\dots,x_{|\pvars|}\}$ whose coefficients are 
numbers that appear as coefficients in 
$\inv$ or in description of $C$, and $\psi$ is 
an arithmetic expression involving numbers and elements of $U\cup 
\{x_1,\dots,x_{|\pvars|}\}$ which is
linear 
if the elements of the set
$\{c,\eps,x_1,\dots,x_{|\pvars|}\}$ are taken as variables (in particular, its 
coefficients are 
independent of 
$\inv$ and $C$). The algorithm then utilizes Farkas's 
lemma~\cite{FarkasLemma} to convert each such formula into an equivalent 
existentially quantified linear assertion, i.e. a system of linear 
inequalities adjusted to $P$, $C$, and $\inv$. 

To obtain the required $\eps$, $c$ and LEM $\lem$ it thus 
suffices to solve the linear system $\vec{y}\cdot Z \geq \vec{d}$. The 
construction of the system can be done in polynomial time provided that $C$ is 
polyhedral. Note that in particular, the set of terminal 
configurations is polyhedral.

Now assume that instead of synthesizing an $\eps$-LRSM for some set $C$ we want 
to 
synthesize an $\eps$-LRepSM for $C=\neg\pinv$. The point is that $\neg\pinv$ is 
again expressed by a linear predicate map and all the formulas 
arising 
from conditions in the definition of an $\eps$-LRepSM again have the 
form~\eqref{eq:formulae}. This is easy to see as almost all conditions in the 
definition of a LRepSM are the same as for LRSM. The only difference is in the 
non-negativity condition, where in LRSMs we require non-negativity outside $C$, 
while in LRepSMs inside $C$. But for all locations $\loc$ both these 
constraints are of the form ``for 
all $\vec{x}$ satisfying a given linear predicate, 
$b^\loc + \sum_{i=1}^{|\pvars|}a_i^\loc \cdot x_i \geq 0$'', and thus can be 
transformed into a conjunction of formulae of the form~\eqref{eq:formulae}. 
Hence, we can again reduce computing an $\eps$-LRepSM for $\neg 
\pinv$ with $c$-bounded differences to solving a system of linear constraints, and the resulting system of linear constraints is again adjusted to $P$, $C$, and $\inv$.
The method is complete in the sense that an $\eps$-LRepSM for $C$ with 
$c$-bounded differences exists if and 
only if the system of linear inequalities $\vec{y}\cdot Z \geq \vec{d}$ has a 
solution. This 
is proved in the same way as for LRSMs in~\cite{CFNH16:prob-termination}.

As shown in Section~\ref{sec:repsm}, different LRepSMs can produce different 
upper bounds on the probability of reaching $\neg\pinv$. 
Theorem~\ref{thm:repulsing-use-main} shows that in order to get good bounds, it 
is 
vital that the computed LRepSM maximizes $|\lem(\locinit,\vecinit)|/c$. Since 
this 
function is not linear in $c$ and coefficients of $\lem$, we do not look for 
optimal $\lem$ and $c$ directly but instead we compute optimal LRepSMs $\lem$ 
for multiple heuristically chosen values of $c$ and then pick the one giving 
the best result.

The algorithm can be summarized as follows:
\begin{compactenum}
\item We fix $\eps=1$ (this is w.l.o.g. as LRepSMs can be rescaled 
arbitrarily).
\item Using the constrained based-approach described above, we compute the 
minimal $c$ such that there exists a $1$-LRepSM for $\neg \pinv$ with 
$c$-bounded differences. We do this by constructing the system of linear 
inequalities $\vec{y}\cdot Z \geq \vec{d}$ and use linear programming to 
minimize $c$ under the constraints given by the system. Denote by $c_{\min}$ 
the optimal $c$.
\item For some fixed number of iterations $N$ we do the following: for each 
$0\leq j \leq N$ we:
\begin{compactitem}
\item
Compute a $1$-LRepSM $\lem_j$ for $\neg\pinv$ such that $\lem_j$ has 
$(c_{\min}+j)$-bounded differences and minimizes $\lem_j(\locinit,\vecinit)$ 
(i.e. 
maximizes $|\lem_j(\locinit,\vecinit)|$). We do 
this again by constructing the system $\vec{y}\cdot Z \geq \vec{d}$ (we need to 
change just the terms referring to difference bound) and minimizing the 
objective 
function $\lem_j(\locinit,\vecinit)$ using LP subject to the constraints of the 
system (since $(\locinit,\vecinit)$ is given, the objective function is 
linear).
\item
Apply Theorem~\ref{thm:repulsing-use-main} on $\lem_j$ to get a bound $p_j$ on 
reaching $\neg\pinv$.
\end{compactitem}
\item We put $p=\min_{c_{\min}\leq j \leq c_{\min}+N} p_j$ and output 
$(\pinv,p)$ as a 
stochastic invariant.
\end{compactenum}

In our experiments we used $N=1000$.

Now we turn to problem 2, i.e. computation of a stochastic invariant 
$(\pinv,p)$ such that $\pinv$ supports a linear ranking supermartingale for 
some set 
$C$. Since $\pinv$ might have in principle unbounded size, we first have to fix 
a template for it, i.e. specify how many conjuncts and disjuncts can each 
$\pinv(\loc)$ consist of. This amounts to specifying a \emph{symbolic} linear predicate map $S\inv$, where coefficients in each linear inequality of $S\inv$ are not concrete numbers but abstract symbols. Note that symbolic LPMs can be also used to describe unknown sets of configurations.

 Now take a look back on the above algorithms for computing $\eps$-LRSM or 
 $\eps$-LRepSM for a given set $C$ supported by a given LPM $\inv$. Previously, 
 we used LPMs with concrete coefficients to encode both $C$ and $\inv$ on 
 input, but we can supplant these with symbolic LPMs, effectively parametrizing 
 the inputs $C$ and $\inv$. Since the original algorithms produce systems 
 adjusted to $P$, $C$ and $\inv$, 
 when the algorithms are run with a symbolic LPM instead of a concrete LPM on 
 input, they produce a system of \emph{quadratic} inequalities (as the 
 coefficients in $C$ and $\inv$ are now unknown). It can be easily shown that 
 there is a one-to-one correspondence between solutions of such a quadratic 
 system and tuples $(\lem,C,\inv,\eps,c)$, where $\lem$ is an $\eps$-LRSM (or 
 $\eps$-LRepSM, depending on which of the two algorithms we use) with 
 $c$-bounded differences for a set $C$ supported by an LPM $\inv$ such that $C$ 
 and $\inv$ can be formed by instantiating the unknown parameters with concrete 
 numbers. 
 
 We now construct two quadratic systems of inequalities: system 
 $\mathcal{S}_1$, produced by the LRSM algorithm on input $P$, $C$ (here $C$ is 
 a given set of configurations, i.e. a concrete set), and $S\inv$ (which is a 
 symbolic LPM encoding a template for the stochastic invariant we seek), and 
 system $\mathcal{S}_2$, produced by the LRepSM algorithm on input $P$, $\neg 
 S\inv$ (a set encoded by a symbolic LPM; a locationwise negation of the 
 aforementioned template), and $\mathit{True}$, where 
 $\mathit{True}$ is a (concrete) trivial invariant true in every location. Note 
 that in the first system we treat $S\inv$ as
a symbolic representation of an LPM, while in the second one we treat it as a symbolic representation of a set of configurations to 
avoid. We then identify the 
variables in~$S_1$ and $S_2$ referring to the same unknown coefficients in $S\inv$. Simultaneously solving both systems yields triples $(\lem,C,\pinv,\eps,c)$ and $(\lem',\neg\pinv,\mathit{True},\eps',c')$, where $\lem$ is an $\eps$-LRSM with $c$-bounded differences for $C$ supported by $\pinv$, and $\lem'$ is an $\eps'$-LRepSM with $c'$-bounded references for $\neg \pinv$ that can be used to bound the probability of violating $\pinv$.
We note that checking the solvability of a quadratic systems of 
inequalities can be done in PSPACE by reduction to existential first-order 
theory of reals. Also note that instead of $\mathit{True}$ we can use any other pure invariant.

\begin{theorem}
Existence of a LRepSM 
for a given set $C$ can be decided in polynomial time provided that $C$ is 
polyhedral. Existence of an LPM $\pinv$ such that $\pinv$ 
supports some 
LRSM for a given polyhedral set and at the same time $\neg\pinv$ admits an  
LRepSM can be 
reduced to existential first-order theory of reals and thus decided in PSPACE.
\end{theorem}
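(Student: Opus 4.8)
The plan is to prove the two assertions separately, in each case making the informal reductions of the preceding paragraphs precise. Both parts reduce the existence of a linear (repulsing or ranking) supermartingale to the feasibility of a system of (in)equalities, and the substantive work lies in checking that the produced systems have the claimed shape and size and that their feasibility is \emph{equivalent} to existence of the desired object.

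For the first part I would fix $\eps=1$ without loss of generality (an $\eps$-LRepSM may be rescaled by any positive constant), treat the coefficients $b^\loc,a_1^\loc,\dots,a_{|\pvars|}^\loc$ of the sought LEM $\lem$ together with the difference bound $c$ as the only unknowns, and rewrite each defining condition of Definition~\ref{def:lrepsm} as a universally quantified implication. Since $C$ is polyhedral, the non-negativity condition is quantified over the polyhedron $\inv(\loc)\wedge C$, while the drift condition $\preexp{\lem}(\loc,\vec{x})\le\lem(\loc,\vec{x})-\eps$ is quantified over $\inv(\loc)\wedge\neg C$; as $\neg C$ is a disjunction of linear assertions, this last antecedent is split conjunctively into one implication per disjunct, and the $c$-bounded-differences condition splits into the two affine inequalities bounding $\lem(\loc,\vec{x})-\lem(\loc',\vec{x}')$ by $\pm c$. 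Hence every condition becomes a conjunction of formulas of the form~\eqref{eq:formulae}, whose antecedent $\varphi$ has concrete numeric coefficients and whose consequent is linear in $\{c,\eps,x_1,\dots,x_{|\pvars|}\}$. Applying Farkas's lemma to each formula yields an equisatisfiable existentially quantified conjunction of linear inequalities over the unknowns, i.e. a single linear system $\vec{y}\cdot Z\ge\vec{d}$ adjusted to $P$, $C$ and $\inv$. Because all coefficients entering $Z$ and $\vec{d}$ come from $\inv$, from the polyhedral description of $C$, or from the fixed templates of the conditions, the system is built in polynomial time and has polynomial size, so its feasibility is decidable by linear programming in polynomial time. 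The equivalence ``an $\eps$-LRepSM with $c$-bounded differences for $C$ supported by $\inv$ exists iff $\vec{y}\cdot Z\ge\vec{d}$ is feasible'' is exactly the completeness statement, holding because each Farkas step is an equivalence; this reproduces the argument of~\cite{CFNH16:prob-termination} for LRSMs, the only change being that the non-negativity constraint is now imposed inside $C$ rather than outside it, which does not alter the form~\eqref{eq:formulae}.

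For the second part I would first fix a template, i.e. a symbolic linear predicate map $S\inv$ whose coefficients are fresh unknowns. The key observation is that the construction above is \emph{adjusted} to $\inv$ and $C$: every nonconstant term of the produced system has the form $y\cdot z$ with $y$ an unknown and $z$ a coefficient of $\inv$ or of the description of $C$. Consequently, running the LRSM algorithm on input $P$, the concrete polyhedral set $C$, and the symbolic map $S\inv$ turns each such $z$ into an unknown and each term $y\cdot z$ into a quadratic monomial, producing a quadratic system $\mathcal{S}_1$ whose real solutions correspond to tuples $(\lem,C,\inv,\eps,c)$ in which $\lem$ is an $\eps$-LRSM with $c$-bounded differences for $C$ supported by the instantiation $\inv$ of $S\inv$. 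Symmetrically, running the LRepSM algorithm on input $P$, the symbolic set $\neg S\inv$, and the concrete trivial invariant $\mathit{True}$ produces a quadratic system $\mathcal{S}_2$ whose solutions correspond to tuples $(\lem',\neg\pinv,\mathit{True},\eps',c')$ witnessing an LRepSM for $\neg\pinv$. I would then identify, across $\mathcal{S}_1$ and $\mathcal{S}_2$, the variables standing for the same coefficients of $S\inv$, so that a common solution of $\mathcal{S}_1\wedge\mathcal{S}_2$ instantiates a single $\pinv$ that simultaneously supports an LRSM for $C$ and whose negation admits an LRepSM. The combined system is a finite conjunction of polynomial inequalities of degree two over real unknowns, which is precisely an instance of the existential first-order theory of the reals; its satisfiability is decidable, and by the standard complexity of that theory it lies in PSPACE. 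Finally, by Corollary~\ref{coro:main} applied to the LRepSM extracted from $\mathcal{S}_2$, the resulting $\pinv$ yields a genuine stochastic invariant, the object required for probabilistic termination.

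I expect the main obstacle to be the solution correspondence in the second part: one must verify carefully that symbolizing the coefficients of $\inv$ (resp. of the set description) does not disturb the Farkas step but merely leaves those coefficients as parameters, so that for every fixed instantiation of $S\inv$ the quadratic system specialises \emph{exactly} to the concrete adjusted linear system of the first part. Granting this, soundness and completeness of the quadratic encoding follow instantiation-by-instantiation from the first part, and the degree-two bound guarantees membership in the existential theory of the reals and hence in PSPACE.
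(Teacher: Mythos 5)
Your proposal is correct and follows essentially the same route as the paper: part one reuses the Farkas-based constraint encoding of \cite{CFNH16:prob-termination}, observing that the only change (non-negativity inside $C$ rather than outside) preserves the form~\eqref{eq:formulae} and hence reduces to an LP of polynomial size, and part two symbolizes the coefficients of the template $S\inv$, runs the LRSM encoding on $(P,C,S\inv)$ and the LRepSM encoding on $(P,\neg S\inv,\mathit{True})$, identifies the shared unknowns, and decides the resulting quadratic system via the existential theory of the reals in PSPACE. The extra care you take with the solution correspondence under symbolization is exactly the ``adjusted to $P$, $C$, and $\inv$'' property the paper relies on, so there is no substantive difference.
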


\section{Experimental Results}\label{sec:exp}

\begin{table*}
{\footnotesize
\begin{tabular}{|l|l|l|}
\hline
 & Initial Configuration & $\pinv$ Violation Probability Bound\\
\hline
\hspace{-1mm}Example 1\hspace{-3mm}& (i) $x:=10$, (ii) $x:=50$, (iii) $x:=100$ 
& (i) 
$5.1\cdot10^{-5}$, (ii) $1.0\cdot10^{-4}$, (iii) $2.5\cdot10^{-4}$\\
\hspace{-1mm}Example 2\hspace{-3mm}& (i) $x,y:=1000,10$, (ii) $x,y:=500,40$, 
(iii) $x,y:=400,50$ & (i) 
$2.4\cdot10^{-11}$, (ii) $5.5\cdot10^{-4}$, (iii) $1.9\cdot10^{-2}$ \\
\hspace{-1mm}Example 3\hspace{-3mm}& (i) $x,y,z:=100,100,100$, (ii) 
$x,y,z:=100,150,200$, (iii) $x,y,z:=300,100,150$\hspace{-1mm} & (i) 
$4.4\cdot10^{-17}$, (ii) 
$2.9\cdot10^{-9}$, (iii) $1.3\cdot10^{-7}$ \\

\hline
\end{tabular}
}
\caption{Experimental results}\label{tab:exp}
\end{table*}

In this section we present some basic experimental results for our methods. 
The experimental results are basic and to verify that the new concepts we introduce
are relevant. 
We consider three simple academic examples described below.
In the corresponding pseudocode, we present invariants in square brackets.
\begin{compactenum}
\item {\em Example 1:} The first example is a one-dimensional random walk which initially moves 
with higher probability to the left as compared to the right. However, if $x$ 
is incremented above $1000$, the process starts drifting away from zero, so the 
program does not terminate almost-surely.
The details of the example along with invariants is given in Figure~\ref{fig:a}.

\item {\em Example 2:} In the second example, we have two variables $x$ and 
$y$, and the program models a generalized 2-dimensional random walk. Variable 
$x$ tends to drift away from zero while $y$ tends to drift towards zero and 
thus towards satisfaction of the termination condition. However, if $x$ hits 
zero, the program gets stuck in an infinite loop, so we want to show that the 
probability of this happening is small.
The details of the example along with invariants is given in Figure~\ref{fig:b}.

\item {\em Example 3:} In the third example (Figure~\ref{fig:c}), we have three 
variables $x,y,z$.
For various combinations, with high probability we either decrease both $x$ and $y$,
or $z$, and with low probability we either increase both $x$ and $y$, or $z$.
But the increments and decrements are not proportional, and this is indeed a 3-dimensional
example. We note that the program in this example terminates almost-surely, 
but this does not simplify the computation of the probability bound for the 
given LPM $\pinv$.

\end{compactenum}
We consider various initial configurations of the examples. For each example we obtain a probability threshold
for a given LPM $\pinv$ (and thus obtain a stochastic invariant).
Our experimental results are shown in Table~\ref{tab:exp}.
In all the cases, our method creates a linear program which can be efficiently 
solved using 
any standard
solver (such as lpsolve~\cite{BEN:lpsolve}, CPlex~\cite{cplex}).

\lstset{language=affprob}
\lstset{tabsize=3}
\newsavebox{\expa}
\begin{lrbox}{\expa}
\begin{lstlisting}[mathescape]
while $x \geq 0$ do                 [$x \geq -2$]
	if $x \leq 1000$ then               [$x \geq 0$]
		if prob(0.5) then   [$x \geq 0$ and $x \leq 1000$]
			$x:=x-2$                 [$x \geq 0$ and $x \leq 998$]
		else
			$x:=x+1$                 [$x \geq 0$ and $x \leq 1001$]
		fi
	else
		if prob(0.5) then   [$x \geq 1001$]
			$x:=x-1$                 [$x \geq 1000$]
		else
			$x:=x+2$                 [$x \geq 1002$]
		fi
	fi
od

$\pinv:$ [$x \leq 1000$] at location 2
(first 'if'-branching),
'true' elsewhere.
\end{lstlisting}
\end{lrbox}

\lstset{language=affprob}
\lstset{tabsize=3}
\newsavebox{\expb}
\begin{lrbox}{\expb}
\begin{lstlisting}[mathescape]
while $1 \leq y$ do                 [$y \geq 0$]
	if prob(0.5) then      [$y \geq 1$]
		if prob(0.75) then  [$y \geq 1$]
			$x:=x+1$                 [$y \geq 1$]
		else
			$x:=x-1$                 [$y \geq 1$]
		fi
	else
		if prob(0.75) then  [$y \geq 1$]
			$y:=y-1$                 [$y \geq 0$]
		else
			$y:=y+1$                 [$y \geq 2$]
		fi
	fi                     [$y \geq 0$]
	while $x \leq 0$ do
		$x:=0$                    [$x \leq 0$ and $y \geq 0$]
	od
od

$\pinv$: [$x \geq 1$] at location 9
(entry of inner while-loop),
'true' elsewhere.
\end{lstlisting}
\end{lrbox}

\lstset{language=affprob}
\lstset{tabsize=3}
\newsavebox{\expc}
\begin{lrbox}{\expc}
\begin{lstlisting}[mathescape]
while $x \geq 0$ and $y \geq 0$ and $z \geq 0$ do 
          [$x \geq -1$ and $y \geq -1$ and $z \geq -1$]
   if prob(0.9) then                             
          [$x \geq 0$ and $y \geq 0$ and $z \geq 0$]
      if prob(0.5) then                           
          [$x \geq 0$ and $y \geq 0$ and $z \geq 0$]
         $x,y:=x-1,y-1$                           
          [$x \geq -1$ and $y \geq -1$ and $z \geq 0$]
      else $z:=z-1$                               
          [$x \geq 0$ and $y \geq 0$ and $z \geq -1$]
      fi
   else
      if prob(0.5) then                           
          [$x \geq 0$ and $y \geq 0$ and $z \geq 0$]
         $x,y:=x+0.1,y+0.2$                       
          [$x \geq 0$ and $y \geq 0$ and $z \geq 0$]
      else $z:=z+0.1$                             
          [$x \geq 0$ and $y \geq 0$ and $z \geq 0$]
      fi
   fi
od

$\pinv$: $[x+y+z \leq 1000]$ at location 1
(entry of outer loop),
'true' elsewhere.
\end{lstlisting}
\end{lrbox}

\begin{figure}[t]
\usebox{\expa}
\caption{Example~1.}\label{fig:a}
\end{figure}

\begin{figure}[t]
\usebox{\expb}
\caption{Example~2.}\label{fig:b}
\end{figure}

\begin{figure}[t]
\usebox{\expc}
\caption{Example~3.}\label{fig:c}
\end{figure}

\section{Related Work}

\smallskip\noindent{\em Probabilistic programs.} In the 70's and 80's, several 
semantic approaches for reasoning about probabilistic programs (including 
termination probabilities) were considered, most of them being based on 
probabilistic extensions of \emph{dynamic logic}~\cite{Pratt:dyn-logic}. 
In~\cite{Reif:prob-dl}, one such extension, PROB-DL, is applied to a restricted 
class of programs where there are no if-then-else branchings and no variable 
tests in loop guards (instead, loops are terminated according to a geometric 
distribution). A powerful probabilistic logic called \textbf{Pr}Dl was 
introduced in~\cite{FH:prdl}, allowing for first-order reasoning about events 
in the domain of computation and their effects on probabilities of assertions. 
The authors present an axiom system for \emph{Pr}Dl that is complete relatively 
to the underlying domain-specific logic (which might be undecidable in 
general), which allows one to check the validity of program properties 
"directly, (though,\dots, in general, not effectively)"~\cite{FH:prdl}. 
Decidable propositional fragments of probabilistic dynamic logic were studied 
in~\cite{Kozen:probabilistic-PDL,Feldman:propositional-probdl}, although as 
noted in~\cite{Kozen:probabilistic-PDL}, for practical verification purposes 
these would need to be extended with logic for reasoning about the computation 
domain. Moreover, none of the above approaches consider programs with 
non-determinism.

In the realm of probabilistic programs with non-determinism, the termination problems for probabilistic programs with discrete 
choices have been considered in~\cite{MM04,MM05}, but for probabilistic 
programs with infinite-state space and choices, only the qualitative 
problem has been studied.
The qualitative problem of almost-sure termination has been considered in 
several works such 
as~\cite{SriramCAV,BG05,HolgerPOPL,CFNH16:prob-termination,CFG16:positivstellensatz-arxiv}.
The termination for concurrent probabilistic programs under fairness was considered 
in~\cite{SPH84}.
A sound and complete characterization of almost-sure termination for countable 
state space was given in~\cite{HS85}.
A sound and complete method for proving termination of finite-state programs
was given in~\cite{EGK12}.
All previous works either consider discrete probabilistic choices or finite-state
space, or for general probabilistic programs consider the qualitative problem of 
almost-sure termination.
All works for almost-sure termination consider RSMs w.r.t. a given invariant.
In contrast, in this work we consider stochastic invariants and the probabilistic 
termination problem, and our results are applicable to probabilistic programs
with infinite-state space.

The use of martingales in probabilistic program analysis extends beyond 
termination properties. In~\cite{DP:book} martingales are used to derive bounds 
on expected termination time of randomized algorithms. 
In~\cite{CS14:expectation-invariants} they introduce \emph{expectation 
invariants} for single-loop probabilistic programs, which are statements about 
\emph{expected value} of program 
expressions whose validity is invariant during the program execution. In 
contrast, our stochastic invariant approach reasons about the 
\emph{probability} of a given assertion's validity.

There is also work on establishing a probability that a certain assertion 
holds. In~\cite{DBLP:conf/pldi/SampsonPMMGC14} 
they consider approximating the probability of assertions 
using optimized simulation, under semantics assuming terminating while loops. 
A method for approximating a probability of assertion based on symbolic 
execution was given in~\cite{Sankaranarayanan:2013:SAP:2491956.2462179}, where 
they 
also assume almost-surely terminating programs. Several works considered 
approximating the behaviour of probabilistic programs by abstracting them into 
finite Markov chains or MDPs~\cite{HWZ08,kattenbelt2009abstraction}. On the 
other hand, our repulsing 
supermartingales 
do not need any abstraction or simulation techniques to work, although we 
conjecture that they could be 
fruitfully 
combined with abstraction approaches to ``cut away'' configurations that are 
unlikely to be reached and thus reduce the size of the abstractions. 
In~\cite{DBLP:conf/sas/KatoenMMM10}, a Hoare-style calculus based on 
\emph{weakest pre-expectations} is used to reason about probabilistic effects 
of 
terminating programs, with a practical application presented 
in~\cite{GKM13:prinsys-expectation-invariant-tool}. The weakest pre-condition 
style of reasoning was also adapted for reasoning about expected running times 
of probabilistic programs~\cite{KKMO16:wp-expected-runtime}.

\smallskip\noindent{\em Non-probabilistic programs.}
Termination analysis of non-probabilistic programs has received a lot of
attention over the last decade, such as~\cite{DBLP:conf/cav/BradleyMS05,DBLP:conf/tacas/ColonS01,DBLP:conf/vmcai/PodelskiR04,DBLP:conf/pods/SohnG91,BMS05b,CSZ13,LJB01}.
Most of these works consider various notions of ranking functions for termination.
RSMs are a generalization of ranking functions, which has been studied for 
almost-sure
termination.
The extension of almost-sure termination to probabilistic termination needs
new conceptual ideas and methods which we present in this work.

\section{Conclusion and Future Work}
We considered the basic quantitative question of probabilistic
termination for probabilistic programs.
We introduced stochastic invariants for probabilistic termination, 
and repulsing supermartingales as the new concept that allows us to analyse
the problem of probabilistic termination.
There are several directions for future work.
The first one is to consider special cases of non-linear repulsing 
supermartingales
(such as polynomial repulsing supermartingales), and study whether efficient algorithmic 
approaches can be developed for them as well.
The second interesting direction is to consider practical approaches for the synthesis of 
stochastic invariants, as the theoretical results use the existential first 
order
theory of the reals.

\section*{Acknowledgements} This research was partially supported by Austrian 
Science Fund (FWF) NFN Grant No S11407-N23 (RiSE/SHiNE), ERC Start grant 
(279307: Graph Games), and Vienna Science and Technology Fund (WWTF) through 
project ICT15-003. The research leading to these results has received funding 
from the People Programme (Marie Curie Actions) of the European Union's Seventh 
Framework Programme (FP7/2007-2013) under REA grant agreement no [291734]. 
\DJ{}or\dj{}e \v{Z}ikeli\'c participated in this research during a research 
visit at IST Austria, funded by the OeAD Sonderstipendien, IST AUSTRIA 
programme awarded by the Austrian Agency for International Cooperation in 
Education and Research.

\bibliographystyle{abbrvnat}
\bibliography{bibliography-master,new}
\clearpage
\appendix

\begin{center}
{\Large Supplementary Material}
\end{center}

\section{Details of Section~\ref{sec:prelim}}

\smallskip\noindent{\em Notation for satisfaction of unary PLP $\phi$.}
For a unary PLP $\phi$, i.e. a PLP containing just a single variable $x$, we 
sometimes identify $\exsat{\phi}$ with the corresponding set of numbers defined by $\phi$, i.e. 
the set $\{\nu(x)\mid\nu \models \phi\}$. It will be always clear from the 
context whether $\exsat{\val}$ is deemed a set of valuations or of numbers.

\subsection{Details of Section~\ref{subsec:syntax}}
In this subsection we present the details of the syntax of affine probabilistic programs.

Recall that $\vars$ %
is a collection of 
\emph{variables}.
Moreover, let $\mathcal{D}$ be a set of \emph{probability distributions} on 
real numbers.
The abstract syntax of affine
probabilistic programs (\APP s)
is given by the grammar in Figure~\ref{fig:syntax}, where
the expressions $\langle \mathit{pvar}\rangle$ and $\langle
\mathit{dist}\rangle$  range over $\vars$ and $\mathcal{D}$, respectively.
We allow for non-deterministic assignments, expressed by a statement $x:=
\text{\textbf{ndet($\mathit{dom}$)}}$, where $\mathit{dom}$ is a \emph{domain
specifier} determining the set from which the value can be chosen (e.g. 
specifier $\text{\textbf{Int[$a,b$]}}$ restricts the choice to integers in the 
closed
interval $[a,b]$).  
The grammar is such that $\langle \mathit{expr} \rangle$ %
may evaluate to an arbitrary affine expression over the
program variables
Next, $\langle
\mathit{bexpr}\rangle$ may evaluate to an arbitrary propositionally linear
predicate.

The guard of each if-then-else statement is either $\star$, 
representing a (demonic) non-deterministic choice between the branches,
a keyword \textbf{prob}($p$), where $p\in [0,1]$ is a number given in decimal
representation (represents a probabilistic choice, where  the if-branch is
executed with probability $p$ and the then-branch with probability $1-p$), or
the guard is a propositionally linear predicate, in which case the statement
represents a standard deterministic conditional branching.

We assume that each \APP{} $\program$ is preceded by an initialization preamble 
in 
which 
each variable appearing in $\program$ is assigned some concrete number. 
Regarding distributions, for each $d\in \mathcal{D}$ we assume the
existence of a program primitive denoted by '\textbf{sample($d$)}' implementing 
sampling from $d$. In practice, the distributions appearing in a program would 
be those for which sampling is 
provided by suitable libraries (such as 
uniform distribution over some interval, Bernoulli, geometric, etc.), 
but we 
abstract away from these implementation details. For the purpose of our 
analysis, it is sufficient that for each distribution $d$ appearing in the 
program the following characteristics: expected value $\expv[d]$ of $d$ and a 
unary PLP $\varphi_d$ such that the \emph{support} of $d$ (i.e. the smallest 
closed set of real numbers whose complement has probability zero 
under $d$)\footnote{In particular, a support of a \emph{discrete} probability 
distribution $d$ is simply the at most countable set of all points on a real 
line that have positive probability under $d$. For continuous distributions, 
e.g. a normal distribution, uniform, etc., the support is typically either 
$\Rset$ or some closed real interval. } satisfies 
$\support(d)\subseteq\exsat{\varphi_d}$.

\begin{figure}
\begin{align*}
\langle \mathit{stmt}\rangle &::= 
\langle \mathit{assgn} \rangle \mid \text{'\textbf{skip}'} \mid 
\langle\mathit{stmt}\rangle \, \text{';'} \, \langle \mathit{stmt}\rangle \\
&\mid   \text{'\textbf{if}'} \,
\langle\mathit{ndbexpr}\rangle\,\text{'\textbf{then}'} \, \langle
\mathit{stmt}\rangle \, \text{'\textbf{else}'} \, \langle \mathit{stmt}\rangle
\,\text{'\textbf{fi}'}
\\
&\mid  \text{'\textbf{while}'}\, \langle\mathit{bexpr}\rangle \,
\text{'\textbf{do}'} \, \langle \mathit{stmt}\rangle \, \text{'\textbf{od}'}
\\
\langle \mathit{assgn} \rangle &::= 
\,\langle\mathit{pvar}\rangle
\,\text{'$:=$'}\, \langle\mathit{rexpr} \rangle \mid 
\langle\mathit{pvar}\rangle \,\text{'$:=$}\,
\text{\textbf{ndet($\langle\mathit{dom}\rangle$)}'}
\\
&\mid \langle\mathit{pvar}\rangle \,\text{'$:=$}\,
\text{\textbf{sample($\langle\mathit{dist}\rangle$)}'}
\\
\vspace{0.5\baselineskip}
\langle\mathit{expr} \rangle &::= \langle \mathit{constant} \rangle \mid
\langle\mathit{pvar}\rangle
\mid \langle \mathit{constant} \rangle \,\text{'$\cdot$'} \,
\langle\mathit{pvar}\rangle
\\
&\mid \langle\mathit{expr} \rangle\, \text{'$+$'} \,\langle\mathit{expr} \rangle
\mid \langle\mathit{expr} \rangle\, \text{'$-$'} \,\langle\mathit{expr} \rangle
\\
\vspace{0.5\baselineskip}
\langle \mathit{dom} \rangle &::= \text{'\textbf{Int}'} \mid
\text{'\textbf{Real}'} \mid
\text{'\textbf{Int}$[\langle\mathit{constant}\rangle,\langle\mathit{constant}\rangle]$'}
 \\ 
 &\mid
\text{'\textbf{Real}$[\langle\mathit{constant}\rangle,\langle\mathit{constant}\rangle]$'}
\mid \langle \mathit{dom} \rangle \text{'\textbf{or}'}\langle \mathit{dom}
\rangle
\\
\vspace{0.5\baselineskip}
\langle \mathit{bexpr}\rangle &::=  \langle \mathit{affexpr} \rangle \mid
\langle \mathit{affexpr} \rangle \, \text{'\textbf{or}'} \,
\langle\mathit{bexpr}\rangle
\vspace{0.5\baselineskip}
\\
\langle\mathit{affexpr} \rangle &::=  \langle\mathit{literal} \rangle\mid
\langle\mathit{literal} \rangle\, \text{'\textbf{and}'}
\,\langle\mathit{affexpr} \rangle
\\
\langle\mathit{literal} \rangle &::= \langle\mathit{expr} \rangle\,
\text{'$\leq$'} \,\langle\mathit{expr} \rangle \mid \langle\mathit{expr}
\rangle\, \text{'$\geq$'} \,\langle\mathit{expr} \rangle
\\
&\mid \neg \langle \mathit{literal} \rangle
\\
\langle\mathit{ndbexpr} \rangle &::= {\star}\mid
\text{'\textbf{prob($p$)}'} \mid \langle\mathit{bexpr} \rangle
\end{align*}
\caption{Syntax of affine probabilistic programs (\APP 's).}
\label{fig:syntax}
\end{figure}

\subsection{Details of Section~\ref{subsec:semantics}}

\begin{remark}[Use of random variable]\label{rem:randuse}
In the paper we sometimes work with random variables 
that are functions of the type $R\colon\Omega \rightarrow S$ for some finite 
set $S$. These can be captured by the above definition by identifying the 
elements of $S$ with distinct real numbers.\footnote{This is equivalent to 
saying that a function $R\colon \Omega\rightarrow S$, with $S$ finite, is a 
random variable if for each $s\in S$ the set $\{\omega\in \Omega\mid 
R(\omega)=s\}$ belongs to $\mathcal{F}$.} The exact choice of numbers is 
irrelevant in such a case, as we are not interested in, e.g. computing expected 
values of such random variables, or similar operations. 
\end{remark}

\paragraph*{From Programs to pCFGs}
To every affine probabilistic program $P$ we can assign a pCFG $\pCFG_P$ whose 
locations correspond to the values of the
program counter of $P$ and whose transition relation captures the behaviour of
$P$. To obtain $\pCFG_{P}$, we first rename the variables in $P$ to 
$x_1,\dots,x_n$, where $n$ is the number of distinct variables in the program. 
The 
initial assignment vector $\vec{x}_0$ being specified in the program's 
preamble. The
construction of $\pCFG_P$ can be described inductively.
For each program $P$ the pCFG $\pCFG_P$ contains two distinguished
locations, $\ell^{\lin}_{P}$ and $\ell^{\lout}_{P}$, the latter one being always
deterministic, that intuitively represent the state of the program counter
before and after executing $P$, respectively. In the following, we denote by 
$\id_1$ a function such that for each $\vec{x}$ we have 
$\id_{1}(\vec{x})=\vec{x}[1]$.
\begin{compactenum}
\item {\em Deterministic Assignments and Skips.}
For $P= {x_j}{:=}{E}$ where $x_j$ is a program variable and $E$ is an arithmetic
expression, or $P = \textbf{skip}$, the pCFG $\pCFG_P$ consists only of
locations $\ell^{\lin}_P$ and $\ell^{\lout}_P$ (both deterministic) and a
transition $(\ell^{\lin}_{P},j,E,\ell^{\lout}_P)$ or
$(\ell^{\lin}_{P},1,\id_1,\ell^{\lout}_P)$, respectively.
\item {\em Probabilistic and Non-Deterministic Assignemnts}
For $P= {x_j}{:=}{\textbf{sample($d$)}}$ where $x_j$ is a program variable and 
$d$ is a distribution, the pCFG $\pCFG_P$ consists locations $\ell^{\lin}_P$ 
and $\ell^{\lout}_P$ and a
transition $(\ell^{\lin}_{P},j,d,\ell^{\lout}_P)$. For $P= 
{x_j}{:=}{\textbf{ndet($\mathit{dom}$)}}$, the construction is similar, with 
the only transition being $(\ell^{\lin}_{P},j,D,\ell^{\lout}_P)$, where $D$ is 
the set specified by the domain specifier $\mathit{dom}$.

\item {\em Sequential Statements.}
For $P = Q_1;Q_2$ we take the pCFGs $\pCFG_{Q_1}$, $\pCFG_{Q_2}$ and
join them by identifying the location $\ell^{\lout}_{Q_1}$ with
$\ell^{\lin}_{Q_2}$, putting $\ell^{\lin}_{P}=\ell^{\lin}_{Q_1}$ and
$\ell^{\lout}_{P}=\ell^{\lout}_{Q_2}$.

\item {\em While Statements.}
For $P = \textbf{while $\phi$ do }Q \textbf{ od}$ we add a new deterministic
location $\ell^{\lin}_{P}$ which we identify with $\ell^{\lout}_{Q}$, a new
deterministic location $\ell^{\lout}_{P}$, and transitions
$\tau=(\ell^{\lin}_{P},1,\id_1,\ell^{\lin}_{Q})$,
$\tau'=(\ell^{\lin}_{P},1,\id_1,\ell^{\lout}_{P})$ such that $G(\tau)=\phi$ and
$G(\tau')=\neg\phi$.

\item {\em If Statements.}
Finally, for $P = \textbf{if $\mathit{ndb}$ then }Q_1 \textbf{ else } Q_2
\textbf{ fi}$ we add a new location $\ell^{\lin}_{P}$ together with two
transitions $\tau_1 = (\ell^{\lin}_{P},1,\id_1,\ell^{\lin}_{Q_1})$, $\tau_2 =
(\ell^{\lin}_{P},1,\id_1,\ell^{\lin}_{Q_2})$, and we identify the locations 
$\ell^{\lout}_{Q_1}$ and $\ell^{\lout}_{Q_1}$ with $\ell^{\lout}_{P}$. (If both
$Q_j$'s consist of a single statement, we also identify $\ell^\lin_{P}$ with 
$\ell^{\lin}_{Q_j}$'s.) In this
case the newly added location $\ell^\lin_{P}$ is non-deterministic if and only 
if
$ndb$ is the keyword '$\star$'. If
$\mathit{ndb}$ is of the form $\textbf{prob($p$)}$, the location $\ell^\lin_{P}$
is probabilistic with $\probdist_{\ell^\lin_{P}}(\tau_1)=p$ and
$\probdist_{\ell^\lin_{P}}(\tau_2)=1-p$. Otherwise (i.e. if $\mathit{ndb}$ is a
propositionally linear predicate), $\ell^\lin_{P}$ is a deterministic location
with $G(\tau_1)=\mathit{ndb}$ and $G(\tau_2)=\neg \mathit{ndb}$.
\end{compactenum}
Once the pCFG $\pCFG_P$ is constructed using the above rules, we put
$G(\tau)=\textit{true}$ for all transitions $\tau$ outgoing from deterministic
locations whose guard was not set in the process, and finally we add a self-loop
on the location $\ell^{\lout}_P$. This ensures that the assumptions in
Definition~\ref{def:stochgame} are satisfied.
Furthermore note that for pCFG obtained for a program $P$, since the only
branching is conditional branching, every location $\loc$ has at most two
successors $\loc_1,\loc_2$.

\section{Details of Section~\ref{sec:invm}}
In this section we present various illustrations of the definitions of Section~\ref{sec:invm}.

\smallskip
\begin{example}[Illustration of pure linear invariant]
Consider the simple \APP{} in Figure~\ref{fig:invariant-running} modelling an 
asymmetric 1-dimensional random walk. An LPM $\inv$ such that $\inv(\loc_0)$ is 
$x\geq -1$, $\inv(\loc_1)$ is $x\geq 0$, and $\inv(\loc_2)$ is $x\leq 0$ is a 
pure 
linear invariant.
\end{example}

\smallskip
\begin{example}[Illustration of supermartingale]
Consider again the program in Figure~\ref{fig:invariant-running} and define a 
stochastic process $\{X_{i}\}_{i=0}^{\infty}$ such that $X_i$ is the value of 
$x$ after $i$ steps of program execution. Then $\{X_{i}\}_{i=0}^{\infty}$ is a 
supermartingale w.r.t. canonical filtration: no matter the previous history of 
execution, if the current location is $\loc_0$ or $\loc_2$, the variable $x$ 
will not change in the next step (in particular, it will not increase). If the 
current location is $\loc_1$ and the current value of the variable is $n$, then 
in the next step the expected value of $x$ is 
$\frac{3}{4}(n-1)+\frac{1}{4}(n+1)=n-\frac{1}{2}\leq n$.
\end{example}

\smallskip
\begin{example}[Illustration of ranking supermartingales]
\label{ex:rsm}
Going back to the program in Figure~\ref{fig:invariant-running}: For $\eps>0$ 
the process 
$\{X_i\}_{i=0}^{\infty}$, where $X_i$ is the value of $x$ after $i$ steps, is 
\emph{not} an $\eps$-RSM for the termination time $\ttime$. 
This is because when we are in location $\loc_0$ and $x\geq 0$, then in the 
next step the value of $x$ does not change, in particular in does not decrease. 
However, the following process $\{X_i'\}_{i=0}^{\infty}$ \emph{is} an 
$\frac{1}{4}$-RSM for $\ttime$: if the location in step $i$ is $\loc_0$ or 
$\loc_1$, 
then for $i$ even $X_i'$ is the value of $x$ after $i$ 
steps, 
while for odd $i$, $X_i'$ is equal to the current value of $x$ minus 
$\frac{1}{4}$. If the location in step $i$ is $\loc_2$, then $X_{i+1}'$ equals 
$X_i$.
\end{example}

\smallskip
\begin{remark}
Definition~\ref{def:ranking} slightly diverges from the corresponding 
definition in, e.g.~\cite{CFNH16:prob-termination,HolgerPOPL}, where the RSM 
was required to decrease on average until it becomes non-positive, while we 
require it to decrease until the set $C$ is reached and reaching $C$ implies 
non-positivity of the RSM. While this is a rather technical difference (and 
indeed the earlier proofs related to RSM can be easily adapted to work with our 
definition), 
it allows us to easily apply RSM techniques on a larger class of  
programs.
\end{remark}

\smallskip
\begin{example}[Illustration of pre-expectation]
Again consider the program from Figure~\ref{fig:invariant-running}, and a LEM 
$\lem$ such that $\lem(\loc)=x$ for each location $\loc$. Then 
$\preexp{\lem}(\loc_1,5)=4*\frac{3}{4}+6\cdot\frac{1}{4}=\frac{9}{2}$ and $ 
\preexp{\lem}(\loc_0,7)=7$.
\end{example}

\smallskip
\begin{example}[Illustration of linear ranking supermartingales]
\label{ex:lrsm}
Again consider Figure~\ref{fig:invariant-running}, and take a LEM $\lem$ with 
$\lem(\loc_0)=x$, $\lem(\loc_1)=x-\frac{1}{4}$, and 
$\lem(\loc_2)=x-\frac{1}{4}$. Finally, consider an invariant $\inv$ such that 
$\inv(\loc_0)\equiv\inv(\loc_1) \equiv x\geq 0$ and 
$\inv(\loc_2)\equiv\mathit{true}$. Then $\lem$ is a $\frac{1}{4}$-LRSM for the 
set of terminal configurations
supported by $\inv$. However $\lem$ is \emph{not} a $\frac{1}{4}$-LRSM 
for the same set
if supported by an invariant $\inv'$ which assigns $\mathit{true}$ to every 
location.
This is because there are $x\in \Rset$ such that $\lem(\loc_0,x)<0$.
\end{example}

\section{Proofs}

\subsection{Proof of Lemma~\ref{thm:pure-supermart} and 
Lemma~\ref{thm:rep-supermart-connect}}

We present just the proof for the first Lemma, as for the second one the proof 
is almost identical.

\smallskip
\begin{reflemma}{thm:pure-supermart}
Let $P$ be an \APP{} and $\lem$ a linear $\eps$-ranking 
supermartingale for some set $\confset$ of  configurations of $\pCFG_P$ 
supported by $\inv$. 
Let $\neg \inv$ be the set of all configurations $(\loc,\vec{x})$ such that $\val_\vec{x}\not\models 
\inv(\loc)$.
Finally, let
$\{X_i\}_{i=0}^{\infty}$ be a stochastic process defined by 
\[
X_i(\run) = \begin{cases}
\lem( 
\cfg{\sigma}{i}(\run)) & \text{ if $\treach{C\cup \neg \inv} \geq i$} \\
X_{i-1}(\run) & \text{otherwise}.
\end{cases}
\]
Then under each scheduler $\sigma$ the stochastic process 
$\{X_i\}_{i=0}^{\infty}$ is an $\eps$-ranking 
supermartingale for $\treach{C\cup \neg \inv}$. 
Moreover, if $\lem$ has $c$-bounded differences, then so has 
$\{X_i\}_{i=0}^{\infty}$.
In  particular, if $\inv$ is a pure invariant of $P$, then 
 $\{X_i\}_{i=0}^{\infty}$ is an $\eps$-ranking 
supermartingale for $\treach{C}$.
\end{reflemma}
\begin{proof}
Denote by $\locs$ the set of program 
locations of $\pCFG_P$. 

Clearly, 
$\{X_i\}_{i=0}^{\infty}$ is adapted to the canonical filtration. Next, we need 
to prove that $\E^{\sigma}[|X_i|]<\infty$ for each $\sigma$ and $i$. Since each 
$X_i(\run)$ is an affine function of $\vec{x}^\sigma_i(\run)$, it suffices to 
prove that for each $i\in\Nset_0$ and each $1\leq j \leq |\pvars|$ it holds 
$\E^\sigma[|\vec{x}^\sigma_i[j]|]<\infty$, or equivalently 
$\E^\sigma[\vec{x}^\sigma_i[j]]\in \Rset$.
We proceed 
by induction on $i$. 
Since $\vec{x}^\sigma_i(\run)=\vecinit$ for each $\run$, the base 
case holds trivially. 
Now assume that $\E^{\sigma}[\vec{x}^\sigma_i[j]]\in \Rset$ for each applicable 
$j$. For 
all $k\geq i$, $1\leq j \leq |\pvars|$ we 
have
$\E^{\sigma}[\vec{x}_{k}^\sigma[j]] = 
\sum_{\loc\in\locs}\E^{\sigma}[\vec{x}_{k}^\sigma[j]\cdot\vec{1}_{\loc_i^\sigma
 = \loc}].
$
In particular, for each $\loc$ we have 
$\E^{\sigma}[|\vec{x}_{i}^\sigma[j]|\cdot\vec{1}_{\loc_i^\sigma
 = \loc}]<\infty$ and it suffices to prove that for each $\loc$ it holds 
 $\E^{\sigma}[|\vec{x}_{i+1}^\sigma[j]|\cdot\vec{1}_{\loc_i^\sigma
  = \loc}]<\infty$ for all $j$. So let $\loc$ be arbitrary. 
   Now if $\loc$ is probabilistic or 
  non-deterministic, then no transition outgoing from $\loc$ changes the values 
  of the variables, so for each $\run$ such 
    that $\loc^{\sigma}_{i}(\run)=\loc$ and each $1\leq j \leq |\pvars|$ 
  it holds 
  $\E^{\sigma}[\vec{x}_{i+1}^\sigma[j]\cdot\vec{1}_{\loc_i^{\sigma}=\loc}] =
  \E^\sigma[\vec{x}_{i}^\sigma[j]\cdot\vec{1}_{\loc_i^\sigma=\loc}])\in\Rset$.
  If $\loc$ is deterministic, then using similar reasoning as above it 
  suffices 
  to prove that for each $j$ and each transition $\tau$ outgoing from $\loc$ it 
  holds
  $ \E^{\sigma}[\vec{x}_{i+1}^\sigma[j]\cdot \vec{1}_{\loc_i^{\sigma}=\loc 
  \cap 
  \vec{x}_i^\sigma\models G(\tau)}]\in\Rset $. So fix some $j$ and $\tau$ and 
  let $k$ be the index of the variable modified by $\tau$. If $j\neq k$, then $ 
  \E^{\sigma}[\vec{x}_{i+1}^\sigma[j]\cdot \vec{1}_{\loc_i^{\sigma}=\loc \cap 
    \vec{x}_i^\sigma\models G(\tau)}]=\E^{\sigma}[\vec{x}_{i}^\sigma[j]\cdot 
    \vec{1}_{\loc_i^{\sigma}=\loc \cap 
        \vec{x}_i^\sigma\models G(\tau)}]\in\Rset$. If $j=k$, then we 
        distinguish three cases. If the update element of $\tau$ is a linear 
        function given by an expression $b+\sum_{\ell=1}^{|\pvars|}a_\ell\cdot 
        x_\ell$, then 
\begin{align}&\nonumber \E^{\sigma}[\vec{x}_{i+1}^\sigma[j]\cdot 
        \vec{1}_{\loc_i^{\sigma}=\loc \cap
            \vec{x}_i^\sigma\models G(\tau)}] \\ \nonumber
            =&\,\E^{\sigma}[(b+\sum_{\ell=1}^{|\pvars|}a_\ell\cdot 
                    \vec{x}_i^\sigma[\ell])\cdot 
            \vec{1}_{\loc_i^{\sigma}=\loc \cap 
                \vec{x}_i^\sigma\models G(\tau)}]  \\ \nonumber
=&\,\E^{\sigma}[(b\cdot 
\vec{1}_{\loc_i^{\sigma}=\loc \cap 
\vec{x}_i^\sigma\models G(\tau)}+\sum_{\ell=1}^{|\pvars|}a_\ell\cdot 
\vec{x}_i^\sigma[\ell]\cdot 
\vec{1}_{\loc_i^{\sigma}=\loc \cap 
\vec{x}_i^\sigma\models G(\tau)})]\\
=&\, \E^{\sigma}[(b\cdot 
\vec{1}_{\loc_i^{\sigma}=\loc \cap 
\vec{x}_i^\sigma\models G(\tau)}]+\sum_{\ell=1}^{|\pvars|}a_\ell\cdot 
\E^\sigma[\vec{x}_i^\sigma[\ell]\cdot 
\vec{1}_{\loc_i^{\sigma}=\loc \cap 
\vec{x}_i^\sigma\models G(\tau)})]         \nonumber        
\end{align}
The first expectation on the last line is clearly finite and the other 
$|\pvars|$ expectations are finite by induction hypothesis. Hence, also the 
expression in the first line represents a finite number.
If the update element of $\tau$ is an integrable distribution $d$, 
then $\E^{\sigma}[\vec{x}_{i+1}^\sigma[j]\cdot 
                    \vec{1}_{\loc_i^{\sigma}=\loc \cap
                        \vec{x}_i^\sigma\models 
                        G(\tau)}] = e\in\Rset$, where $e$ is the finite 
                        expectation of $d$. Similar 
                        argument covers the case when the assignment is 
                        non-deterministic, as $\sigma$ can use only integrable 
                        distributions.

It remains to prove that $\{X_i\}_{i=0}^{\infty}$ is an $\eps$-ranking 
(resp. 
$\eps$-repulsing) 
supermartingale for $\treach{C \cup \neg \inv}$. 
We show the 
argument for the ranking case, the repulsing case is similar. 
The implication 
$\treach{C\cup \neg\inv}(\run)>j \Rightarrow X_j(\run) \geq 0$ holds thanks to 
the first requirement in the definition of a linear ranking supermartingale. It 
thus suffices to prove that $\{X_i\}_{i=0}^{\infty}$ is $\eps$-decreasing until 
$\treach{C\cup \neg \inv}$. Define a random variable $Z$ such that
$
Z(\run) = 
\preexp{\lem}(\cfg{\sigma}{i}(\run))$  if $\loc^\sigma_i(\run)$ is stochastic 
or 
it is deterministic and the update in the transition enabled in 
$\cfg{\sigma}{i}$ is deterministic or stochastic; 
$Z(\run)=\sum_{\tau=(\loc,j,u,\loc')\in\Out(\loc^\sigma_i(\run))}{\sigmat(\run_{\leq
 i})(\tau)\cdot
 \lem(\loc',\vec{x}_i^\sigma(\run))}$ if $\loc_i^\sigma(\run)$ is 
 non-deterministic (here $\run_{\leq i}$ is the prefix of $\run$ of length 
 $i$); and $Z(\run)=\lem(\loc',\vec{x}_i^\sigma(j/a))$ if $\loc_i^\sigma(\run)$ 
 is deterministic $\vec{x}_i^\sigma(\run)\models G(\tau)$ s.t. 
 $\tau=(\loc,j,u,\loc')$, $u$ is a set, and the distribution 
 $\sigmaa(\run_{\leq i})$ has expected value $a$.
 
It is 
straightforward to check that $Z(\run)\leq 
\preexp{\lem}(\cfg{\sigma}{i}(\run))$ 
for all $\run$ 
and that for all 
sets $A\in \natfilt_i$ it holds 
$\E^\sigma[Z\cdot \vec{1}_A] = 
\E^\sigma[\lem(\cfg{\sigma}{i+1})\cdot\vec{1}_{A}]$. 
Hence, $\E^\sigma[\lem(\cfg{\sigma}{i+1})|\natfilt_i]\leq 
\preexp{\lem}(\cfg{\sigma}{i})$. From the definition of a linear ranking 
supermartingale it follows that for all $\run$ with $\treach{C\cup \neg 
\inv}(\run)>i$ we have $\preexp{\lem}(\cfg{\sigma}{i}(\run))\leq 
\lem(\cfg{\sigma}{i}(\run))-\eps$. 

But at the same time we 
can 
check that 
$\E^{\sigma}[X_{i+1}\cdot\vec{1}_A]=\E^{\sigma}[(Z\cdot 
\vec{1}_{\treach{C\cup\neg\inv}>i}+X_i\cdot\vec{1}_{\treach{C\cup\neg\inv}\leq 
i} )\cdot\vec{1}_{A}]$.
 Hence, 
\begin{align*}\E^{\sigma}[X_{i+1}|\natfilt_i] &= 
Z\cdot 
\vec{1}_{\treach{C\cup\neg\inv}>i}+X_i\cdot\vec{1}_{\treach{C\cup\neg\inv}\leq 
i} \\
 &\leq\preexp{\lem}(\cfg{\sigma}{i})\cdot 
 \vec{1}_{\treach{C\cup\neg\inv}>i}+X_i\cdot\vec{1}_{\treach{C\cup\neg\inv}\leq
  i} \\
&\leq (\lem(\cfg{\sigma}{i})-\eps)\cdot 
 \vec{1}_{\treach{C\cup\neg\inv}>i}+X_i\cdot\vec{1}_{\treach{C\cup\neg\inv}\leq
i} \\
&=\lem(\cfg{\sigma}{i})\cdot 
 \vec{1}_{\treach{C\cup\neg\inv}>i}+X_i\cdot\vec{1}_{\treach{C\cup\neg\inv}\leq
i} \\&-\eps\cdot 
 \vec{1}_{\treach{C\cup\neg\inv}>i}\\
 &=X_i\cdot 
  \vec{1}_{\treach{C\cup\neg\inv}>i}+X_i\cdot\vec{1}_{\treach{C\cup\neg\inv}\leq
 i} ]\\&-\eps\cdot
  \vec{1}_{\treach{C\cup\neg\inv}>i} = X_i -\eps\cdot
    \vec{1}_{\treach{C\cup\neg\inv}>i},
\end{align*}
as required. If $\inv$ is a pure invariant, then $\treach{\neg I} = \infty$ for 
all runs $\rho$, so $\treach{C\cup \neg I} = \treach{C}$.

The conservation of a $c$-boundedness property is straightforward.
The desired result follows.
\end{proof}

\subsection{Proof of Theorem~\ref{thm:quantitative-termination}}

We prove the missing part in the proof of 
Theorem~\ref{thm:quantitative-termination}, i.e. the following technical 
theorem, which forms a slight generalization of Theorem~\ref{thm:old-ranking}.

\smallskip
\begin{theorem}\label{thm:old-ranking2} %
Let $P$ be an \APP{}, $\sigma$ a scheduler, and 
$(\Omega,\natfilt,\probm^\sigma)$ the corresponding probability space. 
Further, 
let $C$ be a set of 
configurations of $\pCFG_P$ such 
that there exist an $\eps>0$ and an $\eps$-ranking supermartingale  
$\{X_i\}_{i=0}^{\infty}$
for $\treach{C}$. Then 
\begin{compactenum}
\item
$\probm^{\sigma}(\treach{C}<\infty)=1$, i.e. the set $C$ is reached 
almost-surely
\item
$\E^{\sigma}[\treach{C}]<\E^{\sigma}[X_0]/\eps$.
\end{compactenum}
\end{theorem}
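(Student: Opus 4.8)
The plan is to derive both claims from a single supermartingale estimate applied to a \emph{stopped, compensated} version of $\{X_i\}_{i=0}^{\infty}$. Write $T=\treach{C}$ and let $\tilde X_i = X_{\min(T,i)}$ be the process stopped at $T$. First I would check that $\{\tilde X_i\}_{i=0}^{\infty}$ is again $\eps$-decreasing until $T$: since $T$ is a stopping time we have $\{T>i\}\in\natfilt_i$, and on this event $\tilde X_{i+1}-\tilde X_i = X_{i+1}-X_i$, so the $\eps$-decreasing property of the original process gives $\E^\sigma[\tilde X_{i+1}\mid\natfilt_i]\le \tilde X_i-\eps\cdot\vec{1}_{T>i}$, while on $\{T\le i\}$ the increment is $0$. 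Taking unconditional expectations yields $\E^\sigma[\tilde X_{i+1}]\le \E^\sigma[\tilde X_i]-\eps\cdot\probm^\sigma(T>i)$, and integrability $\E^\sigma[|\tilde X_i|]<\infty$ is inherited from $\{X_i\}_{i=0}^{\infty}$. (Equivalently, one verifies directly that $Y_i=\tilde X_i+\eps\cdot\min(T,i)$ is an honest supermartingale, the deterministic increment of the compensator exactly cancelling the forced $\eps$-decrease.)

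Telescoping this inequality from $0$ to $n-1$ and using the tail-sum identity $\sum_{i=0}^{n-1}\probm^\sigma(T>i)=\E^\sigma[\min(T,n)]$ gives
\[
\E^\sigma[\tilde X_n]\;\le\;\E^\sigma[X_0]-\eps\cdot\E^\sigma[\min(T,n)].
\]
The crucial next step is the lower bound $\tilde X_n\ge 0$: on $\{T>n\}$ this is immediate from the defining property $T>n\Rightarrow X_n\ge 0$ of an $\eps$-ranking supermartingale, and the value at the hitting instant supplies the rest. Hence $\eps\cdot\E^\sigma[\min(T,n)]\le\E^\sigma[X_0]$ for every $n$.

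Finally, $\min(T,n)\uparrow T$ pointwise as $n\to\infty$, so the monotone convergence theorem gives $\eps\cdot\E^\sigma[T]\le \E^\sigma[X_0]$, i.e. $\E^\sigma[\treach{C}]\le \E^\sigma[X_0]/\eps$; the strict inequality of claim~2 is then recovered by a slightly sharper accounting of the decrease at the final step (equality cannot hold without the process having effectively stopped at time $0$). In particular $\E^\sigma[\treach{C}]<\infty$, and a non-negative random variable with finite expectation is finite almost surely, which yields $\probm^\sigma(\treach{C}<\infty)=1$ and hence claim~1; note that this lets claim~1 be read off from claim~2 rather than needing a separate Borel--Cantelli-style argument.

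The step I expect to be the main obstacle is precisely the lower bound $\E^\sigma[\tilde X_n]\ge 0$. The ranking definition guarantees non-negativity of $X_j$ only \emph{strictly} before $T$ (the condition is $T>j\Rightarrow X_j\ge 0$), so the value $X_T$ at the moment $C$ is first reached is not controlled by the definition alone. Resolving this is the heart of the argument: either one uses that $X_T$ is non-negative for the concrete processes the theorem will be applied to (those produced by Lemma~\ref{thm:pure-supermart}), or one bounds it from below using the $c$-bounded-differences structure, which gives $X_T\ge X_{T-1}-c\ge -c$, and then absorbs the bounded slack into the estimate. The remaining technical point is the justification of the limit interchange via monotone convergence, which relies on the integrability condition $\E^\sigma[|X_i|]<\infty$ built into the supermartingale definition.
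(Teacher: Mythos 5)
Your telescoping skeleton (stop at $T=\treach{C}$, compensate by $\eps\cdot\min(T,i)$, pass to the limit by monotone convergence) is the standard argument and is also what ultimately powers the paper's proof, which delegates that part to the cited result of \cite{CFNH16:prob-termination}. But the step you flagged as the main obstacle is a genuine gap, not a technicality, and neither of your two proposed repairs closes it. Repair (a) is false: the process produced by Lemma~\ref{thm:pure-supermart} takes, at the hitting time, the value of $\lem$ at the configuration that was just reached, and Definition~\ref{def:lrsm} places no lower bound on $\lem$ inside $\confset$ --- in the paper's own running example (Example~\ref{ex:lrsm}) one has $\lem(\loc_2,x)=x-\tfrac{1}{4}$, which is negative at the terminal configuration $x=0$; besides, restricting to those concrete processes would prove a weaker statement than the theorem, which quantifies over arbitrary $\eps$-RSMs. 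Repair (b) imports a $c$-bounded-differences hypothesis that the theorem does not have and that its applications (e.g.\ inside the proof of Theorem~\ref{thm:quantitative-termination}) do not supply. The gap is also quantitatively fatal rather than cosmetic: without control on $X_T$ your telescoped inequality only gives $\eps\,\E^\sigma[\min(T,n)]\le\E^\sigma[X_0]-\E^\sigma[\tilde{X}_n]$, and $\E^\sigma[\tilde{X}_n]$ can be arbitrarily negative. Indeed, take $X_0=0$ and let the process enter $C$ at time $1$ with value $-M\le-\eps$ (constant afterwards); this satisfies Definition~\ref{def:ranking} verbatim, since non-negativity is required only strictly before $T$, yet $\E^\sigma[T]=1$ while $\E^\sigma[X_0]/\eps=0$. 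So no argument from the stated hypotheses alone can recover the claimed bound: some control of the RSM's value \emph{upon entering} $C$ must come from somewhere.

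The paper's route differs at exactly this point, and the difference is the one idea your proposal lacks: instead of stopping the process at its uncontrolled hitting value $X_T$, the paper \emph{overwrites} it. Lemma~\ref{lem:stopped-ranking} defines $\hat{X}_i=X_i$ on $\{T>i\}$ and $\hat{X}_i=-\eps$ on $\{T\le i\}$, asserts that $\{\hat{X}_i\}_{i=0}^{\infty}$ is again an $\eps$-RSM for $T$, and then invokes the cited theorem of \cite{CFNH16:prob-termination}, which applies to RSMs bounded below by a constant $K$ --- a hypothesis the overwritten process satisfies by construction, since $\hat{X}_i\ge-\eps$ always. The choice of the constant $-\eps$ is what makes the non-negativity-before-$T$ hypothesis do real work: at the step where $C$ is entered, the required decrease reduces to $-\eps\le X_i-\eps$, i.e.\ precisely $X_i\ge0$ on $\{T>i\}$. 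That said, your instinct about where the difficulty lives is sound, and you should not feel the paper fully dissolves it: the overwriting lemma must still verify the supermartingale inequality across the event $\{T=i+1\}$, where replacing $X_{i+1}$ by $-\eps$ \emph{raises} the conditional expectation whenever $X_{i+1}<-\eps$ there, and the counterexample above breaks the lemma (and claim~2 of the theorem) as literally stated. A fully rigorous version needs the value at entry into $C$ to be bounded below --- which is how the RSM notion is described in the paper's own remark following Example~\ref{ex:rsm}, and which is the assumption under which both your telescoping argument and the paper's citation-based argument go through.
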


There are three key differences between the formulation of 
Theorem~\ref{thm:old-ranking2} and the formulation of Theorem 1 in 
\cite{CFNH16:prob-termination}.
\begin{enumerate}
\item
In~\cite{CFNH16:prob-termination} there is a proof for the case when $C$ is the 
set of all terminal configurations. It is straightforward to adapt the proof 
for a general set of configurations.
\item In~\cite{CFNH16:prob-termination} there is a slightly different 
definition of an $\eps$-ranking supermartingale: it is required to decrease 
until its value becomes non-negative, while we require it to decrease until $C$ 
is reached. It is again straightforward to adapt the proof for our case.
\item
Bearing the previous two points in mind, in~\cite{,CFNH16:prob-termination} 
there is effectively a proof of Theorem~\ref{thm:old-ranking2} under an 
additional assumption that  
the 
$\eps$-ranking supermartingale $\{X_i\}_{i=0}^{\infty}$  for $\stime$ 
satisfies an additional condition that 
 there exists $K<0$ such that with probability $1$ it holds $X_i > 
K$. We show how to get rid of this assumption using the stopped supermartingale 
property.
\end{enumerate}

We first prove a variant of the stopped supermartingale property apt for our 
purposes.

\smallskip
\begin{lemma}
\label{lem:stopped-ranking}
Let $(\Omega,\mathcal{F},\probm)$ be a probability space, and let 
$\{X_i\}_{i=0}^{\infty}$ be an $\eps$-ranking supermartingale for some 
stopping 
time $\stime$. Consider the stochastic process $\{\hat{X}_i\}_{i=0}^{\infty}$ 
given by
\[
\hat{X}_i(\omega) = \begin{cases}
X_i(\omega) & \text{if $\stime(\omega)>i$} \\
-\eps &\text{otherwise}.
\end{cases}
\]
Then $\{\hat{X}_i\}_{i=0}^{\infty}$ is again an $\eps$-ranking supermartingale 
for $\stime$. Moreover, if $\{X_i\}_{i=0}^{\infty}$ is $c$-bounded for some 
$c$, then $\{\hat{X}_i\}_{i=0}^{\infty}$ is $(c+\eps)$-bounded.
\end{lemma}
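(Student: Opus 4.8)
The plan is to check that $\{\hat{X}_i\}_{i=0}^{\infty}$ meets all clauses of an $\eps$-ranking supermartingale from Definition~\ref{def:ranking} with respect to the same filtration $\{\mathcal{F}_i\}_{i=0}^{\infty}$ for which $\{X_i\}_{i=0}^{\infty}$ is a supermartingale and $\stime$ a stopping time, and then handle the bounded-differences addendum. First I would dispatch the routine clauses. Since $\stime$ is a stopping time, $\{\stime > i\}=\Omega\setminus\{\stime\le i\}\in\mathcal{F}_i$, so writing $\hat{X}_i = X_i\cdot\vec{1}_{\stime > i} - \eps\cdot\vec{1}_{\stime\le i}$ displays $\hat{X}_i$ as an $\mathcal{F}_i$-measurable function, giving adaptedness; integrability is immediate from $|\hat{X}_i|\le |X_i|+\eps$ together with $\E[|X_i|]<\infty$. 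The non-negativity clause also holds directly: if $\stime(\omega)>i$ then $\hat{X}_i(\omega)=X_i(\omega)\ge 0$ by the corresponding property of $\{X_i\}_{i=0}^{\infty}$.

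The substance is the $\eps$-decreasing inequality $\E[\hat{X}_{i+1}\mid\mathcal{F}_i]\le \hat{X}_i-\eps\cdot\vec{1}_{\stime > i}$, which I would establish by conditioning on the two $\mathcal{F}_i$-measurable events $\{\stime\le i\}$ and $\{\stime>i\}$. On $\{\stime\le i\}$ we also have $\stime\le i+1$, so $\hat{X}_{i+1}=\hat{X}_i=-\eps$ while $\vec{1}_{\stime>i}=0$, and the inequality holds with equality. On $\{\stime>i\}$ we have $\hat{X}_i=X_i$, so the goal reduces to $\E[\hat{X}_{i+1}\mid\mathcal{F}_i]\le X_i-\eps$; here I would invoke the $\eps$-decreasing property of the original process, $\E[X_{i+1}\mid\mathcal{F}_i]\le X_i-\eps$ on $\{\stime>i\}$, and then try to reduce the claim to the pointwise domination $\hat{X}_{i+1}\le X_{i+1}$, which is obvious on $\{\stime>i+1\}$ where the two processes coincide. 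The only nontrivial content is concentrated at the stopping instant $\{\stime=i+1\}$, where $\hat{X}_{i+1}$ equals the constant $-\eps$ in place of $X_{i+1}$.

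The main obstacle is precisely this stopping step. Isolating it, the difference between the two conditional expectations is the correction term $\E\big[(X_{i+1}+\eps)\cdot\vec{1}_{\stime=i+1}\mid\mathcal{F}_i\big]$, and the inequality goes through exactly when this term is non-negative, i.e. when the ranking supermartingale attains value at least $-\eps$ upon first entering the stopped region. This is the one point where the concrete structure of the process must be used rather than the abstract axioms, and it is where I expect the real care to be needed. For the bounded-differences addendum I would argue similarly by cases: off the stopping step $\hat{X}_{i+1}-\hat{X}_i$ coincides with $X_{i+1}-X_i$ and is therefore bounded by $c$, after $\stime$ the increment is $0$, and exactly at the stopping step the increment is $X_i+\eps$ (as $X_i\ge 0$), so that the single additive contribution of magnitude $\eps$ from switching to the constant $-\eps$ accounts for the passage from $c$ to $c+\eps$, again using that $X$ enters the stopped region at a controlled (small, non-positive) value.
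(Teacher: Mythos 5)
Your routine clauses (adaptedness, integrability, non-negativity before $\stime$) are handled exactly as in the paper, and you have isolated the genuine crux: on $\{\stime>i\}$ the desired inequality differs from the one inherited from $\{X_i\}_{i=0}^{\infty}$ by the correction term $\E[(X_{i+1}+\eps)\cdot\vec{1}_{\stime=i+1}\mid\mathcal{F}_i]$, so everything reduces to the process being at least $-\eps$ (in conditional mean) on the slice $\{\stime=i+1\}$. But your proposal stops there, and this missing step cannot be supplied: Definition~\ref{def:ranking} constrains $X_j$ only \emph{strictly before} $\stime$ (the implication $\stime(\omega)>j\Rightarrow X_j(\omega)\geq 0$) and places no lower bound whatsoever on the value at the stopping instant. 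In fact the lemma as stated is false. Take $\Omega=\{a,b\}$ with $\probm(a)=\probm(b)=\frac{1}{2}$, $\mathcal{F}_0$ trivial, $\mathcal{F}_i=2^{\Omega}$ for $i\geq 1$, $\eps=1$, $\stime(a)=1$, $\stime(b)=2$, and $X_0\equiv 0$, $X_1(a)=-100$, $X_1(b)=98$, $X_i(a)=-100$, $X_i(b)=0$ for $i\geq 2$. All clauses of a $1$-ranking supermartingale for $\stime$ hold (the decreasing condition at $i=0$ holds with equality, $\E[X_1\mid\mathcal{F}_0]=-1=X_0-\eps$, and non-negativity is required only of $X_0$ and of $X_1(b)$), yet $\hat{X}_1(a)=-1$ and $\hat{X}_1(b)=98$ give $\E[\hat{X}_1\mid\mathcal{F}_0]=48.5>0=\hat{X}_0$, so $\{\hat{X}_i\}_{i=0}^{\infty}$ is not even a supermartingale. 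A variant with $c$-bounded differences ($c>\eps$, e.g. $X_1(a)=-11$, $X_1(b)=9$, $c=12$) fails the same way, so the bounded-differences addendum, and your closing appeal to $X$ ``entering the stopped region at a controlled value,'' are equally unavailable from the stated hypotheses.

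Comparing with the paper: its proof commits precisely the error you were wary of. Equation~\eqref{eq:cond-stopped} asserts that $\E[\hat{X}_{i+1}\mid\mathcal{F}_i]$ coincides with $\E[X_{i+1}\mid\mathcal{F}_i]$ on $\{\stime>i\}$, and its verification via~\eqref{eq:cond-exp} replaces $\hat{X}_{i+1}$ by $X_{i+1}$ on all of $A\cap B$ (the displayed algebra even slips from index $i+1$ to index $i$, so that what is literally checked is the vacuous identity $\E[\hat{X}_i\mid\mathcal{F}_i]=\hat{X}_i$); this ignores exactly the slice $\{\stime=i+1\}\subseteq\{\stime>i\}$ where $\hat{X}_{i+1}=-\eps\neq X_{i+1}$, which is where the counterexample above puts its mass. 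So the obstacle you flagged is not a defect of your write-up that more care would remove; it is a genuine gap in the lemma itself, which then propagates to the paper's proof of Theorem~\ref{thm:old-ranking2}. Your argument does become a correct proof under an additional hypothesis of the form $-\eps\leq X_j\leq 0$ whenever $\stime(\omega)=j$: the lower bound makes your correction term non-negative, yielding the $\eps$-ranking property, and the upper bound combined with $c$-boundedness of $\{X_i\}_{i=0}^{\infty}$ gives $X_i\leq c$ on $\{\stime=i+1\}$, hence a jump of at most $c+\eps$ at the stopping step. Some such strengthening of Definition~\ref{def:ranking}, tying the value at the stopping time to a bounded window, is what the lemma actually needs.
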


\begin{proof}
We immediately have that $\{\hat{X}_i\}_{i=0}^{\infty}$ is adapted to the 
natural filtration. Next, since each $|X_i|$, $|\hat{X}_i|$ are non-negative 
random variables and $|\hat{X}_i(\omega)|\leq |X_i(\omega)| + \eps$ for all 
$\omega$, so it follows that $\E^{\sigma}[|\hat{X}_i|]\leq \E[|X_i|]+\eps 
<\infty + \eps < \infty$.
Finally, we need to show that for each $i\geq 0$ the following holds for 
$\omega$ with probability 1: $\E[\hat{X}_{i+1}|\mathcal{F}_i](\omega) \leq 
\hat{X}_i(\run) - \eps\cdot\vec{1}_{\stime > i}(\omega)$.

We start by proving that for each $i$ it holds with probability 1 that 
\begin{equation}\label{eq:cond-stopped} 
\E[\hat{X}_{i+1}|\mathcal{F}_i](\omega)= \begin{cases} 
\E[X_{i+1}|\mathcal{F}_i] & \text{if $\stime(\omega)>i$} \\ -\eps 
&\text{otherwise}.\end{cases} \end{equation}

Since we know that conditional expectation is almost surely uniquely 
defined~\cite{Billingsley:book}, it is sufficient to prove that the function 
defined 
via~\eqref{eq:cond-stopped} satisfies~\eqref{eq:cond-exp}. Let $A\in 
\mathcal{F}_i$ be arbitrary. Denote $B=\{\omega\in\Omega\mid 
\stime(\omega)>i\}$ and $C=\Omega\setminus B$. Plugging $\hat{X}_{i+1}$ and 
$\mathcal{F}_i$ to the left-hand side of~\eqref{eq:cond-exp} we get 
\begin{align*}
\mathit{LHS}&=\E[\hat{X}_i\cdot \vec{1}_A] \\&= \E[\hat{X}_i\cdot 
\vec{1}_{A\cap B}] + \E[\hat{X}_i\cdot \vec{1}_{A\cap C}]\\
&= \E[X_i\cdot \vec{1}_{A\cap B}] + \E[-\eps\cdot \vec{1}_{A\cap C}],
\end{align*}
where the second equality follows from $A=(A\cap B)\cup (A\cap C)$, from the 
fact that $B$ and $C$ are disjoint, and from linearity of expectation, and the 
third equality follows from the fact that $\hat{X}_i\cdot \vec{1}_{A\cap B} 
(\omega)= X_i$ for all $\omega\in A\cap B$ and $0$ for all other $\omega$'s, 
while $\hat{X}_i\cdot \vec{1}_{A\cap C} (\omega)= -\eps$ for all $\omega\in 
A\cap C$ and $0$ for other $\omega$'s. Now let us plug the function $Z$ 
defined 
by~\eqref{eq:cond-stopped} to the right-hand side of~\eqref{eq:cond-exp}. We 
get 
\begin{align*}
\mathit{RHS}&=\E[Z\cdot \vec{1}_A] \\&= \E[Z\cdot \vec{1}_{A\cap B}] + 
\E[Z\cdot \vec{1}_{A\cap C}]\\
&= \E[\E[X_i|\mathcal{F}_i]\cdot \vec{1}_{A\cap B}] + \E[-\eps\cdot 
\vec{1}_{A\cap C}]\\
&= \E[X_i\cdot \vec{1}_{A\cap B}] + \E[-\eps\cdot \vec{1}_{A\cap C}],
\end{align*}
where the second equality holds by same reasoning as above, the third equality 
holds as by definition $Z(\omega)$ is either $\E[X_i|\mathcal{F}_i](\omega)$ 
or 
$-\eps$ depending on whether $\omega \in B$ or $\omega \in C$, and the last 
equality follows by definition of conditional expectation of $X_i$. 
Hence~$\mathit{LHS}=\mathit{RHS}$ and~$Z$ is indeed a conditional expectation 
of $\hat{X}_i$ given $\mathcal{F}_i$.

Now we prove that $\{\hat{X}_i\}_{i=0}^{\infty}$ 
satisfies~\eqref{eq:ranking-sup}. Let $\omega$ and $i$ be arbitrary. If 
$\stime(\omega)>i+1$ then $\hat{X}_i(\omega)=X_i(\omega)$, 
$\E[\hat{X}_{i+1}|\mathcal{F}_i](\omega)=\E[X_{i+1}|\mathcal{F}_i](\omega)$ 
(as 
proven above), and since $\{X_i\}_{i=0}^{\infty}$ is an $\eps$-ranking 
supermartingale for $\stime$, it holds 
$\E[\hat{X}_{i+1}|\mathcal{F}_i](\omega) 
\leq \hat{X}_i - \eps\cdot\vec{1}_{\stime > i}$. If $\stime(\omega)\leq i$, 
then $\E[\hat{X}_{i+1}|\mathcal{F}_i](\omega) = -\eps $ and $ 
\hat{X}_i(\omega) 
- \eps\cdot\vec{1}_{\stime > i} = - \eps - \eps\cdot 0 = -\eps$, 
so~\eqref{eq:ranking-sup} holds. If $\stime(\omega)=i+1$, then 
$\E[\hat{X}_{i+1}|\mathcal{F}_i](\omega)=-\eps$, as proved above, and 
$\hat{X}_i(\omega) - \eps\cdot\vec{1}_{\stime > i}= X_i(\omega) - \eps\cdot 1  
\geq -\eps$, since $X_i(\omega)\geq 0$ as $\{X_i\}_{i=0}^{\infty}$ is 
$\eps$-ranking and $\stime(\omega)>i$. Hence~\eqref{eq:ranking-sup} holds also 
in this case.

The boundedness property is straightforward. The desired result follows.
\end{proof}

Now we prove Theorem~\ref{thm:old-ranking2}.

\smallskip
\begin{reftheorem}{thm:old-ranking2} %
Let $P$ be an \APP{}, $\sigma$ a scheduler, and 
$(\Omega,\natfilt,\probm^\sigma)$ the corresponding probability space. 
Further, 
let $C$ be a set of 
configurations of $\pCFG_P$ such 
that there exist an $\eps>0$ and an $\eps$-ranking supermartingale  
$\{X_i\}_{i=0}^{\infty}$
for $\treach{C}$. Then 
\begin{compactenum}
\item
$\probm^{\sigma}(\treach{C}<\infty)=1$, i.e. the set $C$ is reached 
almost-surely
\item
$\E^{\sigma}[\treach{C}]<\E^{\sigma}[X_0]/\eps$.
\end{compactenum}
\end{reftheorem}
\begin{proof}  %
In~\cite[Theorem 1]{,CFNH16:prob-termination} it was proved that if 
there exists an 
$\eps$-ranking supermartingale $\{X_i\}_{i=0}^{\infty}$  for $\stime$ 
satisfying an additional condition that 
 there exists $K<0$ such that with probability $1$ it holds $X_i > 
K$ for all $i$, then the following holds: 
$\probm^{\sigma}(\treach{C}<\infty)=1$, i.e. the set $C$ is reached 
almost-surely, and $\E^{\sigma}[\treach{C}]<(\E^{\sigma}[X_0]-K)/\eps$. To 
obtain first two items of Theorem~\ref{thm:old-ranking} we apply these results 
on the $\eps$-ranking supermartingale~$\{\hat{X}_i\}_{i=0}^{\infty}$ defined 
as 
in Lemma~\ref{lem:stopped-ranking}. 
\end{proof}

\subsection{Proof of Theorem~\ref{thm:non-termination}}

\begin{reftheorem}{thm:non-termination}
Let $C$ be a set of configurations of an \APP{} $P$. Suppose that there 
exist $\eps>0$, $c>0$ and a linear $\eps$-repulsing supermartingale $\lem$ for 
$C$ supported by some pure invariant $\inv$ such that $\lem$ has 
$c$-bounded differences. If $\lem(\locinit,\vecinit)<0,$ then under each 
scheduler $\sigma$ it 
holds
\begin{equation*}
\label{eq:nonterm2}
\probm^{\sigma}(\treach{C}<\infty) <1.
\end{equation*}
\end{reftheorem}
\begin{proof}
Let $\gamma=e^{-\frac{\eps^2}{2(c+\eps)^2}}$ and 
$A(\loc,\vec{x})=\lceil|\lem(\loc,\vec{x})|/c\rceil$. If it holds 
$\gamma^{A(\locinit,\vecinit)}/(1-\gamma)<1$, 
then~\eqref{eq:nonterm2} follows directly from~\eqref{eq:quantitative-bound} 
(in 
Theorem~\ref{thm:repulsing-use-main}). Otherwise note that there is positive 
$z\in 
\Rset$ such that $\gamma^{z}/(1-\gamma)<1$. For each configuration 
$(\loc,\vec{x})$ with $\lem(\loc,\vec{x})\leq 0$ and $A(\loc,\vec{x})\geq z$ 
the probability of reaching 
$Z$ 
when starting in $(\loc,\vec{x})$ is smaller than 1 under any scheduler (by 
applying the above reasoning to a program obtained from $P$ by changing the 
initial configuration to $(\loc,\vec{x})$). It thus suffices to prove that 
under any scheduler $\sigma$ a configuration $(\loc,\vec{x})$ with 
$\lem(\loc,\vec{x})\leq -z\cdot c$ 
is 
reached 
with positive probability from $(\locinit,\vecinit)$.

Let 
$\{X_i\}_{i=0}^{\infty}$ be the $\eps$-repulsing supermartingale obtained from 
$\lem$ using Lemma~\ref{thm:rep-supermart-connect} and let $\{\overline{{X}_i} 
\}_{i=0}^{\infty}$ be the translation of $\{X_i\}_{i=0}^{\infty}$ by $2zc$: we 
put $\overline{X}_i = X_i + 2zc$ for each $i\in\Nset_0$. It is straightforward 
to 
check that under each scheduler $\sigma$ the process $\{\overline{{X}_i} 
\}_{i=0}^{\infty}$ is an $\eps$-ranking supermartingale (note that this is 
ranking 
supermartingale and not repulsing supermartingale) for stopping time 
$\treach{D}$, where $D = C \cup \{(\loc,\vec{x})\mid \lem(\loc,\vec{x})\leq 
-2 z c\}.$ From Theorem~\ref{thm:old-ranking} it follows that 
$\E^{\sigma}[\treach{D}]<\infty$. Moreover, $\{\overline{{X}_i} 
\}_{i=0}^{\infty}$ still has $c$-bounded differences. Thus, we can apply the 
optional stopping theorem to get
\begin{equation}
\label{eq:nonterm-ost-use}
2zc > \E^\sigma[\overline{X}_0] \geq \E^\sigma[\overline{X}_{\treach{D}}].
\end{equation}
Now for each run $\run$ such that $\treach{D}(\run)<\infty$ (as shown above, 
the probability of such runs is $1$) we have either 
$-c\geq \overline{X}_{\treach{D}}(\run)\leq 0$ 
(in case that $\lem(\cfg{\sigma}{\treach{D}}(\run))\leq -2zc$) or 
$\overline{X}_{\treach{C}}\geq 2zc$ (when $\cfg{\sigma}{\treach{D}}\in C$). 
It follows that $\E^\sigma[\overline{X}_0] 
\geq\E^\sigma[\overline{X}_{\treach{D}}] \geq -p\cdot c + 
(1-p)\cdot 2zc$, where $p$ is the probability that $\cfg{\sigma}{\treach{D}}\in 
C$. Hence, $p\geq \frac{2zc-\E^\sigma[\overline{X}_0]}{c+2zc}>0 $.

Note that the probability of reaching configuration with ``sufficiently high'' 
$\lem$-value is not only positive, but actually bounded away from zero by a 
number which depends only on $c$, $z$ and $\lem(\locinit,\vecinit)$. This will 
be useful later.
\end{proof}

\bibliographystyleadd{abbrvnat}
\bibliographyadd{diss}

\end{document}